\newcommand{\tracerx}{\mbox{\sc tracer-x}}
\newcommand{\ignore}[1]{}
\newcommand{\stuff}[1]{
        \begin{minipage}{6in}
        {\tt \samepage
        \begin{tabbing}
        \hspace{3mm} \= \hspace{3mm} \= \hspace{3mm} \= \hspace{3mm} \= \hspace{3mm} \= \hspace{3mm} \=
\hspace{3mm} \= \hspace{3mm} \= \hspace{3mm} \= \hspace{3mm} \= \hspace{3mm} \= \hspace{3mm} \= \hspace{3mm} \= \kill
        #1
        \end{tabbing}
       }
        \end{minipage}
}
\newcommand{\trans}{\longrightarrow}
\newcommand{\loc}{\mbox{$\ell$}}
\newcommand{\locations}{\mbox{$\Sigma$}}
\newcommand{\symstate}{\mbox{$s$}}
\newcommand{\pc}{\mbox{\loc}}
\newcommand{\pcstart}{\mbox{$\loc_{\textsf{start}}$}}
\newcommand{\pathcond}{\mbox{$\Pi$}}
\newcommand{\store}{\mbox{$\sigma$}}
\newcommand{\mapstatetoformula}[1]{\mbox{$\llbracket {#1} \rrbracket$}}
\newcommand{\newtransition}[3]{#1 \xrightarrow{#3} #2}
\newcommand{\typevar}{\mbox{\emph{Vars}}}
\newcommand{\typestmt}{\mbox{\emph{Stmts}}}
\newcommand{\true}{\mbox{\frenchspacing \it true}}
\newcommand{\false}{\mbox{\frenchspacing \it false}}
\newcommand{\hstore}{\mbox{$\mu$}}
\newcommand{\eval}[3]{\llbracket {#1} \rrbracket_{#2}^{#3}}
\newcommand{\evalTwo}[2]{\llbracket {#1} \rrbracket_{#2}}
\newcommand{\Intpsymbol}{\mbox{$\Psi$}}
\newcommand{\tracer}{\mbox{\sc tracer}}
\newcommand{\triple}[3]{\langle{#1},{#2},{#3}\rangle}
\newcommand{\pp}[1]{{\color{blue}\mbox{$\langle{#1}\rangle$}}}
\newcommand{\prpt}[1]{\mbox{{\color{blue}{$\langle$#1$\rangle$}}}}
\newcommand{\benchmark}[1]{\texttt{\small #1}}
\newcommand{\stmt}{\mbox{\it {\frenchspacing stmt}}}
\newcommand{\abduction}{\mbox{\sc {\frenchspacing abduction}}}
\newcommand{\interpolant}{\mbox{\sc intp}}
\newcommand{\backprop}{\mbox{\sc BackProp}}
\newcommand{\core}{\mbox{\sc core}}
\newcommand{\separate}{\mbox{\sc separate}}
\begin{document}

\title[TracerX:Dynamic Symbolic Execution with Interpolation]{TracerX: Dynamic Symbolic Execution with Interpolation}         


\author{Joxan Jaffar}
\orcid{0000-0001-9988-6144}             
\affiliation{
  \institution{National University of Singapore}            
}
\email{joxan@comp.nus.edu.sg}          

\author{Rasool Maghareh}
\orcid{0000-0002-8147-6590}             
\affiliation{
	\institution{National University of Singapore}            
}
\email{rasool@comp.nus.edu.sg}          

\author{Sangharatna Godboley}
\orcid{0000-0002-8147-6590}             
\affiliation{
	\institution{National University of Singapore}            
}
\email{sanghara@comp.nus.edu.sg}          

\author{Xuan-Linh Ha}
\orcid{0000-0003-1916-6812}             
\affiliation{
	\institution{National University of Singapore}            
}
\email{haxl@comp.nus.edu.sg}          


\begin{abstract}
Dynamic Symbolic Execution is an important method for the
testing of programs.  An important system on DSE is
KLEE~\cite{cadar08klee} which inputs a C/C++ program annotated with
symbolic variables, compiles it into LLVM, and then emulates the
execution paths of LLVM using a specified backtracking strategy.  The
major challenge in symbolic execution is {\em path explosion}.  The
method of \emph{abstraction
learning} \cite{jaffar09intp,mcmillan10lazy,mcmillan14lazy} has been
used to address this.  The key step here is the computation of
an \emph{interpolant} to represent the learnt abstraction.

~~~~~~~In this paper, we present a new interpolation algorithm and implement it
on top of the KLEE system.   The main objective is to address the path explosion
problem in pursuit of \emph{code penetration}: to prove that a target program
point is either reachable or unreachable.
That is, our focus is verification.
We show that despite the overhead of computing interpolants,
the \emph{pruning} of the symbolic execution tree
that interpolants provide often brings significant overall
benefits.   
We then performed a comprehensive experimental evaluation against KLEE,
as well as against one well-known system that is based on Static Symbolic
Execution,  CBMC \cite{cbmc}.
Our primary experiment shows code penetration success at a new level,
particularly so when the target is hard to determine.
A secondary experiment shows that our implementation is competitive
for testing.

\end{abstract}

\begin{CCSXML}
	<ccs2012>
	<concept>
	<concept_id>10011007.10011074.10011099.10011102.10011103</concept_id>
	<concept_desc>Software and its engineering~Software testing and debugging</concept_desc>
	<concept_significance>500</concept_significance>
	</concept>
	</ccs2012>
\end{CCSXML}

\ccsdesc[500]{Software and its engineering~Software testing and debugging}

\keywords{Symbolic Execution, Software Testing, Interpolation}  

\maketitle


\section{Introduction}
\label{sec:intro}
Symbolic execution (SE) has emerged as an important method to reason
about programs, in both verification and testing.  By reasoning about
inputs as symbolic entities, its fundamental advantage over
traditional black-box testing, which uses concrete inputs, is simply
that it has better \emph{coverage} of \emph{program paths}.  In
particular, \emph{dynamic symbolic execution} (DSE), where the
execution space is explored \emph{path-by-path}, has been shown
effective in systems such as DART~\cite{godefroid05dart},
CUTE~\cite{sen05cute} and KLEE~\cite{cadar08klee}\footnote{It is not
universally agreed that KLEE is a DSE system, but we follow the
terminology in two recent CACM articles~\cite{avgerinos16veritesting,cadar13se}.}.

A key advantage of DSE is that by examining a single path, the
analysis can be both precise (for example, capturing intricate details
such as the state of the cache micro-architecture), and efficient (for
example, the constraint solver often needs to deal with path
constraints that are aggregated into a single conjunction).  
Another advantage is the
possibility of reasoning about system or library functions which we
can execute but not analyze, as in the method of \emph{concolic
testing} CUTE~\cite{sen05cute}.  Yet another advantage is the ability to realize
a \emph{search strategy} in the path exploration, such as in a
random, depth-first, or breadth-first manner, or in a
manner determined by the program structure.  However, the key
disadvantage of DSE is that the number of program paths is in general
exponential in the program size, and most available implementations of
DSE do not employ a general technique to prune away some paths.
Indeed, a recent paper \cite{avgerinos16veritesting} describes that
DSE ``traditionally forks off two executors at the same line, which
remain subsequently forever independent'', clearly suggesting that the
DSE processing of different paths have no symbiosis.

A variant of symbolic execution is that of Static Symbolic Execution (SSE), 
see e.g. \cite{khurshid2003generalized,avgerinos16veritesting}.
The general idea is that the symbolic execution tree is
encoded as a single logic formula whose treatment can be outsourced to an SMT
solver \cite{de2002lemmas}.
\ignore{
	Another related area which employs pruning for expedite bug finding is  bounded 
	model checking (BMC); see e.g. the systems CBMC \cite{clarke2004tool} and 
	LLBMC \cite{falke2013llbmc}. This methodology is also known as \emph{static} symbolic 
	execution (SSE), where the (whole) program is encoded by an SMT solver \cite{de2002lemmas}.
}
The solver then deals with 
what is essentially a huge disjunctive formula. 
Clearly, there are  some limitations to this approach as compared with DSE,
for example, the loop bounds, including nested loops, must be pre-specified.
However, SSE has a huge advantage over (non-pruning) DSE: its SMT solver
can use the optimization method of \emph{conflict directed clause learning} (CDCL) 
\cite{marques1999grasp}.
See e.g. Section 3.4 of \cite{de2008proofs} on how the SMT solver Z3 exploits CDCL.
Essentially, CDCL enables ``pruning'' in the exploration process of the solver. 

In this paper, our primary objective is to address the path explosion
problem in DSE.  More specifically, we wish to perform a path-by-path
exploration of DSE to enjoy its benefits, but we include a
\emph{pruning mechanism} so that path generation can be eliminated if
the path generated so far is guaranteed not to violate the
stated safety conditions.
Toward this goal, we employ the method of \emph{abstraction
learning} \cite{jaffar09intp}, which is more popularly known
as \emph{lazy annotations\/} (LA) \cite{mcmillan10lazy,mcmillan14lazy}.
The core feature of this method is the use of \emph{interpolation},
which serves to generalize the context of a node in the symbolic
execution tree with an approximation of the weakest precondition of
the node.  This method has been implemented in the \tracer{} system
~\cite{jaffar11unbounded,jaffar12tracer} which was the first
system to demonstrate DSE with pruning.  While \tracer{} was able to
perform bounded verification and testing on many examples, it could
not accommodate industrial programs that often dynamically manipulate
the heap memory.  Instead, \tracer{} was primarily used
to evaluate new algorithms in verification, analysis, and testing,
e.g., ~\cite{chu16cache,chu12symmetry,jaffar13test}.

The main contribution of this paper is the design and implementation
of a new interpolation algorithm, and integration into the KLEE
system.  In our primary experiment, we compare against KLEE.  In a
secondary experiment, we consider the related area of Static Symbolic
Execution (SSE).  The reason is that, while SSE is generally
considered as significantly different from DSE, SSE is a competitor to
DSE because they both address many common analysis problems.  

Our main experimental result is that our algorithm leads in \emph{code penetration}:
given a target, which is essentially a designated program point, 
can one prove that the target is reachable, or prove that the target is unreachable.
Thus our algorithm is more aligned with verification rather than testing.
We suggest two driving applications for such verification.
One is to confirm/deny an ``alarm'' from a static analysis; an alarm is a target
for which there is a plausible reason for it to be a true bug.
Another application is dead code detection.  Penetration addresses both these questions.

Given that our algorithm is relatively heavy-weight, it is expected that
for some examples, the overhead is not worth it.
We show firstly that our implemented system performs well on a large
benchmark suite.   We then considered a subset of the original targets called \emph{hard targets}.
These are obtained by filtering out targets
that can be proved easily by state-of-the-art methods: vanilla symbolic execution
for reachable targets, and static analysis for unreachable targets.
We then show that for the remaining (hard) targets, the performance gap widens.

We then performed a secondary experimental evaluation for testing.
Here we followed the setup of the TEST-COMP competition by evaluating
on \emph{bug finding} (given a set of targets, find one) and \emph{code
coverage} (where all program blocks are targets, and to find as many
as possible).  In this secondary experiment, we show that our implementation
is competitive.



\section{A Motivating Example}
\label{sec:motivating}
Consider the shortest path problem in graphs.
See Fig. \ref{fig:motive} where we assume the graph has edges, where each of which has a ``distance''.
The edges point from a lower-numbered vertex to a higher one.
The variable $d$ computes the distance between nodes $1$ and $N$ 
via the traversed path. In the end, we want to know if $d \geq BOUND$ for some given constant $BOUND$.
Symbolic execution on this program (where the variable {\tt node} is symbolic)
will traverse all paths, and therefore will be able to check the bound.
Then, by iteratively executing symbolic execution with various values of $BOUND$,
we can solve the shortest path problem.

\begin{figure}[!htb]
	\small{
		\stuff{
			\#define N ... \\
			int graph[N+1][N+1] = \{ ... \}; \\
			int node = 1, d = 0; \\
			while (node < N) \{ \\
			\>    int next = new symbolic variable; \\
			\>     for (i = node + 1; i < N; i++) \\
			\>\>        if (next == i) { d += graph[node][next]; break; } \\
			\>   node = next; \\
			\} \\
			assert (d >= BOUND);
		}
	}
	\caption{Motivating Example 1}
	\label{fig:motive}
	\vspace{-4mm}
\end{figure}

\begin{minipage}{\textwidth}
	\begin{minipage}[b]{0.55\textwidth}
		\begin{center}
			\includegraphics[scale=0.55]{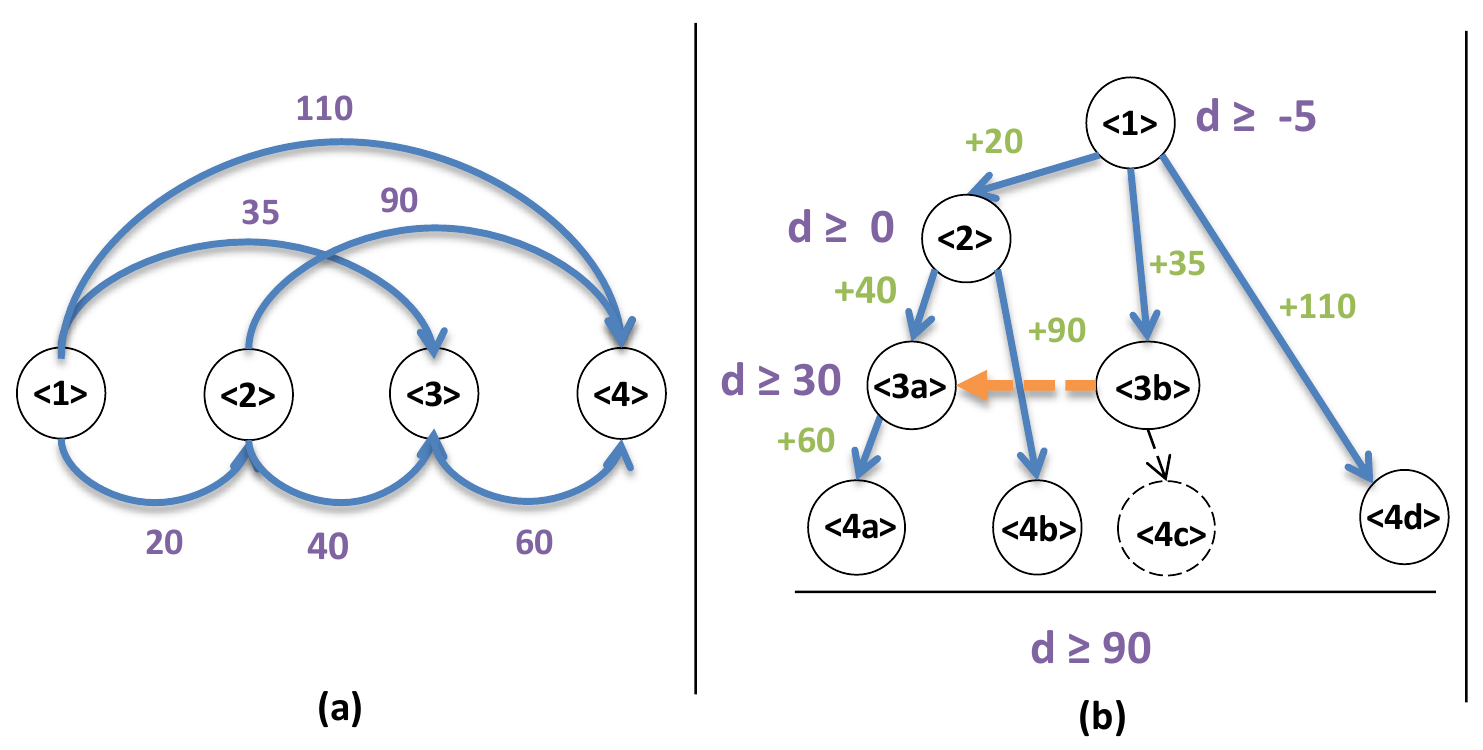}
			\vspace*{-6mm}
		\end{center}
		\captionof{figure}{Example with Interpolation (a) Graph (b) Symbolic Execution Tree }
		\label{fig:motive-graph}
	\end{minipage}
	\hfill
	\begin{minipage}[b]{0.44\textwidth}
		\centering
		\vspace*{3mm}
		\begin{tabular}{c|c|c|c|c|}
			\cline{2-5} 
			4 & $\infty$ & $\infty$ & $\infty$  & $\infty$  \\ \cline{2-5} 
			3 & $\infty$  & $\infty$ & $\infty$  & 60 \\ \cline{2-5}  
			2 & $\infty$  & $\infty$  & 40 & 90 \\ \cline{2-5} 
			1 & $\infty$  & 20 & 35 & 110 \\ \cline{2-5} 
			\multicolumn{1}{c}{} & \multicolumn{1}{c}{1}  & \multicolumn{1}{c}{2} & \multicolumn{1}{c}{3} & \multicolumn{1}{c}{4} \\ 
		\end{tabular}
		\vspace{10mm}
		\captionof{table}{Adjacency Matrix}
		\label{tab:motive-matrix}
	\end{minipage}
	
\end{minipage}

\noindent\begin{example}\label{ex:motivEx}
Consider the example graph in Fig. \ref{fig:motive-graph} (a) and its 
respective adjacency matrix in Table  \ref{tab:motive-matrix}. The 
execution tree of the program from Fig. \ref{fig:motive} is presented 
in Fig. \ref{fig:motive-graph}(b).  The program points \prpt{1}{} and \prpt{4}{} 
represent the source and destination nodes.  
The program points using the same number, e.g. \prpt{3a} and \prpt{3b}, 
identify different visits to the same program point.
Now assume we want to ensure that the distance between nodes $1$
and $4$ is greater than $BOUND = 90$ at the end. 
A basic DSE tool will traverse all the paths in the execution tree 
if there is no pruning.  There would be a problem for this example
because the number of the paths is exponential in $N$.

\ignore{
On the other hand,  SSE-based approaches possess
a pruning technology. These tools require unrolling the loops 
statically. Since we generally can be at different nodes of the
graph in different iterations, the loop iterations are in general
not comparable.  Hence, SSE-based approaches are not able to 
prune the search space effectively either.
}

Now we will demonstrate how our approach can prune the execution tree.
Since our approach is built on top of DSE, all the program points in the loop
iterations possess the same program point. Hence, we can compare the states
in different iterations and prune them when possible.  Consider the 
leftmost path \prpt{1}, \prpt{2}, \prpt{3a}, and \prpt{4a}. 
At the end of this path, since the distance
$d$ is 120, the assertion is not violated and an \emph{interpolant} is
stored at \prpt{3}: $d \geq 30$ 
This interpolant represents the \emph{weakest
precondition} at \prpt{3} which satisfies the assertion. It is
computed by updating the safety property $d \geq 90$ with the update
on $d$ between nodes \prpt{3a}, and \prpt{4a}: $d = d + 60$.  Next,
this interpolant is passed to the parent node \prpt{2} considering the
update on the variable $d$ between \prpt{2} and \prpt{3a}: $d \geq
-10$.

Backtracking back to \prpt{2}, we now consider the second path \prpt{1}, \prpt{2}, \prpt{4b}.
Here again, the assertion is not violated and an interpolant is passed to node  \prpt{2}
considering the update on $d$ between  \prpt{2} and \prpt{4b}:  $d \geq 0$.
Now the intersection of the two interpolants received from the successor nodes of \prpt{2} is 
stored as the interpolant of \prpt{2}: $d \geq 0$.

Moving on,  this interpolant is updated and sent to the parent node \prpt{1}:  $d \geq -20$.
Next, the path \prpt{1}, \prpt{3b}, and \prpt{4c} is traversed. Here, at node \prpt{3b} the
distance is $d = 35$, and the interpolant stored at \prpt{3a} ($d \geq 30$) can be
used to \emph{prune} this node. Note importantly that node  \prpt{3a} was visited in the third 
iteration of the loop (in the program from Fig. \ref{fig:motive}) and node \prpt{3b} is 
visited in the second iteration. Here, the pruning is sound since both program points
are on the same node in the graph (node $3$) and the state at \prpt{3b} satisfies the 
respective weakest precondition interpolant.  

This subsumed node is considered to have an interpolant, $d \geq 30$, that which subsumed it.
This is passed to node \prpt{1}: $d \geq -5$.
Finally, the last path \prpt{1}, \prpt{4d} is traversed and since the assertion is 
not violated an interpolant is passed to node \prpt{1}: $d \geq -20$. 
Now the three interpolants passed to \prpt{1},
$d \geq -20$ from \prpt{2}, 
$d \geq -5$ from \prpt{3b}, 
$d \geq -20$ from \prpt{4c}, 
is combined as a conjunction and stored  as an interpolant at \prpt{1}: $d \geq -5$. 

This interpolant at the root infers that all values of $d$ greater than or equal
$-5$ will not violate the assertion. From this, we can infer that if the condition in 
the assertion had been $d \geq 95$, it would still have not been violated. 
That is, we can,in fact, conclude that
the shortest path between nodes $1$ and $4$ in the graph is $95$, a tighter bound than what we started with.
$\Box$
\end{example}

For this example, our implementation can deal with 1000 nodes, while CBMC and KLEE tops out at 100 and 25 respectively.



%

\section{Background: Symbolic Execution}
\label{sec:symex}
We formalize dynamic symbolic execution (DSE) for a toy programming language.
The variables in a program are denoted \typevar.
Other than the program variables $\typevar_p$, there
are also \emph{symbolic} variables $\typevar_s$.
A basic statement is the assignment, $var$ = $exp$
where $exp$ is some arithmetic or boolean expression.
Another basic statement is {\tt assume($exp$)} where
$exp$ is a boolean expression.
Note that ``assertions'' can be modeled by {\tt assume()} statements
coupled with a distinguished statement representing {\tt error}.

For brevity, we omit other statements,
e.g. functions, memory operations like arrays, malloc, etc.
Extension to cover theses instructions would be routine. 
In general, we follow the semantics used by KLEE for 
these instructions. We will further discuss some implementation details 
for these instructions in Section \ref{sec:impl-remark}.

%
%
We model a program P by a transition system: a tuple
$\triple{\locations}{\pcstart}{\trans}$ where $\locations$ is
the set of program points and $\pcstart \in \locations$ is the
\emph{unique} initial program point.
%
%
Let $\trans \subseteq \locations \times \locations \times \typestmt$,
where $\typestmt$ is the set of program statements, be the transition
relation that relates a state to its (possible) successors by
executing the statements. We shall use
$\newtransition{\loc}{\loc'}{\texttt{stmt}}$ to denote a transition
relation from $\loc \in \locations$ to $\loc' \in \locations$
executing the statement $\texttt{stmt} \in \typestmt$.

%
%
%
%
%



\vspace*{-1mm}
\begin{definition}[Symbolic State] 
\label{def:symstate}
A symbolic state $s$ is a tuple $\langle \pc, \pathcond \rangle$, 
where $\pc \in \locations$ is the current program
point, 
and the constraint store (or ``context'') $\pathcond$ is a first-order formula
over symbolic variables $\typevar_s$ and program variables $\typevar_p$. 
\hfill $\Box$
\end{definition}
\vspace*{-2mm}


The \emph{evaluation} $\eval{x = exp}{\pathcond}{}$ of
an expression $exp$ with the constraint store $\pathcond$ is 
defined in the standard way by $\pathcond[x/exp]$.  
Similarly, the evaluation $\eval{assume(exp)}{\pathcond}{}$ is 
defined by $\pathcond \wedge exp$.   
(Note that our expressions have no \emph{side-effects}.)
The notion of evaluation is
extended for a set of constraints in an intuitive way. 
The evaluation of the constraint store of a state is denoted by $\eval{s}{}{}$. 

A symbolic state $\symstate \equiv \langle \pc, \pathcond \rangle$ is
called \emph{infeasible}
if $\eval{s}{}{}$ is unsatisfiable. Otherwise, the state is
called \emph{feasible}; symbolic execution is possible
from a feasible state only.  
%

%

\vspace*{-1mm}
\begin{definition}[Transition Step]
\label{def:trans}
Given a feasible symbolic
 state
 $s \equiv \langle \pc, \pathcond \rangle$, 
and 
a transition system
 $\triple{\locations}{\pcstart}{\trans}$ 
 the symbolic execution of transition $\pc \xrightarrow{\sf
 stmt} \pc'$ returns a \emph{successor state}
 $\langle \pc',  \pathcond' \rangle$ where
 $\pathcond'$ is computed via $\eval{stmt}{\pathcond}{}$.
$\Box$
\end{definition}
\vspace*{-1mm}


\noindent


Let $\symstate_{0} \equiv \langle \pcstart, true \rangle$ 
be the 
\emph{initial} symbolic state. A \emph{symbolic path}
$\symstate_{0} \rightarrow \symstate_{1}
\cdots \rightarrow \symstate_{m}$ denoted by $\overrightarrow{s}$ 
is a sequence of symbolic states such
that $\forall 1 \leq i \leq m$,  $\symstate_{i}$
is a \emph{successor} of $\symstate_{i-1}$.
We can now define \emph{symbolic exploration\/} as the process of
constructing a \emph{Symbolic Execution Tree} (SET) rooted at
$\symstate_{0}$.  Following some
\emph{search strategy}, the order in which the nodes are constructed 
can be different. For bounded verification and testing,
we assume that the tree depth is bounded.
Note that only one of our program statements, $assume(e)$, is a ``branch''.
This means that in a SET, a state has at most two successors.

The reachability of an \texttt{error} statement indicates a \emph{bug}.
Symbolic execution typically stops the path and generates
a \emph{failed} test case witnessing that bug. On the other hand, a
path \emph{safely} terminates if we reach a \texttt{halt} statement, 
and we also generate a \emph{passed} test case.  We prove a program
is \emph{safe} by showing that no \texttt{error} statement is reached.
A subtree is called \emph{safe} if no \texttt{error} statement is reached 
from its root.


\subsection{Symbolic Execution with Interpolation}
\label{sec:statesub}

We now present a formulation of the method of dynamic symbolic execution
with interpolation (DSEI) \cite{jaffar09intp,mcmillan10lazy}.
The essential idea, in brief, is this.
In exploring the SET, an \emph{interpolant} $\Intpsymbol$ of a state
$\symstate$ is an \emph{abstraction} of it which ensures the
safety of the subtree rooted at that state.  In other words, if we
continue the execution with $\Intpsymbol$ instead of $\symstate$,
we will \emph{not} reach any error.  Thus upon
encountering a state $\overline{\symstate}$ of the same
program point as $\symstate$, i.e., $\symstate$ and $\overline{\symstate}$ 
have the same set of emanating transitions,
if $\overline{\symstate} \models \Intpsymbol$, then continuing the
execution from $\overline{\symstate}$ will not lead to an error.
Consequently, we can prune the subtree rooted at $\overline{\symstate}$.

DSEI is a \emph{top-down} method, because it traverses the
SET from the root, and is also \emph{bottom-up} because it propagates
formulas backward from the end of a path in the SET.  Before
proceeding with describing DSEI, let us briefly argue that using just
a bottom-up approach using the concept of \emph{weakest precondition},
is not practical.  Before proceeding, consider using classic weakest
preconditions, which operate on program fragments, not paths.  The main
disadvantage here is that being entirely bottom-up,
the computed precondition at a program point
is \emph{agnostic} to the \emph{context} of the states which reach
that program point.

In contrast, DSEI performs a top-down depth-first search
of the state space, path by path. For each path,
it computes a \emph{path interpolant}.
This is where DSEI is bottom-up.
For each subtree, it computes
a \emph{tree interpolant} being the conjunction of all of the path
interpolants from within the subtree.

Now, we describe how to compute a path interpolant.  This is computed
recursively in a bottom-up manner.  Consider $\overrightarrow{s}$
being a sequence of states, $\overrightarrow{s}.\stmt$ to denote the
sequence $\overrightarrow{s}$ appended with a state $s$ where $s$ is
obtained by executing $\stmt{}$ from the last state in
$\overrightarrow{s}$, and the postcondition $post$, the function {\sc
intp()} computes the path-based weakest precondition interpolant.  In
case $\overrightarrow{s}$ is the empty sequence, denoted by
$\epsilon$, then $\overrightarrow{s}.\stmt$ is simply the state
obtained by executing $\stmt{}$ on the initial state.

\vspace*{-1mm}
\begin{definition}[Path-based Weakest Precondition] \label{def:path-wp}
Path-based weakest precondition is the weakest precondition of a symbolic path  $\overrightarrow{s}$ with respect to a postcondition $Post$.
This is formalized in Fig. \ref{algm:path-based}.
$\Box$
\end{definition}
\vspace*{-1mm}

\vspace*{-2mm}
\begin{figure}[!htb]
	\fbox{
		\small
		$		
		\begin{array}{lll}
		(1a)&\interpolant(\epsilon, \Psi) &= \Psi\\
		(2a)&\interpolant(\overrightarrow{s}.{\color{blue} x = e}, \Psi)&= \interpolant(\overrightarrow{s}, \Psi[x/e])    \\
		(3a)&\interpolant(\overrightarrow{s}.{\color{blue} assume(e)}, f\!alse)&= \interpolant(\overrightarrow{s}, \neg e )    \\
		(4a)&\interpolant(\overrightarrow{s}.{\color{blue} assume(e)}, \Psi)&= \interpolant(\overrightarrow{s}, e \implies\Psi)    \\
		\end{array}
		$
	}
	\vspace*{-2mm}
	\caption{ Path-Based Weakest Precondition}\label{algm:path-based}
	\vspace*{-3mm}
\end{figure}

See Fig. \ref{algm:path-based}. Rule (1a) and (2a) are the base cases
and their process is well-understood. We simply present an example.
 Suppose $x = e$ were {\tt x = x + 5} and $\Psi$ were {\tt x < 7}, then
the weakest precondition is {\tt x < 2}.  In general, for an
assignment of a variable to an expression $e$, say {\tt x
	= x + $e$}, the weakest precondition wrt. to $\Psi$ is
$\Psi[x/x+e]$, i.e. the formula obtained from $\Psi$ by
simultaneously replacing all its occurrences of $x$ with $x + e$.

Next, rule (3a) addresses the case that a node is infeasible in the SET. Finally, 
 we should highlight that in rule (4a), the interpolant $e \implies\Psi$ still 
 generates a \emph{disjunction}. Consequently, the path-based weakest precondition,
in general, may be a very large \emph{disjunction}, exponential in the program length. 



\begin{example}
Consider the example in Fig. \ref{fig:example},
where {\tt $b[i]$} is a symbolic bit-vector , $Pre$ an
unspecified Boolean condition on the bit-vector representing
the \emph{precondition}, and $Post$ an
unspecified Boolean condition on any variables,
representing the \emph{postcondition}.
%
%
%
%
%
%

\vspace*{-3mm}
\begin{figure}[!htb]

\begin{lstlisting}[mathescape]
bool b[N]; 
assume($Pre$); 
$\pp{1}$ if (b[1]) $\kappa_1$ = 1; else $\kappa_1$ = -1; 
$\pp{2}$ if (b[2]) $\kappa_2$ = 1; else $\kappa_2$ = -1; 
$\cdots$ 
$\pp{N}$ if (b[N]) $\kappa_n$ = 1; else $\kappa_n$ = -1; 
assert($Post$); 
\end{lstlisting} 
\vspace{-5mm}
\caption{Example}\label{fig:example}
\vspace{-3mm}
\end{figure}
%

Suppose $N = 3$, and $Pre$ implies $b[1] == b[2] \wedge b[2] == b[3]$.

The classic weakest precondition of the program with postcondition
$Post$ is a disjunction of 8 distinct formulas, each of which is (a)
an assignment of specific bit values to the variables $b[i]$ conjoined
with (b) a formula representing the propagation of the postcondition
$Post$ through the $\kappa_i$ assignments corresponding to the $b[i]$
assignments in (a).  These 8 formulas are

\smallskip 
$\begin{array}{l}
b[1] == 1 \wedge b[2] == 1 \wedge b[3] == 1 \wedge  Post [ \kappa_1 /1, \kappa_2 / 1, \kappa_3 / 1] \\
b[1] == 1 \wedge b[2] == 1 \wedge b[3] == 0 \wedge  Post [ \kappa_1 /1, \kappa_2 / 1, \kappa_3 / -1] \\
b[1] == 1 \wedge b[2] == 0 \wedge b[3] == 1 \wedge Post [ \kappa_1 / 1, \kappa_2 /-1, \kappa_3 / 1] \\
b[1] == 1 \wedge b[2] == 0 \wedge b[3] == 0 \wedge  Post [ \kappa_1 /1, \kappa_2 / -1, \kappa_3 / -1] \\
b[1] == 0 \wedge b[2] == 1 \wedge b[3] == 1 \wedge  Post [ \kappa_1 /-1, \kappa_2 / 1, \kappa_3 / 1] \\
b[1] == 0 \wedge b[2] == 1 \wedge b[3] == 0 \wedge  Post [ \kappa_1 /-1, \kappa_2 / 1, \kappa_3 / -1] \\
b[1] == 0 \wedge b[2] == 0 \wedge b[3] == 1 \wedge Post [ \kappa_1 / -1, \kappa_2 /-1, \kappa_3 / 1] \\
b[1] == 0 \wedge b[2] == 0 \wedge b[3] == 0 \wedge  Post [ \kappa_1 /-1, \kappa_2 / -1, \kappa_3 / -1] \\
\end{array}$
\smallskip

\noindent
By way of comparison, in the \emph{path-based} weakest precondition,
6 of 8 of these formulas would be unsatisfiable.
Thus, the \emph{path-based} weakest precondition,
after simplification, 
would be the following 2 formulas at \pp{1}. 

\vspace*{1mm}
$b[1] == 1 \wedge b[2] == 1 \wedge b[3] == 1 \wedge  Post [ \kappa_1 / 1, \kappa_2 / 1, \kappa_3 / 1]$ \\
\indent $b[1] == 0 \wedge b[2] == 0 \wedge b[3] == 0 \wedge  Post [ \kappa_1 /\!-\!\!1, \kappa_2 / \!-\!\!1, \kappa_3 / \!-\!\!1]$.
\vspace*{1mm} $\Box$
\end{example}

\noindent
The idea here is that classic weakest precondition, as it traverses is agnostic to precondition,
and there it essentially traverses every path through the program.
Only at the end it is known that only 2 of the 8 formulas generated cover the precondition.

We finally comment that in the path-based weakest precondition,
though it has only two formulas, this still is a \emph{disjunction}.
In the next two sections, we present our  algorithm which
\emph{approximates} the weakest path-based weakest precondition
with a single \emph{conjunction}.

\ignore{
We end this subsection by showing how our interpolation algorithm,
defined in the next subsection, would deal with this example in a great way.

One view in comparing SSE/SMT with our algorithm is the
unsatisfiability core method, in its traditional formulation, is
a \emph{particular} way to obtain a half-interpolant.  To exemplify
this, note that in our examples, we have interpolants which are
formulas that have never been encountered before 
(e.g. $-2 \leq x[1]+x[2] \leq 2$ in the example above).
We again refer to Example \ref{fig:example}.

This time we don't constrain the precondition $\Psi$ (i.e. the $b[i]$ can freely
take any binary values), and the postcondition $\phi$ is 
$-N \leq x[1]+x[2]+\cdots+x[N] \leq N$ for various $N$.  
\ignore{
Thus our algorithm here will not exploit its path-based weakest
precondition feature (for there are no infeasible paths),
but instead will use its conjunctive approximation feature.
}
See Table \ref{table:motivEx} where we compare executions of our
algorithm against the well-known SSE systems CBMC \cite{cbmc} and
LLBMC \cite{llbmc}.  (KLEE only manages $N = 24$ within timeout.)
The reason for the vast superiority of our
algorithm is that at any level $i$ in the traversal, we compute 
just \emph{one} interpolant:

\vspace*{0mm}
$~~~~~~~~~~~~~-N+i \leq x[1]+x[2]+\cdots+x[i] \leq N-i$
\vspace*{2mm}

\noindent
which subsumes all states at this level which are encountered later.
In other words, we have ``perfect'' subsumption.
}

\ignore{
Given a feasible symbolic state
$\symstate \equiv \langle \pc, \store, \hstore, \pathcond \rangle$
that is reachable from $s_0$ via a sequence of
transitions $\pi$, let $\theta$ be a satisfying assignment of
symbolic variables $\tilde{I}$ for the path condition $\pathcond$,
i.e. $\evalTwo{\pathcond}{\theta} \equiv true$.  Let
$\evalTwo{\symstate}{\theta}$ denote the concrete state
$cs \equiv \langle \pc, S, H \rangle$, where
$S \equiv \evalTwo{\store}{\theta}$ and
$H \equiv \evalTwo{\hstore}{\theta}$. Then $cs$ is the concrete state
resulting from executing the state $\evalTwo{\symstate_0}{\theta}$ 
(via the same sequence of transitions $\pi$).
\ignore{
\footnote{One
implicit assumption is that the address sequence returned by calling 
\texttt{malloc} is \emph{deterministic}.}
}

\vspace*{-1mm}
\begin{definition}[Subsumption of Symbolic States] 
\label{def:subsume}
Let there be two symbolic states
$\symstate_1 \equiv \langle \pc, \store_1, \hstore_1, \pathcond_1 \rangle$
and
$\symstate_2 \equiv \langle \pc, \store_2, \hstore_2, \pathcond_2 \rangle$
which respectively contain symbolic variables $\tilde{I}_1$ and $\tilde{I}_2$.
Note that the sets $\tilde{I}_1$ and $\tilde{I}_2$ may not be disjoint.
We say $\symstate_1$ \emph{subsumes} $\symstate_2$, denoted by 
$\symstate_2 \models \symstate_1$, if 
for each satisfying assignment to program variables
of $\theta_2$ of $\exists \tilde{I}_2 . \pathcond_2$, there exists a
satisfying assignment to program variables 
$\theta_1$ of $\exists \tilde{I}_1 . \pathcond_1$ such that
$\evalTwo{\symstate_1}{\theta_1}$ \emph{subsumes}
$\evalTwo{\symstate_2}{\theta_2}$. \hfill $\Box$
\end{definition}
\vspace*{-1mm}

We say that (concrete or symbolic) state is \emph{safe}
if any concrete execution from this state is safe.

Finally we outline the intuitive idea behind interpolation as follows.
In exploring the SET, an \emph{interpolant} $\Intpsymbol$ of a state
$\symstate$ is an \emph{abstraction} of it which ensures the
safety of the subtree rooted at that state.  In other words, if we
continue the execution with $\Intpsymbol$ instead of $\symstate$,
we will \emph{not} reach any error.  Thus upon
encountering a state $\overline{\symstate}$ of the same
program point as $\symstate$, i.e., $\symstate$ and $\overline{\symstate}$ 
have the same set of emanating transitions,
if $\overline{\symstate} \models \Intpsymbol$, then continuing the
execution from $\overline{\symstate}$ will not lead to an error.
Consequently, we can prune the subtree rooted at $\overline{\symstate}$.
}

\section{The Main Algorithm}
\label{sec:mainalgo}


\begin{figure}
	\fbox{
		\begin{minipage}{102mm}
			
			\ \\
			\vspace*{-5mm}
			
			{\sc DSEI}(s) \{ // returns a tree interpolant $(\Intpsymbol)$ \\
			\pp{1}\hspace*{1mm} \textbf{if} ({\sc Unsat}($\eval{s}{}{}$)) \textbf{return} \false; ~~~ // {\color{blue} Infeasible node}\\  
			\pp{2}\hspace*{1mm} \textbf{if} (s $\models$ \Intpsymbol) \textbf{return} \Intpsymbol; \\
			\pp{3}\hspace*{1mm} \textbf{if} (s is terminal) \{ \\
			\pp{4}\hspace*{6mm} \textbf{if} ($s \models \Phi$)
			return $\Phi$; ~~~ // {\color{blue} $\Phi$ is the safety property}\\
			\pp{5}\hspace*{6mm} \textbf{else} return error;~~~ // {\color{blue} Counter-example}\\ 
			\pp{6}\hspace*{1mm} \} \\
			\pp{7}\hspace*{1mm} $\Intpsymbol'$ = $\Intpsymbol''$ = \true; \\
			\pp{8}\hspace*{1mm} \textbf{if} there is a transition $s \xrightarrow{\sf stmt} s'$ \{ \\
			\pp{9} \hspace*{6mm} $\Intpsymbol$ = {\sc DSEI}($s'$); \\
			\pp{10} \hspace*{6mm} \textbf{if} ($\Intpsymbol \not\equiv \false$ and $\Intpsymbol \not\equiv$ error) $\Intpsymbol'$ = \backprop($s$, stmt, $\Intpsymbol$); \\
			\pp{11} \hspace*{6mm} \textbf{else} return $\Intpsymbol$; \\
			\pp{12} \hspace*{0mm} \} \\
			\pp{13} \hspace*{0mm} \textbf{if} there is another transition $s \xrightarrow{\sf stmt'} s''$ \{ \\
			\pp{14} \hspace*{6mm} $\Intpsymbol$ = {\sc DSEI}($s''$); \\
			\pp{15} \hspace*{6mm} \textbf{if} ($\Intpsymbol \not\equiv \false$ and $\Intpsymbol \not\equiv$ error) $\Intpsymbol''$ = \backprop($s$, $stmt', \Intpsymbol$); \\
			\pp{16} \hspace*{6mm} \textbf{else} return $\Intpsymbol$; \\
			\pp{17} \hspace*{1mm} \} \\
			\pp{18} \hspace*{1mm} \textbf{return} $\Intpsymbol' \wedge \Intpsymbol''$;\\
			\pp{19}  			\}
			\vspace*{-2mm}
			\caption{The DSEI Algorithm}
			\label{algm1}
		\end{minipage}
	}
	\vspace*{-6mm}
\end{figure}

The overall structure of our idealized algorithm is in Fig. \ref{algm1}, 
where it processes a symbolic execution tree (SET)  via DSEI. 
The function {\sc DSEI} receives a symbolic state $s$ which represents 
a node from the SET. It then processes the subtree beneath the node
 and returns an interpolant $\Intpsymbol$.
If $s$ is safe, i.e. all paths from $s$ do not violate the assertion, then 
$\Intpsymbol$ is an \emph{interpolant} storing  a generalization of the 
state $s$. If, on the other hand, $s$ is unsafe, i.e. there is one path from $s$
that violates assertion, then $\Intpsymbol$ is \texttt{error}.

The {\sc DSEI}$(s)$ function first
checks if a state is infeasible (line \pp{1}). The
return value is simply $\false$ (as the interpolant).
 
At lines \pp{3}-\pp{6} (when $s$ is a terminal node, i.e. it has no successors), it is checked if
$s$ is safe, then $\Phi$ is returned as an interpolant for
the \emph{safe} terminal node. Otherwise, a
counter-example is found and {\tt error} is returned.

Next, in lines \pp{8}-\pp{12}, the {\sc DSEI} function
is recursively called on $s'$ which is a successor node of $s$ via the
transition $s \xrightarrow{\sf stmt} s'$.  After the call is returned
(line \pp{10}), it is checked if the interpolant $\Intpsymbol$ is
$f\!alse$ or {\tt error}. If so, $f\!alse$ or {\tt error} is returned (line \pp{11}).

The key element of the algorithm is in line \pp{10}, where
the \backprop(s, stmt, $\Intpsymbol$) function returns $\Intpsymbol'$, a
practical estimation of the \emph{weakest precondition} of the
statement $stmt$ wrt. a postcondition $\Intpsymbol$. Note importantly
that here we cannot use the \emph{path-based} weakest precondition
from Section \ref{sec:statesub}, since it generates a disjunctive
interpolant.  In Section \ref{sec:algm1}, we present
the \emph{interpolation algorithm} of the \backprop{} function.

Similarly, in lines \pp{13}-\pp{17}, if another
transition $s \xrightarrow{\sf stmt'} s''$ exists,
the successor node $s''$ is processed and the 
interpolant $\Intpsymbol''$ is computed. 

Finally, the \emph{tree interpolant} is returned in line \pp{20} which 
is the conjunction of $\Intpsymbol'$ and
$\Intpsymbol''$. Note here that $\Intpsymbol' \wedge \Intpsymbol''$
contains a condition which while not violated, ensures
all paths within the subtree beneath $s$ will not lead to a violation
of the safety property $\Phi$.

The key to performance is the subsumption step in line \pp{2}.
Here $\Intpsymbol$ contains the interpolant generated from 
processing a similar node in the SET,
i.e. a node with the same program point. It is checked if
$s \models \Intpsymbol$. If so, $s$ is subsumed by $\Intpsymbol$
meaning all paths beneath the node $s$ would not lead to a violation
of the safety property $\Phi$.

This, in turn, means that the key is the quality of the interpolant $\Intpsymbol$ generated.
Without interpolation, i.e. relying on state subsumption alone, is not good
enough\footnote{This is because different symbolic states, that share the same
	program point, might be significantly different.}.  
In Section \ref{sec:algm1}, we present the \emph{interpolation algorithm} of \backprop{}  
which is the main technical contribution of this paper. 
The \backprop{}  function  computes a \emph{conjunctive approximation} of the 
\emph{weakest precondition} as an interpolant.

\vspace{2mm}
\noindent
\textbf{Remark on Depth First vs. Random Strategy:}
The algorithm presented in this section is based on the depth-first traversal (DFS) of the SET. 
However, if the SET is so big that full coverage is implausible,
then a DFS strategy in a non-pruning DSE (like KLEE) is 
known to have poor coverage \cite{cadar2008exe}.
For such large programs, a \emph{random} search strategy can be used 
to avoid the exploration of the SET from being stuck in some part of the SET. 
KLEE's ``random'' strategy presented in \cite{cadar08klee} 
addresses this issue and attempts to  maximize coverage.

We can utilize a random strategy in our 
algorithm too. This strategy would be similar  
to KLEE's random strategy \cite{cadar08klee}.
The difference between our random strategy and KLEE's random strategy
is two-fold: 1) Our approach still attempts to generate path and 
tree-interpolants to prune the SET. 2) Our random strategy adds 
one more step to the two steps in KLEE's random strategy, which are executed  in
 a round-robin fashion. This new step will pick a state which has a 
 high chance to create a tree interpolant.

For our algorithm, the choice between DFS and random 
search strategies can matter greatly.  DFS strategy is ideal 
for our algorithm since it maximizes the generation of tree-interpolants,
which in turn increases the chance of subsumption. 
In general, in the random strategy,
 tree interpolant are formed slower, and memory usage is  higher.
On the other hand, it can reach higher coverage when the SET is not fully traversed.
We follow the commonsense belief that for full exploration
DFS is as good as any other strategy, whereas, for incomplete exploration,
random is better.
In Section \ref{sec:imp_and_eval}, we will experiment with 
both the DFS and random strategies for our algorithm and 
we will show that both strategies can reach higher 
performance on  different programs.

\section{A Practical Interpolation Algorithm}
\label{sec:algm1}

The essential idea behind our new interpolation algorithm is to 
have it be a \emph{conjunction}.
From section \ref{sec:statesub}, we have seen that the path-based weakest precondition is in general a disjunction, hence, still not practical. 
So our proposed approach will entail
a \emph{conjunctive approximation} of the path-based weakest precondition.
Clearly, the context itself is a first candidate.
We now show that we can, by using the context as a guide,  compute
an effective abstraction of it.

Before we proceed, we mention abstraction learning via interpolation \cite{jaffar09intp, mcmillan10lazy,mcmillan14lazy} which has demonstrated significant speedup in verification and testing,
e.g., \cite{jaffar12tracer,jaffar13test}.
Although in all of these previous efforts, the interpolants implemented 
are \emph{conjunctive}, they were not as general as in the weakest precondition.
In fact, all implementations ensured that an interpolant was
in the form of a \emph{conjunction} which could then be dealt with efficiently
by an SMT solver. 

We now present the \backprop~ function in Fig. \ref{algm:conjunctive} which generates a \emph{conjunctive approximation} of 
the path-based weakest precondition.
 As before, we use the notation $\overrightarrow{s}$ to denote a sequence
of states.

\begin{figure}[]
\fbox{
\footnotesize
$	
\hspace{-2mm}
\begin{array}{lll}
(1b)~\backprop(s,\epsilon, \Psi) &\equiv \Psi & \\
(2b)~\backprop(s,{\color{blue} x = e}, \Psi)    &\equiv \Psi[x/e]   & \\
(3b)~\backprop(s,{\color{blue} assume(e)}, f\!als\!e)& \equiv  \neg e  \\
(4b)~\backprop(s,{\color{blue} assume(e)}, \Psi) &\equiv  \Psi \wedge e    & i\!f ~\eval{s}{}{} \models e\\
(5b)~\backprop(s,{\color{blue} assume(e)}, 
\Psi)&\equiv \abduction(\mapstatetoformula{s},e,\Psi) & other\!wise \\
\end{array}
$ 
}
\vspace{-2mm}
\caption{Conjunctive Path-Based Weakest Precondition}\label{algm:conjunctive}
\vspace{-4mm}
\end{figure}

Rules (1b) - (3b) are similar to the respective rules in Fig. \ref{algm:path-based}.
The problem at hand is to determine an interpolant which
(a) it can replace $e \implies \Psi$ from Fig. \ref{algm:path-based}, and (b) it is a \emph{conjunction}.
The major difference here compared to the
path-based weakest precondition (from Fig. \ref{algm:path-based}) is that the rule (4a) is now
replaced by the two new rules (4b) and (5b).

We first dispense the easy case in rule (4b) where
$~\eval{s}{}{} \models e$ holds.
Clearly, $\Psi \wedge e$ is the right interpolant.

Next, we discuss the difficult case in rule (5b).  We know that
the best interpolant is provided by the concept of weakest
precondition, that is: $\neg e \bigvee \Psi$.  However, this is a
disjunction, and so is not suitable for us.  What we require is a
general method, for generalizing the constraints, that attempt to be as
powerful as the weakest precondition method. However, it is restricted to
produce only a conjunction of constraints.

Note that in rule (5b), we invoke a function $\abduction(\eval{s}{}{}, e, \Psi)$.
We already have that $\eval{s}{}{} \wedge e \models \Psi$ holds.
This is by virtue of the top-down computation that brought us to this point.
Now, we want a generalization 
$\overline{\eval{s}{}{}}$ 
such that $\overline{\eval{s}{}{}} \wedge e \models \Psi$.
This means we have an \emph{abduction} problem:
given a conclusion $\Psi$ and a partial contribution $e$ to that conclusion,
what is the most general constraint needed to be added to $e$?
This is a classic problem \cite{wikiabduction}. 
Unfortunately, we are not aware of any general abduction algorithm
that is practical for our purposes.
Next, we present our own abduction algorithm.

\begin{figure}[]
	\fbox{
		\begin{minipage}{102mm}
			\ \\
			$\abduction(\phi, e, \Psi)$ \{ \hspace*{5mm} // {\color{blue}given $\phi \wedge e \models \Psi$} \\
			\hspace*{1mm}$\bar{\phi}$ = {\sc core}($\phi \wedge e,  \Psi$)  \\
			\hspace*{1mm} let $v$ be the variables in $e$ \\
			\hspace*{1mm}$<\!\phi_{v}, \phi_{\bar{v}}\!\!>$ = {\sc separate}($\bar{\phi}$,  $v$) \hspace*{18mm}// {\color{blue}$\phi_{\bar{v}} \star (\phi_{v} \wedge e)  $} \\
			\hspace*{1mm} let $v_2$ be the variables in $(\phi_{v} \wedge e)$ \\
			\hspace*{1mm}$<\!\Psi_{v}, \Psi_{\bar{v}}\!\!>$ = {\sc separate}($\Psi$, $v_2$) \hspace*{16mm}// {\color{blue}$\Psi_{\bar{v}} \star (\phi_{v} \wedge e) \wedge \Psi_{\bar{v}} \star \Psi_{v}$} \\
			\hspace*{1mm} \textbf{if} ($\Psi_{v} \equiv true$) \textbf{return} $\Bar{\Bar{\phi}} \equiv \Psi_{\bar{v}}$ \\
			\hspace*{1mm} \textbf{return} $\Bar{\Bar{\phi}} \equiv \phi_{v} \wedge \Psi_{\bar{v}}$; \\
			\}
			\\
			\\
			$\core(\gamma,\Psi)$ \{ \hspace*{10mm}// {\color{blue} Assume $\gamma$ is of the form  $C_1 \wedge \ldots \wedge C_{n}$}\\
			\hspace*{1mm} \textbf{for} $i = 1 .. n$ \textbf{do}\\ 
			\hspace*{6mm}  \textbf{if}~$\gamma - C_i \models \Psi$ \textbf{then} $\gamma \equiv \gamma - C_i$\\
			\hspace*{1mm} \textbf{return} $\gamma$; \\
			\}
			\\
			\\
			$\separate(\gamma, v)$ \{ \hspace*{7mm} // {\color{blue}Assume $\gamma$ is of the form $C_1 \wedge \ldots \wedge C_{n}$}\\
			\hspace*{5mm}  \textbf{loop} \{ \\
			\hspace*{5mm}  $oldv$ = $v$; \\
			\hspace*{5mm} \textbf{for} $i = 1 .. n$ \{ \\ 
			\hspace*{10mm}  \textbf{let} $v_i$ be the variables in $C_i$ \\
			\hspace*{10mm}  \textbf{if}~$v_i \cap v \neq \emptyset$ \textbf{then} \{\\
			\hspace*{15mm}  $v = v \cup v_i$ \\
			\hspace*{15mm}  $\gamma_v = \gamma_v \wedge C_i$ \\
			\hspace*{10mm}  \} \textbf{else} $\gamma_{\bar{v}} = \gamma_{\bar{v}} \wedge C_i$ \\
			\hspace*{5mm} \textbf{until}~ ($v$ == $oldv$);   \}  \hspace*{18mm}// {\color{blue}fixpoint condition}\\
			\hspace*{5mm} \textbf{return} $<\gamma_{v}, \gamma_{\bar{v}}\!\!>$; \\
			\}
			\vspace*{-4mm}
			\caption{The Abduction Algorithm}
			\label{algm:abduction}
		\end{minipage}
	}
	\vspace{-4mm}
\end{figure}

\subsection{The Abduction Algorithm}

%
%
%
%
%
%

Consider the abduction algorithm in Fig. \ref{algm:abduction}.  It uses
two main functions.  The first is \core($\gamma$, $\Psi$).  The algorithm
for the \core~ function is presented in Fig. \ref{algm:abduction}. 
%
It is called with the arguments $\phi \wedge e$ and $\Psi$.  Note that,
as explained in the previous section, we have $\phi \wedge e \models \Psi$.
The \core~ function eliminates the constraints in $\phi \wedge e$ that
are not needed for implying $\Psi$. The result is stored in $\bar{\phi}$.

The second function is \separate($\gamma, v$). 
The function partitions $\gamma$ 
into two formulas $\gamma_{v}$ and  $\gamma_{\bar{v}}$. The algorithm
stores in $\gamma_{v}$, any constraint which is either containing a variable 
from $v$ or contains a variable which is appearing in another constraint in
$\gamma_{v}$. The remainder of the constraints are stored in $\gamma_{\bar{v}}$.

The important property that is required concerns a notion of \emph{separation}.
We first define this property in general.

\begin{definition}[Separation] \label{def:separation}

Consider a first-order formula $\Psi_1 \wedge \Psi_2$
where the variables of $\Psi_1$ and $\Psi_2$ are $v_1$ and $v_2$ respectively,
and $v_1$ and $v_2$ are disjoint.  Let $\Psi_1\!\!\mid_{v_1}$
denote the  projection of $\Psi$ onto $v_1$.
We say that $\Psi_1$ and $\Psi_2$ are \emph{separate}, written $\Psi_1 \star \Psi_2$, if
$\Psi_1 \wedge \Psi_2 \equiv \Psi_1\!\!\mid_{v1} \wedge \Psi_2\!\!\mid_{v2}$.
$\Box$

\ignore{
Consider two first-order formulas $\Psi_1$ and $\Psi_2$; their set of variables: $v_1$ and $v_2$; and their set of
solutions $Sol(\Psi_1)$ and $Sol(\Psi_2)$. 
We define $\Psi_1$ and $\Psi_2$  as separate (denoted by {\sc separate}($\Psi_1,\Psi_2$)), if, (1) there exist no shared variables between them: $v_1 \cap v_2 \equiv \emptyset$; and (2) the set of the solutions of $\Psi_1 \wedge \Psi_2$ over $v_1 \cup~v_2$
is equal to the Cartesian product of their solutions: $Sol(\Psi_1) \times Sol(\Psi_2)$. 
}

\end{definition}

We next explain the separation property that we require from the $\separate(\gamma, v)$
function, which returns $\gamma_{v}$ and  $\gamma_{\bar{v}}$.
This property is: $\gamma_v \star \gamma_{\bar{v}}$.

We now return to the abduction function.
Note that it calls \separate() twice.
In the first call, $\phi$ is partitioned into two: $\phi_v$ and $\phi_{\bar{v}}$. 
Since, \separate()~ is provided with the variables of $e$, the critical property that we would have is that 
the set of variables in $(\phi_v \wedge e)$ and $\phi_{\bar{v}}$ are disjoint.
Moreover $(\phi_v \wedge e) \star  \phi_{\bar{v}}$  also holds.

In the next call, $\Psi$ is partitioned into two. The difference 
here is that the \separate()~ function is provided with the variables of 
$(\phi_v \wedge e)$. The critical property that we would have is that 
the set of variables in $\Psi_v$ and $\Psi_{\bar{v}}$ are disjoint.
Moreover, the set of variables in $(\phi_v \wedge e)$  and $\Psi_{\bar{v}}$ 
are disjoint too. In the end, after the two calls to \separate(), the following holds.

\vspace*{3mm}
\hspace*{14mm} $(\phi_v \wedge e) \star \phi_{\bar{v}}$, $\Psi_v \star \Psi_{\bar{v}}$
$(\phi_v \wedge e) \star \Psi_{\bar{v}}$ and $\phi_v \star \Psi_{\bar{v}}$ \hfill(1)
\vspace*{3mm}

Finally, $\phi_v \wedge \Psi_{\bar{v}}$ is returned as a generalization
of $\phi$ such that $\Psi_{\bar{v}} \wedge \phi_v  \wedge e \models \Psi$
holds. The special case is when  $\Psi_{v}$ contains no 
constraints, i.e. it is $True$. In this case, $\Psi_{\bar{v}}$ contains
all of the constraints in $\Psi$. Since  $\Psi \wedge e \models \Psi$ holds 
obviously  $\Psi$ is returned as a generalization of $\phi$.

\noindent
We now outline a proof that the abduction algorithm is correct.
We first require some helper results.

\begin{corollary}[Frame Rule] \label{col:frame} 
Consider three first-order formulas $A$, $B$, and $C$. \\
Then $A \models B$ iff $C \star A \models C \star B$.
$\Box$
\end{corollary}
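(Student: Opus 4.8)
The plan is to first unpack the $\star$ notation so the statement becomes a clean entailment fact. By Definition~\ref{def:separation}, writing $C \star A$ presupposes that $C$ and $A$ are separate, i.e.\ their variable sets $v_C$ and $v_A$ are disjoint and $C \wedge A \equiv C\!\!\mid_{v_C} \wedge A\!\!\mid_{v_A}$; since each formula is equivalent to its projection onto its own variables, $C \star A$ is simply the conjunction $C \wedge A$ whose two conjuncts share no variables (and likewise for $C \star B$). So the corollary reduces to proving $A \models B$ iff $C \wedge A \models C \wedge B$ under the standing assumption that $v_C$ is disjoint from $v_A$ and from $v_B$.

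For the forward direction I would use nothing more than monotonicity of conjunction under entailment: if $A \models B$, then any model $m$ of $C \wedge A$ satisfies $A$, hence $B$, and also satisfies $C$, so $m \models C \wedge B$. Separation plays no role in this direction.

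The backward direction is where separation earns its keep, and it is the step I would carry out most carefully. Assume $C \wedge A \models C \wedge B$ and let $m$ be any assignment with $m \models A$. I would pick a model $m_C$ of $C$ and define $m'$ to agree with $m_C$ on $v_C$ and with $m$ on every other variable. Because $v_C$ is disjoint from $v_A$, the formula $A$ reads the same values under $m'$ as under $m$, so $m' \models A$; and $m' \models C$ by construction, giving $m' \models C \wedge A$. The hypothesis then yields $m' \models C \wedge B$, hence $m' \models B$; and because $v_C$ is also disjoint from $v_B$, $B$ reads the same values under $m'$ as under $m$, so $m \models B$. Thus $A \models B$. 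The one ingredient beyond disjointness of variables is the existence of $m_C$, i.e.\ that $C$ be satisfiable.

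The main obstacle I anticipate is exactly that satisfiability side condition. If $C$ is unsatisfiable, then $C \star A$ and $C \star B$ both collapse to $\false$, the right-hand entailment holds vacuously, yet $A \models B$ can fail, so the ``iff'' breaks. I would therefore either invoke the paper's standing convention that the contexts and states under consideration are feasible (Section~\ref{sec:symex}), or fold satisfiability of $C$ into the well-definedness of $\star$, stating this assumption explicitly. Granting that, the corollary follows directly from Definition~\ref{def:separation} together with the model-gluing argument above; no appeal to the stronger four-way separation recorded in property~(1) is required.
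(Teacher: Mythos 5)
Your proof is correct and follows the same underlying idea as the paper's: extend a valuation on the variables of $A$ and $B$ to the variables of $C$ and exploit variable disjointness to transfer truth values back and forth. The paper argues the backward direction by contradiction, taking an arbitrary extension $\theta'$ of a valuation $\theta$ with $A\theta$ true and $B\theta$ false, and then splitting on whether $(C\wedge A)\theta'$ is true or false; in the second case it claims a contradiction with $A\theta$ being true, which does not actually follow, since $A\theta'$ can be true while $C\theta'$ is false. Your version repairs exactly this weak point by \emph{choosing} the extension to satisfy $C$ (gluing a model $m_C$ of $C$ onto $m$ over the disjoint variable sets), which makes the case split unnecessary and the argument go through cleanly. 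In doing so you correctly surface the one hypothesis the paper leaves implicit: $C$ must be satisfiable, otherwise $C\star A\models C\star B$ holds vacuously while $A\models B$ may fail, so the ``iff'' is false as literally stated. Flagging that side condition (or folding it into the well-definedness of $\star$) is the right call; in the paper's use of the corollary the formulas involved come from feasible contexts, so the condition is met, but your proof is the more rigorous of the two.
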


\begin{proof}
That $A \models B$ implies $C \star A \models C \star B$ is obvious.
To prove $C \star A \models C \star B$ implies $A \models B$,
we proceed by contradiction. Assume $A\theta$ is true while $B\theta$ is false
for some valuation $\theta$ on the variables of $A$ and $B$.
We can now extend $\theta$ to include the variables $C$ - call this evaluation $\theta'$.
So $(C \wedge A)\theta'$ implies $(C \wedge B)\theta'$.
In case $(C \wedge A)\theta'$ is true, $B\theta'$ must be true.  
This contradicts that assumption that $B\theta$ is false
because $\theta$ and $\theta'$ agree on the variables of $B$.
Similarly, in case $(C \wedge A)\theta'$ is false, then this contradicts the assumption
that $A\theta$ is true
because $\theta$ and $\theta'$ agree on the variables of $A$.
\end{proof}

\ignore{
\begin{corollary}[Frame Rule 3] \label{col:frame3} 
Consider four first-order formulas $A$, $B$, $C_1$ and $C_2$. \\
If $C_1 \models C_2$ and $C_1 \star A \models C_2 \star B$, then $A \models B$.
$\Box$
\end{corollary}

\begin{proof}
We have $C_1 \star A \models C_1 \star B$ from the Frame Rule.
We have $C_1 \star B \models C_2 \star B $ from the fact $C_1 \models C_2$.
Thus $C_1 \star A \models C_2 \star B$. 
\end{proof}
}

\begin{corollary}[Frame Rule 2] \label{col:frame2} 
Consider three first-order formulas $A$, $B$, $C$. \\
If $A \star B \models C$, and $A \star C$, 
then $B \models C$.
$\Box$
\end{corollary}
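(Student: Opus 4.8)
The plan is to derive this corollary directly from the Frame Rule (Corollary~\ref{col:frame}) rather than re-argue it semantically from scratch. The idea is to manufacture, out of the hypothesis $A \star B \models C$, the stronger-looking entailment $A \star B \models A \star C$, and then cancel the common frame $A$ from both sides by appealing to Corollary~\ref{col:frame}.

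First I would note that $A \star B$ is, as a formula, just the conjunction $A \wedge B$, so trivially $A \star B \models A$. Combining this with the hypothesis $A \star B \models C$ yields $A \star B \models A \wedge C$. Now the second hypothesis $A \star C$ does the work: because the variables of $A$ and $C$ are disjoint, $A \wedge C$ is literally the separated conjunction $A \star C$, and hence $A \star B \models A \star C$.

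The second step is to apply Corollary~\ref{col:frame} with the frame instantiated to $A$, the antecedent to $B$, and the consequent to $C$. The corollary gives $B \models C$ iff $A \star B \models A \star C$, and the two separations $A \star B$ and $A \star C$ that it presupposes are exactly the two hypotheses of the present statement. Since the right-hand side has just been established, the left-hand side $B \models C$ follows, which is what we want.

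The subtlety I expect to be the main obstacle is the hidden consistency assumption on the frame $A$. If $A$ were unsatisfiable, then $A \star B$ would be unsatisfiable and $A \star B \models C$ would hold vacuously, while $B \models C$ could still fail; the same gap is already latent in Corollary~\ref{col:frame}, whose backward direction tacitly picks a valuation making the frame true. So the cleanest route is to lean on Corollary~\ref{col:frame} and to carry the consistency of $A$ as a standing assumption, which is natural here since the formulas arise from feasible states in the abduction algorithm. Should a self-contained semantic argument be preferred instead, I would take any valuation $\theta$ with $B\theta$ true, extend it on the disjoint variables of $A$ by a satisfying assignment of $A$, and then read off $C\theta$ from $A \star B \models C$ using that the variables of $C$ are disjoint from those of $A$; but this alternative makes the consistency of $A$ unavoidable and fully explicit.
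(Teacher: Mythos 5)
Your proposal is correct and follows essentially the same route as the paper's own proof: both establish $A \star B \models A \star C$ from the hypotheses and then cancel the frame $A$ via the Frame Rule (Corollary~\ref{col:frame}). Your additional observation about the hidden consistency requirement on the frame $A$ (the vacuous case when $A$ is unsatisfiable) is a genuine subtlety that the paper's one-line proof glosses over, but it does not change the argument.
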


\begin{proof}
We have $A \star B \models C$. Since $A \star C$, we also have $A \star B \models A \star C$. 
By the Frame Rule, we have $B \models C$.
\end{proof}

%
%
%
%
%
%
%

\vspace{-3mm}
\begin{theorem}
	The abduction algorithm in Fig. \ref{algm:abduction}, when
	input with the constraints $\phi$, $e$ and $\Psi$ where $\phi \wedge e \models \Psi$, 
    outputs $\Bar{\Bar{\phi}}$ a generalization over $\phi$ 
    such that 	$\phi \wedge e \models \Bar{\Bar{\phi}} \wedge e \models \Psi$.
	$\Box$
\end{theorem}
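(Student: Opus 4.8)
The plan is to prove the two implications in the chain $\phi \wedge e \models \Bar{\Bar{\phi}} \wedge e \models \Psi$ separately, after first recording the two invariants maintained by \core{}. Since \core{} deletes a conjunct $C_i$ only when the remainder still entails $\Psi$, its output $\bar{\phi}$ satisfies (i) $\phi \wedge e \models \bar{\phi}$, because $\bar{\phi}$ is a sub-conjunction of $\phi \wedge e$, and (ii) $\bar{\phi} \models \Psi$; writing $\bar{\phi} = \phi_v \wedge \phi_{\bar{v}}$ and re-adding $e$ (which can only strengthen) yields $\phi_v \wedge e \wedge \phi_{\bar{v}} \models \Psi$. I will then freely use the four separation facts certified by the two calls to \separate{} — namely $(\phi_v \wedge e) \star \phi_{\bar{v}}$, $\Psi_v \star \Psi_{\bar{v}}$, $(\phi_v \wedge e) \star \Psi_{\bar{v}}$ and $\phi_v \star \Psi_{\bar{v}}$ — together with the Frame Rules (Corollaries~\ref{col:frame} and~\ref{col:frame2}). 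Throughout I take the output as $\Bar{\Bar{\phi}} = \phi_v \wedge \Psi_{\bar{v}}$ and dispatch the degenerate branch $\Psi_v \equiv true$ at the end.

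For the \emph{generalization} (left) implication $\phi \wedge e \models \Bar{\Bar{\phi}} \wedge e$, since $e$ is common to both sides it suffices to show $\phi \wedge e \models \phi_v$ and $\phi \wedge e \models \Psi_{\bar{v}}$. The first holds because $\phi_v$ is a sub-conjunction of $\bar{\phi}$, which by invariant~(i) is entailed by $\phi \wedge e$. The second holds because $\phi \wedge e \models \Psi$ is given and $\Psi \models \Psi_{\bar{v}}$, as $\Psi_{\bar{v}}$ is a sub-conjunction of $\Psi$. This direction is routine.

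The \emph{soundness} (right) implication $\Bar{\Bar{\phi}} \wedge e \models \Psi$ is the substantive part. Using $\Psi_v \star \Psi_{\bar{v}}$ I split $\Psi$ as $\Psi_v \wedge \Psi_{\bar{v}}$ and prove each conjunct. That $\phi_v \wedge \Psi_{\bar{v}} \wedge e \models \Psi_{\bar{v}}$ is immediate. For the other conjunct I aim to reduce $\phi_v \wedge \Psi_{\bar{v}} \wedge e \models \Psi_v$ first to $\phi_v \wedge e \models \Psi_v$ (dropping the frame $\Psi_{\bar{v}}$ from the antecedent), and then to obtain the latter from $\phi_v \wedge e \wedge \phi_{\bar{v}} \models \Psi_v$ (a consequence of~(ii)) by \emph{deleting} the frame $\phi_{\bar{v}}$. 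This deletion is exactly Frame Rule~2 (Corollary~\ref{col:frame2}) instantiated with $A = \phi_{\bar{v}}$, $B = \phi_v \wedge e$, $C = \Psi_v$: its first hypothesis $\phi_{\bar{v}} \star (\phi_v \wedge e) \models \Psi_v$ is in hand, so the conclusion $\phi_v \wedge e \models \Psi_v$ follows \emph{provided} its second hypothesis $\phi_{\bar{v}} \star \Psi_v$ holds. The degenerate case $\Psi_v \equiv true$ collapses $\Psi$ to $\Psi_{\bar{v}}$, for which the returned $\Bar{\Bar{\phi}} = \Psi_{\bar{v}}$ makes $\Psi_{\bar{v}} \wedge e \models \Psi$ trivial.

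The main obstacle is precisely the side condition $\phi_{\bar{v}} \star \Psi_v$, which is \emph{not} among the four separations the two \separate{} calls directly certify. The difficulty is asymmetric: $\phi_{\bar{v}}$ is declared irrelevant by reachability from the variables of $e$ \emph{through} $\bar{\phi}$, whereas $\Psi_v$ is grown by reachability from $v_2$ \emph{through} $\Psi$, so a variable unreachable in the first sense can still be pulled into $\Psi_v$ in the second and re-intersect $\phi_{\bar{v}}$. Hence the crux of the proof is to argue that the two \separate{} passes compose — that no variable placed in $\phi_{\bar{v}}$ can reappear in $\Psi_v$ — thereby discharging $\phi_{\bar{v}} \star \Psi_v$. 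I expect this to rest on the fixpoint (connected-component) property underlying Definition~\ref{def:separation} and on the fact that \core{} has already pruned $\bar{\phi}$ to conjuncts genuinely needed for $\Psi$; establishing it cleanly, or pinning down the well-formedness hypothesis on the variables of $\Psi$ under which it is guaranteed, is where the real work lies.
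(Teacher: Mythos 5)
Your proposal follows the paper's proof almost line for line: the same decomposition of the target entailment into the two conjuncts $\Psi_v$ and $\Psi_{\bar v}$, the same reduction of the hard conjunct to $\phi_v \wedge e \models \Psi_v$ via Frame Rule~2 instantiated with $A=\phi_{\bar v}$, $B=\phi_v\wedge e$, $C=\Psi_v$, and the same recombination at the end. You are in fact more explicit than the paper on the ``generalization'' half $\phi\wedge e\models\Bar{\Bar{\phi}}\wedge e$, which the paper's derivation chain never addresses at all.

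The one step you leave open --- the side condition $\phi_{\bar v}\star\Psi_v$ --- is exactly the step the paper does not establish either: its line $(b')$ justifies itself ``by $\phi_{\bar{v}} \star \Psi_{v}$ and Frame Rule 2,'' but that separation is absent from the list (1) of facts certified by the two calls to \separate{}, and your diagnosis of why it does not follow from them is correct. Indeed it can fail: take $\phi\equiv(y=2)$, $e\equiv(x>0)$, $\Psi\equiv(x+y>0)$. Then \core{} removes nothing, the first call to \separate{} yields $\phi_v\equiv(x>0)$ and $\phi_{\bar v}\equiv(y=2)$, the second yields $\Psi_v\equiv(x+y>0)$ and $\Psi_{\bar v}\equiv true$; all four separations in (1) hold, yet $\phi_{\bar v}$ and $\Psi_v$ share the variable $y$, and the returned $\Bar{\Bar{\phi}}\equiv(x>0)$ does not satisfy $\Bar{\Bar{\phi}}\wedge e\models\Psi$. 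So the missing piece is not an oversight in your reading of the paper but an additional hypothesis (or a strengthening of \separate{}, e.g.\ growing $\phi_v$ by reachability through the constraints of $\Psi$ as well) that the theorem needs and the published proof silently assumes. Your attempt is as complete as the paper's own argument, and your isolation of where the real work lies is the most valuable part of it.
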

\vspace{-3mm}

\begin{proof}
Let $\Bar{\phi}$ be {\sc core}($\phi \wedge e,  \Psi$).  We prove the general case:

\vspace*{3mm}
\hspace*{-3mm}$\begin{array}{lll}
 & \phi \wedge e \models \Psi & \mbox{Given} \\
 \longrightarrow & \Bar{\phi}  \wedge e \models \Psi, & \mbox{Correctness of CORE()} \\
 
  \longrightarrow & \phi_v \wedge \phi_{\bar{v}} \wedge e \models \Psi_v \wedge \Psi_{\bar{v}}, & 
             <\!\phi_{v}, \phi_{\bar{v}}\!\!> = \separate(\bar{\phi},  v)~\mbox{and} <\!\Psi_{v}, \Psi_{\bar{v}}\!\!> = \separate(\Psi, v_2), \\
			& &\mbox{where} ~v~ \mbox{and} ~v_2~\mbox{are the set of variables in} ~e ~\mbox{and} ~(\phi_{v} \wedge e). \\
		
  \longrightarrow & \phi_{\bar{v}} ~\star{}~ (\phi_v \wedge e) \models \Psi_{\bar{v}} \star{} \Psi_v & 
            \mbox{by the separation property (1).} \\
            
     \longrightarrow & (a)~ \phi_{\bar{v}} ~\star{}~ (\phi_v \wedge e) \models \Psi_{\bar{v}}  &   \\
   & (b)~ \phi_{\bar{v}} ~\star{}~ (\phi_v \wedge e) \models  \Psi_v &  \\         
   
        \longrightarrow & (a')~  \phi_{\bar{v}} \models \Psi_{\bar{v}}  &  (\phi_v \wedge e) \star \Psi_{\bar{v}} \mbox{ and Frame Rule 2}   \\
   & (b')~  (\phi_v \wedge e)  \models  \Psi_{v} &  \mbox{by } \phi_{\bar{v}} \star \Psi_{v} \mbox{ and Frame Rule 2} \\
     
  
  \longrightarrow &  \Psi_{\bar{v}} \wedge (\phi_{v} \wedge e) \models  \Psi_{\bar{v}} \wedge \Psi_{v}, 
  &  \mbox{by } (\phi_v \wedge e) \star \Psi_{\bar{v}} \mbox{ and } \Psi_v \star \Psi_{\bar{v}}
  \mbox{ and the Frame Rule on}~ (b') \\
    \longrightarrow & \bar{\bar{\phi}} \wedge e \models \Psi & 
\end{array}$

%
%

\end{proof}

\begin{wrapfigure}[16]{r}{3.3in}
	\vspace{-3mm}
	\begin{center}
		\vspace*{-3mm}
		\includegraphics[scale=0.65]{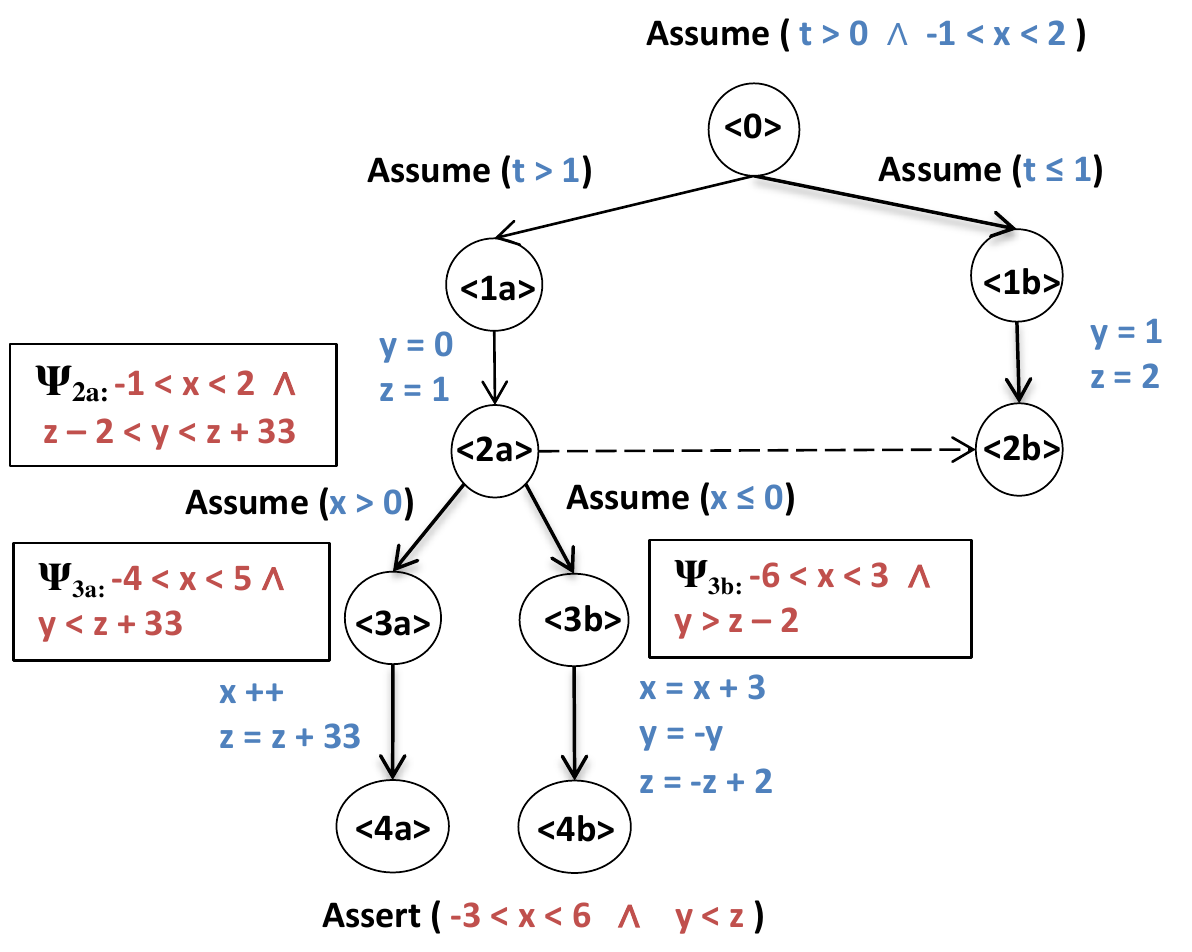}
		\vspace{-4mm}
		\caption{Example with Interpolation}\label{ex2}
	\end{center}

\end{wrapfigure}

In summary, the rule (5b) from Fig \ref{algm:conjunctive} performs (a) removing a constraint that
already existed, or (b) transforming an existing constraint with
another existing constraint.  Step (a) is implemented via an
unsat-core method.  Step (b) on the other hand, provides a mechanism
for \emph{backward} reasoning, computing an approximation of the
weakest precondition.  That is, the interpolant is (partly) composed
of constraints that come from the postcondition in a \emph{bottom-up
manner}, and not from the constraint of the current symbolic state,
which is from a \emph{top-down manner}.

A naive implementation of the abduction algorithm will not be sufficiently practical.
We will describe how we implement the algorithm in Section  \ref{sec:impl-remark}.

We now exemplify the concepts behind the \backprop{} algorithm in the following synthetic  example.  
We attempt to show that first, the \backprop{} algorithm can generate a non-trivial interpolant;
and second that the interpolant
generated by the unsatisfiability core method is not an ideal solution, i.e. 
it is less general as compared to the \backprop{} algorithm.

\begin{example}\label{ex77}
	In Fig. \ref{ex2}, we depict the full SET of a program explored by DSEI.
	Note that program points are denoted by numbers,
	e.g. \pp{2}, and we attach small letters to distinguish different
	encounters of the same program point, e.g. \pp{2a}.
	Assume, $t$ and $x$ are symbolic variables and $y$ and $z$ are program variables. 
	We are attempting to prove the postcondition $-3<x<6 \wedge y<z$.
	
	The left most path is traversed to \pp{4a} and since it is safe, 
    an interpolant $\Psi_{3a}$ is generated at \pp{3a} using rule (2b): $-4 < x < 5~\wedge~y < z + 33$. 
	Moving now to the path \pp{1} \pp{2a} \pp{3b} \pp{4b}, the interpolant generated at \pp{3b}
	using rule (2b) would be $\Psi_{3b}$: $-6 < x < 3~\wedge~y > z - 2$.

    We now show how to propagate $\Psi_{3a}$ and $\Psi_{3b}$ to obtain $\Psi_{2a}$.
    Consider first $\Psi_{3a}$.
	At \pp{2a}, the guard in question is $assume(x > 0)$.
	We apply rule (5b) and the \abduction{} function. First, 
	$\bar{\phi}$ would be $-1 < x < 2~\wedge~y = 0~\wedge~z = 1$,
	which still implies $\Psi_{3a}$. Next, 	
    since the guard has just the variable $x$,
	$\phi_{x}$ is  $-1 < x < 2$, and $\phi_{\bar{x}}$ is $y = 0 \wedge  z = 1$. 
    Finally, $\Psi_{x}$ would be $-4 < x < 5$ and $\Psi_{\bar{x}}$ is $y < z + 33$.
    In other words, the first formula $\phi_{x}$  is ``in the frame of $x$'', while
    $\phi_{\bar{x}}$ is in the ``anti-frame''.
	Thus the abduction formula $\Psi_{2a}$ is $-1 < x < 2 \wedge y < z + 33$, which is obtained
    as a combination of the incoming context from the top (which implies $-1 < x < 2$) and
    the formula $\Psi_{3a}$ which come from the bottom.
    Repeating this method for $\Psi_{3b}$, our abduction algorithm produces another 
    interpolant: $-1 < x < 2 \wedge y > z - 2$.
    Conjoining these two interpolants finally gives an interpolant $\Psi_{2a}$.
\ignore{	
	By intersecting the two abduction formulas, we obtain the \emph{tree interpolant} at \pp{2a}: $-1 < x < 2 \wedge z-2 < y < z + 33$. 
	Note that this is more general compared to the constraint store at \pp{2a} and replacing it with the constraint store would still maintain safety. 
}
	
	Now, moving to \pp{2b} the constraint store implies the generated interpolant
	from \pp{2a}. Hence, node \pp{2b} is subsumed (a.k.a. pruned)
	and its safety is inferred from the computed interpolant.
	Finally, we note that a classic Unsat-core interpolant would be $-1 < x < 2~\wedge~y = 0~\wedge~z = 1$ 
	which obviously would not be able to subsume node \pp{2b}.
	$\Box$
\end{example}

\begin{wraptable}[9]{r}{2.6in}
	\vspace{-4mm}
	\scriptsize
	\centering
	\begin{tabular}{|r|r|r|r|r|r|} 
		\hline
		\multicolumn{1}{|c|}{\multirow{2}{*}{\textbf{N}}} & \multicolumn{2}{c|}{\textbf{CBMC}} &
		\multicolumn{1}{c|}{\textbf{LLBMC}}  &
		\multicolumn{2}{c|}{\textbf{Our algorithm}} \\ 
		\cline{2-6}
		
		\multicolumn{1}{|c|}{}    &
		\multicolumn{1}{c|}{\textbf{Time}} &
		\multicolumn{1}{c|}{\textbf{Clauses}} & 
		\multicolumn{1}{c|}{\textbf{Time}}  & 
		\multicolumn{1}{c|}{\textbf{Time}} & 
		\multicolumn{1}{c|}{\textbf{SET Size}}  \\ \hline
		
		20  & 0.11   & 10549     & 0.02    & 0.04   & 470 \\ 
		\hline
		100 & 66.47  & 53509     & 2.66    & 0.42   & 2310\\ 
		\hline
		400 & 2460.98& 214609    & 750.41  & 7.83   & 9210\\ 
		\hline
		500 & \multicolumn{2}{c|}{$\infty$}  &
		1538.38 & 11.19  & 11510 \\ 
		\hline
		1000& \multicolumn{2}{c|}{$\infty$} &
		\multicolumn{1}{c|}{$\infty$}     & 53.85  & 23010        \\ 
		\hline
		2000& \multicolumn{2}{c|}{$\infty$} &
		\multicolumn{1}{c|}{$\infty$}     & 327.68 & 46010        \\
		\hline
	\end{tabular}
	\caption{Running Example in Fig. \ref{fig:example} on CBMC, LLBMC and Our algorithm, $\infty$
		indicates timeout (3600 seconds)}
	\label{table:motivEx}
\end{wraptable}

We now reconsider the example in Fig. \ref{fig:example} and show
how the above interpolation algorithm
would deal with this example in a great way.
This time we do not constrain the precondition $Pre$ (i.e. the $b[i]$ can freely
take any binary values), and the postcondition $Post$ is 
$-N \leq x[1]+x[2]+\cdots+x[N] \leq N$ for various $N$.  
See Table \ref{table:motivEx} where we compare our algorithm 
 against CBMC \cite{cbmc} and
LLBMC \cite{llbmc}.  (KLEE only manages $N = 24$ within timeout.)
The reason for the vast superiority of our
algorithm is that at any level $i$ in the traversal, we compute 
just \emph{one} interpolant:

\vspace*{-4mm}
$$-N+i \leq x[1]+x[2]+\cdots+x[i] \leq N-i$$
\vspace*{-4mm}

\noindent
which subsumes all states at this level which are encountered later.
In other words, we have ``perfect'' subsumption.
Note that our search tree size is linear in $N$.


\subsection{Remarks on Implementation}
\label{sec:impl-remark}

We call our implementation \tracerx{}. It is implemented on top of
KLEE~\cite{cadar08klee}. The main addition to KLEE is the
implementation of interpolation. DSE with interpolation was
implemented before in \tracer{}
~\cite{jaffar11unbounded,jaffar12tracer}.  \tracerx{}
improves over \tracer{} by building on top of KLEE, and 
an enhanced interpolation algorithm which makes it more
efficient, and able to handle LLVM~\cite{llvm17}, including C/C++
programs.

\tracerx{} produces a new data structure called the \emph{subsumption table}.
This persistent structure is where the
interpolants, which contain the subset of the path condition as well
as the subset of memory regions are stored. When a new symbolic state
is encountered, this table is consulted to check if the new state is
subsumed by a record in the table, and hence its traversal need not
continue.  An entry in the table is created whenever KLEE removes a
state from its worklist, which we assume to mean that KLEE has
finished traversing the subtree originating from that state.

Now we explain how we  implement the \core{}() algorithm in Fig. \ref{algm:abduction}.
In Section \ref{sec:intro} we explained how an SMT solver 
can use the optimization method of (CDCL) 
\cite{marques1999grasp}.
More specifically, a core step in SMT solving is to involve a ``theory solver''
to solve a conjunction of constraints written for a particular theory.
The CDCL method requires that the solver not only decides the satisfiability
of the given conjunction.  In case the result is ``unsatisfiable'',
the solver also indicates which portion of the conjunction is required 
to keep in its unsatisfiability proof.  This is known as the \emph{unsatisfiability core}
of the conjunction.  In its pure form, the core only contains constraints that
were \emph{already encountered}. Essentially, we employ the \emph{unsatisfiability core}
technology of the SMT solvers  for a more efficient implementation of the \core{}() function. 

Next, we explain how we efficiently implement the \separate{}() algorithm in Fig. \ref{algm:abduction}.
We have employed a light-weight syntactic partitioning to approximate 
 the algorithm in the \separate{}() function.

Now, we will briefly explain how we extend the interpolation 
algorithm to other LLVM instructions. For this, we keep some extra information in the interpolant or the context.
First, we elaborate more on the operational semantics of the {\tt malloc} and {\tt free} instructions.
The key difference in how we deal with the malloc instruction, compared to KLEE,
 is that instead of using a concrete address returned by a system call to malloc, 
we use a fresh symbolic variable. We also add into the path condition the constraints
specifying that the newly-allocated region is separated from the domain of the old heap store 
and the new domain of the new heap store includes both of them. 

We also have special treatment for the array operation and the {\tt GEP} instruction.
 As an example, suppose the transition were  {\tt a[i] = 5}
and $\Psi$ was the formula  {\tt *p = 5}.
We have extended rule (2b) from Fig \ref{algm:conjunctive}
to return  {\tt <M, i, 5>[p] = 5} as the interpolant.
This  formula is to be understood in the array theory.
That is, {\tt M} is a distinguished array variable representing the
(entire) heap, {\tt <M, i, 5>} is an array expression representing
the array obtained from {\tt M} after the element 5 has been inserted
into location {\tt i}.  Finally, {\tt <M, i, 5>[p]} refers to
the $p^{th}$ element of this array expression.

In order to extend the interpolation algorithm to perform sound inter-procedural subsumption,
we store the call stack in an efficient way with an interpolant. This stored call stack is later checked 
with the call stack at the subsumption point and subsumption is only
allowed if the call stacks are identical.

\ignore{ 99999

\vspace{-5mm}
\begin{figure}[!htb]
	\small{
		\begin{lstlisting}[mathescape]
		if ( x > 0 ) y = x + 99; else y = 99 - x;
		if ( a > 0 ) b = 0; else b = 1; 
		if ( c > 0 ) z = y + 2; else z = y + 1; 
		assert(x < z - 2)
		\end{lstlisting} 
	}
	\caption{A Sample Program}
	\label{fig:pruneCode}
	\vspace{-1mm}
\end{figure}
\vspace{-3mm}

\vspace{-5mm}
\begin{figure}[!htb]
	\centering
	\includegraphics[width=0.68\linewidth]{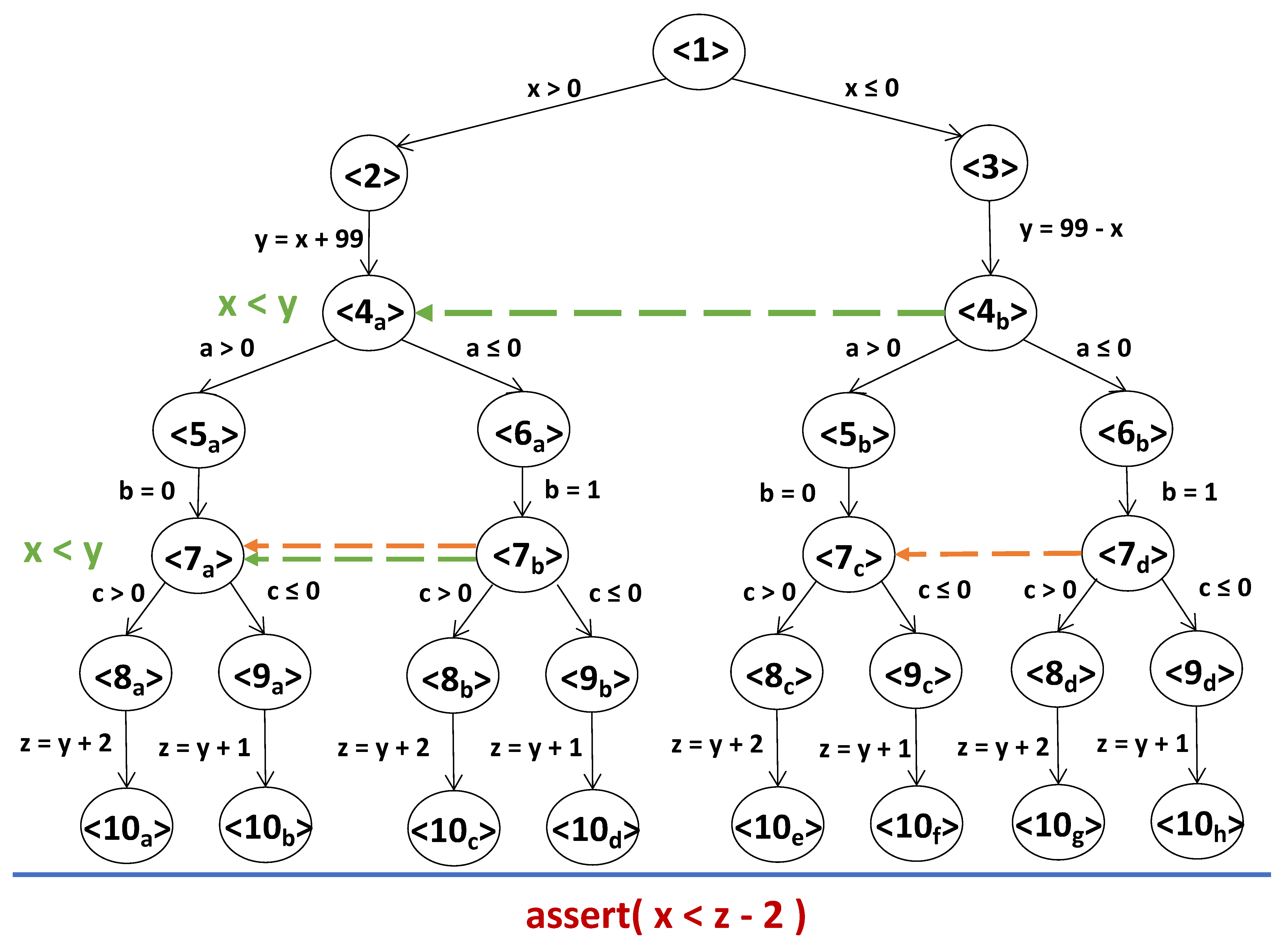}
	\caption{Exploration using SMT vs Interpolation}
	\label{fig:llbmc_vs_tx}
	\vspace{-3mm}
\end{figure}

Consider the program in Fig. \ref{fig:pruneCode}, and depictions of its SET in
Fig. \ref{fig:llbmc_vs_tx}.
Vanilla DSE such as KLEE would traverse the entire tree.
Using a BMC/SMT approach, the 8 paths would be represented by 8 formulas,
each of which is to be proven unsatisfiable.
For example, the leftmost path, to node \prpt{$10_a$}, is represented by the formula

\vspace*{1mm}
$x > 0 \wedge y = x + 99 \wedge a > 0 \wedge b = 0 \wedge c > 0 \wedge z = y + 2 \wedge
x \geq z - 2$
\vspace*{1mm}

\noindent
We now \emph{learn} from the proof of unsatisfiability that the subformulas
$a > 0 \wedge b = 0 \wedge c > 0$ can in fact be \emph{deleted} while maintaining
unsatisfiability.  I.e. the remaining formula is an ``unsatisfiability core'' of the 
original:

\vspace*{1mm}
$x > 0 \wedge y = x + 99 \wedge \fbox{$z = y + 2 \wedge x \geq z - 2$}$ \hfill (UC-1)
\vspace*{1mm}

\noindent
Now consider the path to \prpt{$10_b$} whose formula is

\vspace*{1mm}
$x > 0 \wedge y = x + 99 \wedge a > 0 \wedge b = 0 \wedge c \leq 0 \wedge z = y + 1 \wedge
x \geq z - 2$
\vspace*{1mm}

\noindent
This formula is also unsatisfiable, and contains the unsatisfiability core:

\vspace*{1mm}
$x > 0 \wedge y = x + 99 \wedge \fbox{$z = y + 1 \wedge x \geq z - 2$}$ \hfill (UC-2)
\vspace*{1mm}

\ignore{
	\noindent
	The above two formulas (UC-1) and (UC-2) imply that:
	
	\vspace*{1mm}
	$x > 0 \wedge y = x + 99 \wedge \fbox{$(z = y + 2 \vee z = y + 1)$} \wedge x \geq z + 3$ \hfill (3)
	\vspace*{1mm}
	
	\noindent
	is unsatisfiable.
}

\noindent
Next consider that subsequent traversal has reached \prpt{$7_b$}.
At this point the path constraints are 

\vspace*{1mm}
$x > 0 \wedge y = x + 99 \wedge a \leq 0 \wedge b = 1$
\vspace*{1mm}

\noindent
There are two key observations:

\begin{itemize}
	\item
	this formula contains a \emph{common subset} of (UC-1) and (UC-2), and
	namely $x > 0 \wedge y = x + 99$.
	Call the remaining parts of (UC-1) and (UC-2), indicated inside boxes above,
	F1 and F2 respectively.
	\item
	Any path constraint resulting from continuing traversal
	from \prpt{$7_b$} \emph{cannot avoid} containing F1 or F2.
\end{itemize}

\noindent
This means that we can \emph{prune} the traversal at this point \prpt{$7_b$}.

In Fig. \ref{fig:llbmc_vs_tx}, we have indicated via orange-colored arrows,
that a BMC/SMT approach would prune the subtrees 
\prpt{$7_b$} and \prpt{$7_d$}.

} 


\section{Experimental Evaluation}
\label{sec:imp_and_eval}

We used an Intel Core i7-6700 at 3.40 GHz Linux box (Ubuntu 16.04)
with 32GB RAM. The programs in Tables \ref{tab:alltargets},
 \ref{tab:hardtargets}, \ref{tab:supplm1m2}, and  \ref{tab:supplblock-1} (47 programs) are from
SV-COMP Verification tasks \cite{psyco} and The Rigorous Examination
of Reactive Systems Challenge (RERS) \cite{rers}. A large subset of
the test programs are industrial programs or have
been used in testing and verification competitions.  The raw experimental results can be
accessed at \cite{Artifacts}\footnote{Due to size limitations, we were
not able to upload the raw experimental results as a supplementary material. 
Instead we uploaded the blinded experimental results at \cite{Artifacts}. 
Please note that the size after decompressing is > 5 GB.}.

\vspace{2mm}
\noindent
\textbf{Benchmarks Tested:}
Our first set (\benchmark{psyco1} to \benchmark{psyco7}) is from SV-COMP 
verification tasks \cite{psyco}. These programs are generated by the PSYCO 
tool \cite{psycotool} which produces interfaces using a symbolic execution 
and active automata learning.  These programs contain complicated loops
and are hard to analyze. 

The second set is from RERS\cite{rers} (prefixed with ``P'' and ``m'' in the
tables). They are from RERS Challenge competition in years 2012, 2017, 
and 2019 (identified by `-R12' to `-R19' respectively). The programs 
identified with `P' are from the three different categories of the 2012 
competition \cite{rers}: 
1) easy/small, containing plain assignments;
2) medium/moderate, containing arithmetic operations; and
3) large/hard, containing array and data structure manipulation.
 
The programs `P3-R17*', `P2-T-R17*', and `P11-R17*' are from the LTL and Reachability 
problems of RERS 2017\cite{rers17} respectively. These programs are from the small and 
moderate size group and easy to hard categories. 
Similarly the `P*-R19' problems are from the Sequential Training Problem, RERS 2019 \cite{rers19}.
The programs `m34*' and `m217*' are from Industrial Training Problems RERS 2019 
and are divided into LTL, CTL, and Reachability Training Problems. Since, we have 
tested LTL problems from other  tracks, here  we focused on CTL and Reachability groups. 
These programs were the most difficult and complex programs in our experiment.
We tagged the CTL and Reachability groups with `-C' and `-R', and the Arithmetic and 
Data Structure groups with `-A' and `-D'. Most of the programs are originally unbounded 
and we have tested them  with different bounds (the program name is suffixed by the bound
e.g. -100 means the loop bound used was 100).

We performed two experiments.

\begin{itemize}
\item[$\bullet$] The main experiment is on penetration/verification. \\
This experiment runs each program using 
\emph{one target at a time}.
We then considered a subset of the original targets called \emph{hard targets}.
These are obtained by filtering out targets
which can be proved easily by state-of-the-art methods: vanilla symbolic execution
for reachable targets, and static analysis for unreachable targets.
We then reran the main experiment on hard targets only.

\item[$\bullet$] The supplementary experiment is on testing/coverage. \\
It is modeled after the TEST-COMP competition which has a ``bug
finding'' component, and a ``coverage'' component.  In the first part,
bug finding, the task is to identify \emph{one} target among all the
targets injected in a program (performed for both all targets and hard
targets).

\hspace*{5mm}In the second part, the overall objective is to measure code coverage.
More precisely,
we measured the coverage of \emph{basic blocks}. Each program is ran with the purpose 
of \emph{full} exploration (timeout 1 hour), reporting any \emph{memory or assertion 
error} detected along the way.  (This is the default analysis of KLEE.)
We report the block coverage for the 47 programs  from SV-COMP and we also extend this 
 experiment to GNU Coreutils benchmarks \cite{coreutils-6.11}.
\end{itemize}

\noindent
In both experiments, our baselines are KLEE \cite{cadar08klee} and  CBMC \cite{cbmc}, 
as the state-of-the-art DSE and SSE tools. 
In general, CBMC is not appropriate for the second experiment on coverage
(because they react with an external environment).
Hence, there we only compared with KLEE, and use the Coreutils benchmark.

\begin{table*}[!htb]
\caption{The results for Main Experiment (All Targets)}
\label{tab:alltargets}
\footnotesize
	\begin{tabular}{|l|r|r|r|r|r|r|r|r|r|r|r|r|r|r|}
		\hline
	\multirow{3}{*}{Benchmark} & \multirow{3}{*}{\#AT} & \multicolumn{3}{c|}{KLEE}                             & \multicolumn{3}{c|}{CBMC}                             & \multicolumn{3}{c|}{TracerX}                        & \multirow{3}{*}{\#W} & \multirow{3}{*}{\#L} & Speedup & Speedup \\ \cline{3-11}
	
	&                       & Time        & \multirow{2}{*}{U} & \multirow{2}{*}{R} & Time        & \multirow{2}{*}{U} & \multirow{2}{*}{R} & Time      & \multirow{2}{*}{U} & \multirow{2}{*}{R} &                      &                       &        vs             &    vs                 \\ 
	
	&                       & (min)       &                    &                    & (min)       &                    &                    & (min)     &                    &                    &                      &                       &    KLEE                 &      CBMC               \\ \hline
		psyco1-100                 & 35                    & 153         & 0                  & 5                  & 175         & 0                  & 0                  & 2.5       & 30                 & 5                  & 30                   & 0                     & \leavevmode\color{red}{0.1}                 & -                   \\ \hline
	psyco1-500                 & 35                    & 152         & 0                  & 5                  & 175         & 0                  & 0                  & 48        & 30                 & 5                  & 30                   & 0                     & \leavevmode\color{red}{0.0}                 & -                   \\ \hline
	psyco1-1000                & 35                    & 152         & 0                  & 5                  & 175         & 0                  & 0                  & 151       & 0                  & 5                  & 0                    & 0                     & \leavevmode\color{red}{0.0}                 & -                   \\ \hline
	psyco2-8                   & 61                    & 285         & 0                  & 4                  & 128         & 57                 & 4                  & 17        & 57                 & 4                  & 0                    & 0                     & 1.4                 & 251                 \\ \hline
	psyco2-10                  & 61                    & 285         & 0                  & 4                  & 243         & 57                 & 4                  & 107       & 57                 & 4                  & 0                    & 0                     & 1.3                 & 454                 \\ \hline
	psyco2-12                  & 61                    & 285         & 0                  & 4                  & 305         & 0                  & 0                  & 287       & 0                  & 4                  & 0                    & 0                     & 1.0                 & -                   \\ \hline
	psyco3-8                   & 61                    & 285         & 0                  & 4                  & 124         & 57                 & 4                  & 17        & 57                 & 4                  & 0                    & 0                     & 1.2                 & 248                 \\ \hline
	psyco3-10                  & 61                    & 285         & 0                  & 4                  & 234         & 57                 & 4                  & 98        & 57                 & 4                  & 0                    & 0                     & 1.1                 & 447                 \\ \hline
	psyco3-12                  & 61                    & 285         & 0                  & 4                  & 305         & 0                  & 0                  & 287       & 0                  & 4                  & 0                    & 0                     & 1.0                 & -                   \\ \hline
	psyco4-8                   & 61                    & 285         & 0                  & 4                  & 109         & 57                 & 4                  & 15        & 57                 & 4                  & 0                    & 0                     & 1.4                 & 231                 \\ \hline
	psyco4-10                  & 61                    & 285         & 0                  & 4                  & 245         & 57                 & 4                  & 93        & 57                 & 4                  & 0                    & 0                     & 1.5                 & 474                 \\ \hline
	psyco4-12                  & 61                    & 285         & 0                  & 4                  & 305         & 0                  & 0                  & 287       & 0                  & 4                  & 0                    & 0                     & 1.2                 & -                   \\ \hline
	psyco5-100                 & 33                    & 141         & 0                  & 5                  & 165         & 0                  & 0                  & 2.6       & 28                 & 5                  & 28                   & 0                     & \leavevmode\color{red}{0.7}                 & -                   \\ \hline
	psyco5-500                 & 33                    & 142         & 0                  & 5                  & 165         & 0                  & 0                  & 50        & 28                 & 5                  & 28                   & 0                     & \leavevmode\color{red}{0.1}                 & -                   \\ \hline
	psyco5-1000                & 33                    & 142         & 0                  & 5                  & 165         & 0                  & 0                  & 141       & 0                  & 5                  & 0                    & 0                     & \leavevmode\color{red}{0.0}                 & -                   \\ \hline
	psyco6-100                 & 47                    & 154         & 0                  & 17                 & 235         & 0                  & 0                  & 4.5       & 30                 & 17                 & 30                   & 0                     & \leavevmode\color{red}{0.3}                 & -                   \\ \hline
	psyco6-500                 & 47                    & 156         & 0                  & 17                 & 235         & 0                  & 0                  & 90        & 30                 & 17                 & 30                   & 0                     & \leavevmode\color{red}{0.0}                 & -                   \\ \hline
	psyco6-1000                & 47                    & 154         & 0                  & 17                 & 235         & 0                  & 0                  & 153       & 0                  & 17                 & 0                    & 0                     & \leavevmode\color{red}{0.0}                 & -                   \\ \hline
	psyco7-8                   & 74                    & 381         & 0                  & 0                  & 370         & 0                  & 0                  & 10        & 74                 & 0                  & 74                   & 0                     & -                   & -                   \\ \hline
	psyco7-10                  & 74                    & 383         & 0                  & 0                  & 370         & 0                  & 0                  & 13        & 74                 & 0                  & 74                   & 0                     & -                   & -                   \\ \hline
	psyco7-12                  & 74                    & 381         & 0                  & 0                  & 370         & 0                  & 0                  & 17        & 74                 & 0                  & 74                   & 0                     & -                   & -                   \\ \hline
	m34-C-A-6                  & 213                   & 1013        & 0                  & 13                 & 725         & 133                & 0                  & 15        & 158                & 55                 & 67                   & 0                     & 188                 & 31                  \\ \hline
	m34-C-D-10                 & 712                   & 3540        & 0                  & 6                  & 3560        & 0                  & 0                  & 3162      & 0                  & 102                & 96                   & 0                     & 369                 & -                   \\ \hline
	m217-R-A-25                & 232                   & 1006        & 0                  & 54                 & 1155        & 3                  & 0                  & 826       & 0                  & 90                 & 36                   & \leavevmode\color{red}{3}                     & 35                  & -                   \\ \hline
	m217-R-D-3                & 100                   & 443         & 0                  & 12                 & 500         & 0                  & 0                  & 12        & 84                 & 16                 & 88                   & 0                     & 11                  & -                   \\ \hline
	P4-R12-8                   & 61                    & 300         & 0                  & 1                  & 305         & 0                  & 0                  & 36        & 60                 & 1                  & 60                   & 0                     & 1.1                 & -                   \\ \hline
	P5-R12-8                   & 61                    & 300         & 0                  & 1                  & 305         & 0                  & 0                  & 144       & 36                 & 25                 & 60                   & 0                     & \leavevmode\color{red}{0.5}                 & -                   \\ \hline
	P6-R12-8                   & 61                    & 293         & 0                  & 3                  & 266         & 35                 & 26                 & 99        & 35                 & 26                 & 0                    & 0                     & 1.8                 & 15                  \\ \hline
	P11-R19-50                 & 190                   & 436         & 0                  & 114                & 30          & 76                 & 114                & 398       & 0                  & 114                & 0                    & \leavevmode\color{red}{76}                    & 9.3                 & 13                  \\ \hline
	P12-R19-10                 & 101                   & 287         & 0                  & 46                 & 467         & 8                  & 0                  & 174       & 0                  & 67                 & 21                   & \leavevmode\color{red}{8}                     & 24                  & -                   \\ \hline
	P17-R12-8                  & 61                    & 184         & 0                  & 25                 & 305         & 0                  & 0                  & 183       & 0                  & 25                 & 0                    & 0                     & 2.2                 & -                   \\ \hline
	P2-T-R17-12                & 86                    & 192         & 0                  & 51                 & 355         & 23                 & 0                  & 3.9       & 35                 & 51                 & 12                   & 0                     & 94                  & 16                  \\ \hline
	P3-R17-7                   & 164                   & 492         & 0                  & 79                 & 820         & 0                  & 0                  & 9.9       & 85                 & 79                 & 85                   & 0                     & 60                  & -                   \\ \hline
	P11-R17-7                  & 163                   & 503         & 0                  & 68                 & 423         & 91                 & 0                  & 7.8       & 95                 & 68                 & 4                    & 0                     & -                   & 9.5                 \\ \hline
	P14-R12-20                 & 154                   & 468         & 0                  & 70                 & 770         & 0                  & 0                  & 37        & 35                 & 119                & 84                   & 0                     & 13                  & -                   \\ \hline
	m34-R-A-6                  & 413                   & 1779        & 0                  & 91                 & 1665        & 231                & 0                  & 339       & 322                & 91                 & 91                   & 0                     & 13                  & 3.5                 \\ \hline
	m217-C-A-10                & 32                    & 138         & 1                  & 4                  & 0.3         & 0                  & 32                 & 3.6       & 0                  & 32                 & 0                    & 0                     & 228                 & 1.6                 \\ \hline
	m217-C-D-10                & 98                    & 476         & 0                  & 3                  & 2.7         & 64                 & 34                 & 36        & 64                 & 34                 & 0                    & 0                     & 17                  & 2.2                 \\ \hline
	P1-R19-20                  & 30                    & 96          & 0                  & 12                 & 0.3         & 0                  & 30                 & 49        & 0                  & 21                 & 0                    & \leavevmode\color{red}{9}                     & 74                  & 3.7                 \\ \hline
	P2-R19-20                  & 38                    & 85          & 0                  & 24                 & 187         & 0                  & 2                  & 5.8       & 0                  & 37                 & 12                   & 0                     & 170                 & 325                 \\ \hline
	P3-R19-15                  & 75                    & 196         & 0                  & 40                 & 352         & 16                 & 0                  & 177       & 0                  & 40                 & 0                    & \leavevmode\color{red}{16}                    & 80                  & -                   \\ \hline
	P3-R12-8                   & 166                   & 454         & 0                  & 76                 & 327         & 90                 & 76                 & 8.6       & 90                 & 76                 & 0                    & 0                     & 4.1                 & 199                 \\ \hline
	P12-R19-6                  & 101                   & 197         & 0                  & 65                 & 339         & 35                 & 1                  & 7.3       & 36                 & 65                 & 1                    & 0                     & 15                  & 2.7                 \\ \hline
	P16-R12-8                  & 257                   & 626         & 1                  & 138                & 1285        & 0                  & 0                  & 54        & 116                & 141                & 118                  & 0                     & 3.3                 & -                   \\ \hline
	P15-R12-8                  & 223                   & 541         & 0                  & 117                & 1115        & 0                  & 0                  & 96        & 106                & 117                & 106                  & 0                     & 4.6                 & -                   \\ \hline
	P1-R18-15                  & 43                    & 52          & 0                  & 34                 & 0.5         & 0                  & 43                 & 0.3       & 0                  & 43                 & 0                    & 0                     & 50                  & 5.7                 \\ \hline
	P3-R18-15                  & 107                   & 493         & 0                  & 9                  & 12          & 77                 & 30                 & 5.8       & 78                 & 29                 & 0                    & 0                     & 175                 & 19                  \\ \hline
	\end{tabular}
\end{table*}

\begin{table*}[!htb]
	\caption{The results for Main Experiment (Hard Targets)}
\label{tab:hardtargets}
	\footnotesize
	\begin{tabular}{|l|r|r|r|r|r|r|r|r|r|r|r|r|r|r|}
	\hline
	\multirow{3}{*}{Benchmark} & \multirow{3}{*}{\#HT} & \multicolumn{3}{c|}{KLEE}                                            & \multicolumn{3}{c|}{CBMC}                                            & \multicolumn{3}{c|}{TracerX}                                         & \multirow{3}{*}{\#W} & \multirow{3}{*}{\#L} & Speedup & Speedup \\ \cline{3-11}

&                       & Time                       & \multirow{2}{*}{U} & \multirow{2}{*}{R} & Time                       & \multirow{2}{*}{U} & \multirow{2}{*}{R} & Time                       & \multirow{2}{*}{U} & \multirow{2}{*}{R} &                      &                       &         vs            &    vs                 \\ 
&                       & \multicolumn{1}{l|}{(min)} &                    &                    & \multicolumn{1}{l|}{(min)} &                    &                    & \multicolumn{1}{l|}{(min)} &                    &                    &                      &                       &          KLEE           &       CBMC              \\ \hline

	psyco1-100                 & 12                    & 120                        & 0                  & 0                  & 120                        & 0                  & 0                  & 0.7                        & 12                 & 0                  & 12                   & 0                     & -                   & -                   \\ \hline
psyco1-500                 & 12                    & 120                        & 0                  & 0                  & 120                        & 0                  & 0                  & 13                         & 12                 & 0                  & 12                   & 0                     & -                   & -                   \\ \hline
psyco1-1000                & 12                    & 120                        & 0                  & 0                  & 120                        & 0                  & 0                  & 60                         & 12                 & 0                  & 12                   & 0                     & -                   & -                   \\ \hline
psyco2-8                   & 20                    & 200                        & 0                  & 0                  & 36                         & 20                 & 0                  & 4.7                        & 20                 & 0                  & 0                    & 0                     & -                   & 7.6                 \\ \hline
psyco2-10                  & 20                    & 200                        & 0                  & 0                  & 68                         & 20                 & 0                  & 32                         & 20                 & 0                  & 0                    & 0                     & -                   & 2.1                 \\ \hline
psyco2-12                  & 20                    & 203                        & 0                  & 0                  & 115                        & 20                 & 0                  & 200                        & 5                  & 0                  & 0                    & \leavevmode\color{red}{15}                    & -                   & \leavevmode\color{red}{0.6}                 \\ \hline
psyco3-8                   & 20                    & 200                        & 0                  & 0                  & 36                         & 20                 & 0                  & 4.9                        & 20                 & 0                  & 0                    & 0                     & -                   & 7.3                 \\ \hline
psyco3-10                  & 20                    & 200                        & 0                  & 0                  & 68                         & 20                 & 0                  & 33                         & 20                 & 0                  & 0                    & 0                     & -                   & 2.1                 \\ \hline
psyco3-12                  & 20                    & 203                        & 0                  & 0                  & 117                        & 20                 & 0                  & 201                        & 0                  & 0                  & 0                    & \leavevmode\color{red}{20}                    & -                   & -                   \\ \hline
psyco4-8                   & 21                    & 210                        & 0                  & 0                  & 38                         & 21                 & 0                  & 4.9                        & 21                 & 0                  & 0                    & 0                     & -                   & 7.7                 \\ \hline
psyco4-10                  & 21                    & 210                        & 0                  & 0                  & 72                         & 21                 & 0                  & 34                         & 21                 & 0                  & 0                    & 0                     & -                   & 2.1                 \\ \hline
psyco4-12                  & 21                    & 212                        & 0                  & 0                  & 122                        & 21                 & 0                  & 207                        & 21                 & 0                  & 0                    & 0                     & -                   & \leavevmode\color{red}{0.6}                 \\ \hline
psyco5-100                 & 11                    & 110                        & 0                  & 0                  & 110                        & 0                  & 0                  & 0.7                        & 11                 & 0                  & 11                   & 0                     & -                   & -                   \\ \hline
psyco5-500                 & 11                    & 110                        & 0                  & 0                  & 110                        & 0                  & 0                  & 13                         & 11                 & 0                  & 11                   & 0                     & -                   & -                   \\ \hline
psyco5-1000                & 11                    & 110                        & 0                  & 0                  & 110                        & 0                  & 0                  & 60                         & 11                 & 0                  & 11                   & 0                     & -                   & -                   \\ \hline
psyco6-100                 & 3                     & 30                         & 0                  & 0                  & 26                         & 3                  & 0                  & 0.3                        & 3                  & 0                  & 0                    & 0                     & -                   & 86                  \\ \hline
psyco6-500                 & 3                     & 30                         & 0                  & 0                  & 30                         & 0                  & 0                  & 6.0                        & 3                  & 0                  & 3                    & 0                     & -                   & -                   \\ \hline
psyco6-1000                & 3                     & 30                         & 0                  & 0                  & 30                         & 0                  & 0                  & 29                         & 3                  & 0                  & 3                    & 0                     & -                   & -                   \\ \hline
psyco7-8                   & 44                    & 460                        & 0                  & 0                  & 440                        & 0                  & 0                  & 6.9                        & 44                 & 0                  & 44                   & 0                     & -                   & -                   \\ \hline
psyco7-10                  & 44                    & 464                        & 0                  & 0                  & 440                        & 0                  & 0                  & 9.1                        & 44                 & 0                  & 44                   & 0                     & -                   & -                   \\ \hline
psyco7-12                  & 44                    & 474                        & 0                  & 0                  & 440                        & 0                  & 0                  & 9.5                        & 44                 & 0                  & 44                   & 0                     & -                   & -                   \\ \hline
m34-C-A-6                  & 26                    & 260                        & 0                  & 0                  & 162                        & 14                 & 0                  & 1.5                        & 17                 & 9                  & 12                   & 0                     & -                   & -                   \\ \hline
m34-C-D-10                 & 2                     & 18                         & 0                  & 2                  & 20                         & 0                  & 0                  & 0.01                       & 0                  & 2                  & 0                    & 0                     & 1869                & -                   \\ \hline
m217-R-A-25                & 60                    & 505                        & 0                  & 27                 & 600                        & 0                  & 0                  & 419                        & 0                  & 26                 & 1                    & \leavevmode\color{red}{2}                     & 2.9                 & -                   \\ \hline
m217-R-D-3                & 88                    & 878                        & 0                  & 3                  & 880                        & 0                  & 0                  & 11                         & 84                 & 4                  & 85                   & 0                     & 81                  & -                   \\ \hline
P4-R12-8                   & 60                    & 600                        & 0                  & 0                  & 600                        & 0                  & 0                  & 54                         & 60                 & 0                  & 60                   & 0                     & -                   & -                   \\ \hline
P5-R12-8                   & 60                    & 601                        & 0                  & 0                  & 600                        & 0                  & 0                  & 267                        & 36                 & 24                 & 60                   & 0                     & -                   & -                   \\ \hline
P6-R12-8                   & 56                    & 557                        & 0                  & 1                  & 217                        & 35                 & 21                 & 178                        & 35                 & 21                 & 0                    & 0                     & 4                   & 1.5                 \\ \hline
P11-R19-50                 & 14                    & 140                        & 0                  & 0                  & 3.2                        & 14                 & 0                  & 140                        & 0                  & 0                  & 0                    & \leavevmode\color{red}{14}                    & -                   & -                   \\ \hline
P12-R19-10                 & 27                    & 272                        & 0                  & 0                  & 178                        & 11                 & 0                  & 87                         & 11                 & 16                 & 16                   & 0                     & -                   & \leavevmode\color{red}{0.2}                 \\ \hline
P17-R12-8                  & 36                    & 360                        & 0                  & 0                  & 360                        & 0                  & 0                  & 152                        & 36                 & 0                  & 36                   & 0                     & -                   & -                   \\ \hline
P2-T-R17-12                & 9                     & 90                         & 0                  & 0                  & 46                         & 7                  & 0                  & 0.9                        & 9                  & 0                  & 2                    & 0                     & -                   & 38                  \\ \hline
P3-R17-7                   & 73                    & 731                        & 0                  & 0                  & 713                        & 12                 & 0                  & 6.9                        & 73                 & 0                  & 61                   & 0                     & -                   & 90                  \\ \hline
P11-R17-7                  & 44                    & 440                        & 0                  & 0                  & 73                         & 40                 & 0                  & 2.8                        & 44                 & 0                  & 4                    & 0                     & -                   & 13                  \\ \hline
P14-R12-20                 & 35                    & 354                        & 0                  & 0                  & 329                        & 24                 & 0                  & 21                         & 35                 & 0                  & 11                   & 0                     & -                   & 15                  \\ \hline
m34-R-A-6                  & 96                    & 960                        & 0                  & 0                  & 550                        & 79                 & 0                  & 88                         & 96                 & 0                  & 17                   & 0                     & -                   & 5.2                 \\ \hline
m217-C-A-10                & 11                    & 110                        & 0                  & 0                  & 0.08                       & 0                  & 11                 & 0.05                       & 0                  & 11                 & 0                    & 0                     & -                   & 1.1                 \\ \hline
m217-C-D-10                & 24                    & 240                        & 0                  & 0                  & 0.6                        & 7                  & 17                 & 4.4                        & 7                  & 17                 & 0                    & 0                     & -                   & 3.1                 \\ \hline
P1-R19-20                  & 9                     & 90                         & 0                  & 0                  & 0.07                       & 0                  & 9                  & 17                         & 0                  & 8                  & 0                    & \leavevmode\color{red}{1}                     & -                   & \leavevmode\color{red}{0.9}                 \\ \hline
P2-R19-20                  & 6                     & 60                         & 0                  & 0                  & 40                         & 0                  & 6                  & 0.08                       & 0                  & 6                  & 0                    & 0                     & -                   & 575                 \\ \hline
P3-R19-15                  & 12                    & 120                        & 0                  & 0                  & 53                         & 12                 & 0                  & 121                        & 0                  & 0                  & 0                    & \leavevmode\color{red}{12}                    & -                   & -                   \\ \hline
P3-R12-8                   & 72                    & 722                        & 0                  & 0                  & 129                        & 72                 & 0                  & 6.1                        & 72                 & 0                  & 0                    & 0                     & -                   & 21                  \\ \hline
P12-R19-6                  & 15                    & 150                        & 0                  & 0                  & 6.5                        & 15                 & 0                  & 2.4                        & 15                 & 0                  & 0                    & 0                     & -                   & 2.7                 \\ \hline
P16-R12-8                  & 119                   & 1181                       & 0                  & 1                  & 1190                       & 0                  & 0                  & 40                         & 116                & 3                  & 118                  & 0                     & 3.7                 & -                   \\ \hline
P15-R12-8                  & 111                   & 1061                       & 0                  & 5                  & 1110                       & 0                  & 0                  & 84                         & 106                & 5                  & 106                  & 0                     & 2.2                 & -                   \\ \hline
P1-R18-15                  & 3                     & 23                         & 0                  & 2                  & 0.03                       & 0                  & 3                  & 0.02                       & 0                  & 3                  & 0                    & 0                     & 1470                & 3.1                 \\ \hline
P3-R18-15                  & 9                     & 90                         & 0                  & 0                  & 1.1                        & 5                  & 4                  & 0.4                        & 5                  & 4                  & 0                    & 0                     & -                   & 13                  \\ \hline
	\end{tabular}
\end{table*}

\normalsize
\begin{figure}[!]
	\centering
	\begin{subfigure}{.5\textwidth}
		\centering
		\includegraphics[width=1\linewidth]{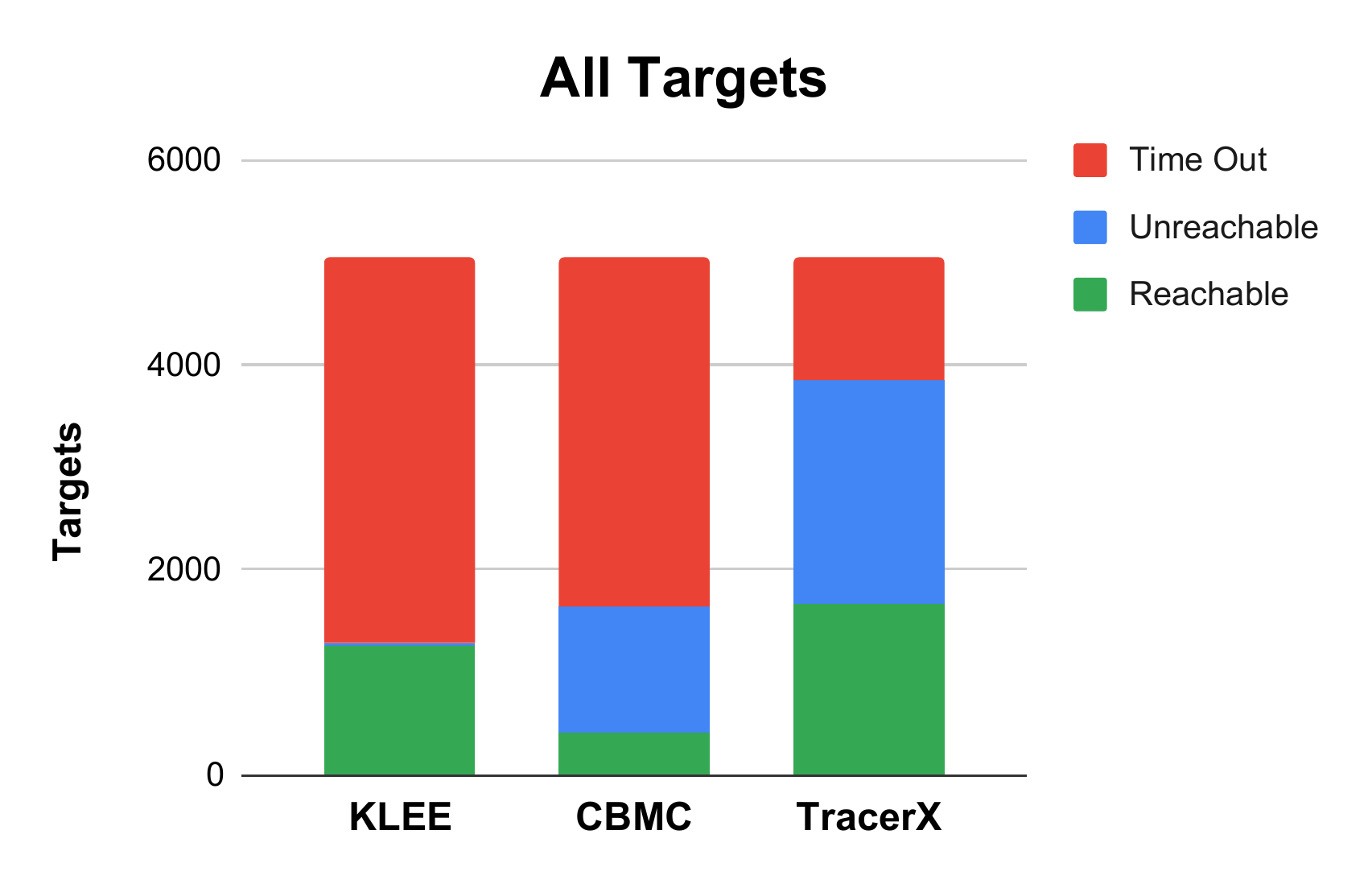}
		\caption{Aggregated numbers of \textit{TO}, \textit{U}, and \textit{R}}
		\label{fig:AllTargetscount}		
	\end{subfigure}%
	\begin{subfigure}{.5\textwidth}
		\centering
		\includegraphics[width=1\linewidth]{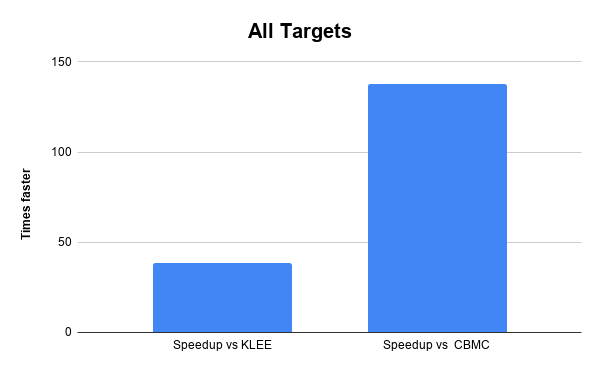}		
		\caption{Aggregated Speed}
		\label{fig:AllTargetsSpeed}
	\end{subfigure}
	\caption{Aggregated results for All target experiment}
	\label{fig:AllTargets}
	\vspace{-4mm}
\end{figure}

\begin{figure}
	\centering
	\begin{subfigure}{.5\textwidth}
		\centering
		\includegraphics[width=1\linewidth]{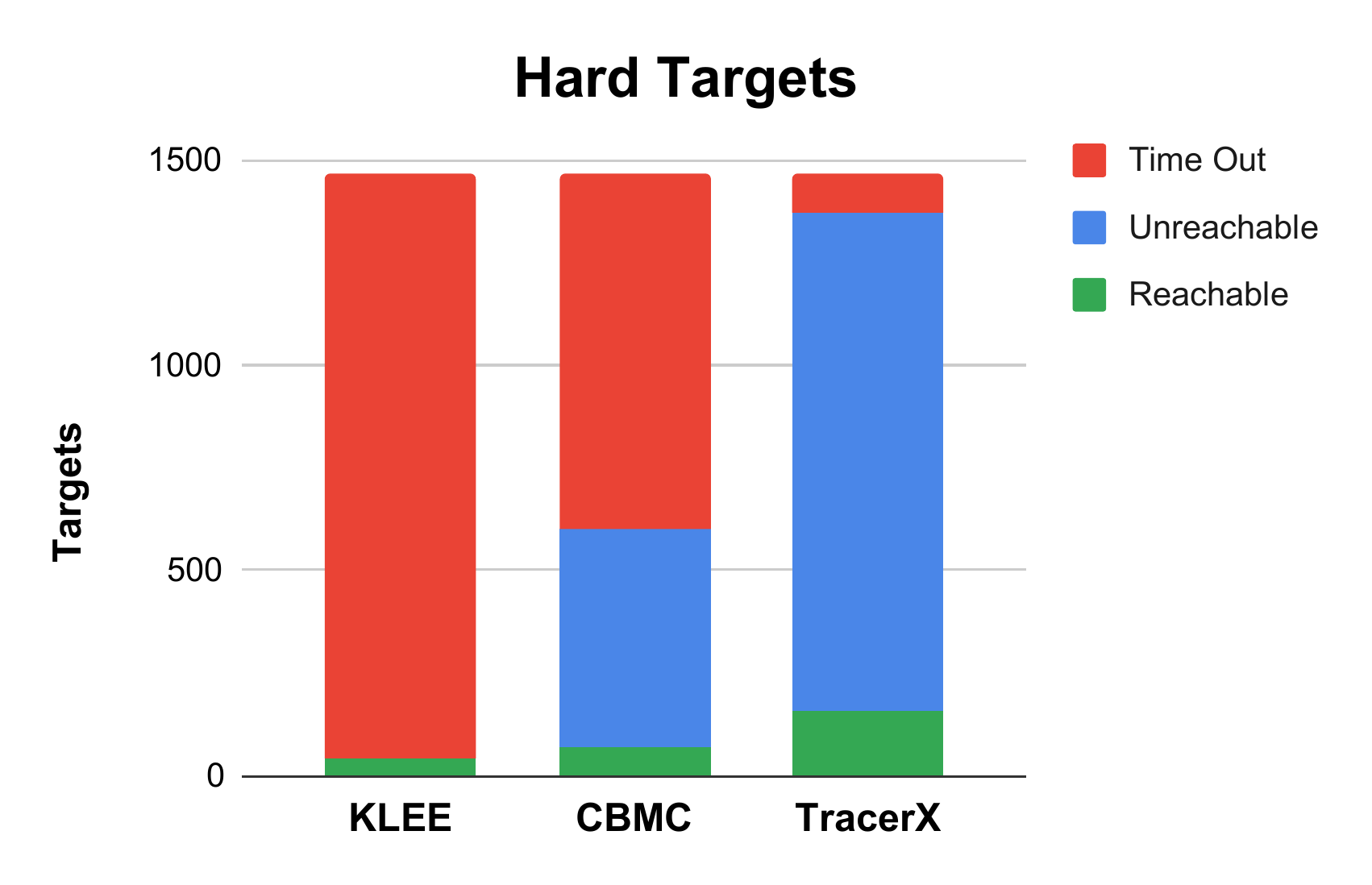}
		\caption{Aggregated numbers of \textit{TO}, \textit{U}, and \textit{R}}
		\label{fig:HardTargetscount}
	\end{subfigure}%
	\begin{subfigure}{.5\textwidth}
		\centering
		\includegraphics[width=1\linewidth]{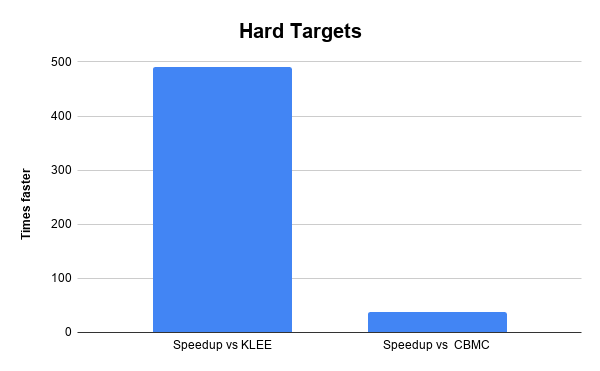}		
		\caption{Aggregated Speed}
		\label{fig:HardTargetsSpeed}
	\end{subfigure}
	\caption{Aggregated results for Hard target experiment}
	\label{fig:HardTargets}
	\vspace{-4mm}
\end{figure}

\noindent
\subsection{Main Experiment (Penetration)}
The main purpose of this experiment is to detect individually each target (bug) injected in 
the program. Some of these targets will be easy to reach and some are very difficult. Also,
some of the targets are located in unreachable parts of the program. 
We compare \tracerx{} and the baseline approaches on their capability in detecting easy as well as hard targets. 

We determine a subset of all targets as hard targets via a filtering phase. These are obtained 
by filtering out targets which can be proved easily by state-of-the-art methods. We filtered out all 
the targets that are detected by KLEE within 5 minutes. Moreover, in a second step, we filtered
out any target which can be determined as unreachable by {\tt Framma-C} \cite{Framma-C} (a sound 
static analyzer). We consider the remaining targets as hard targets.

See Table \ref{tab:alltargets} which presents the results of our experiment on all
targets. The KLEE column reports KLEE running in its random mode  \cite{cadar08klee} 
(the choice mode for code coverage), and the CBMC column reports CBMC running in its default 
mode. \tracerx{} column reports \tracerx{} running using depth-first search (DFS). 
We annotated each target separately in the benchmarks  and ran the tools on 
each of the annotated programs for 5 minutes.

The \textit{Benchmark} column reports the benchmark program names.
The \textit{\#AT} column in Table \ref{tab:alltargets} reports the total number of targets\footnote{A 
target means assert(0), which is a runtime error.}. 
The \textit{Time} column for each tool reports the aggregated execution
time (in minutes) for all the targets injected in a program. The columns \textit{U} and \textit{R} 
report the number of targets which have been proven to be unreachable or reachable. 
The tools hit timeout on the remaining targets (not able to prove them as unreachable or 
reachable).   The remaining targets, those which caused a timeout,
can be computed by $\#AT - (U + R)$ in Table \ref{tab:alltargets}.

The WIN column (indicated by \#W) reports the numbers of targets that \tracerx{} was able to prove while
none of the baseline tools were able to prove. In contrast, the LOSE column (indicated by \#L) reports the numbers
of targets that either of the baseline tools were able to prove while \tracerx{} was not able to prove it.
The last two columns (Speedup vs KLEE) and (Speedup vs CBMC)
 show the relative speed of \tracerx{} over KLEE and CBMC 
respectively, over targets that have been proved reachable or unreachable
by KLEE/CBMC and \tracerx{}. When a row is marked with ``-'' it means one of the tools had hit timeout on all
targets in that program. 
The value denotes the relative speedup: for example, $0.5$ means \tracerx{} was half as fast,
and $2.0$ means \tracerx{} was twice as fast\footnote{The lose cases for \tracerx{} have been highlighted with red color in Tables \ref{tab:alltargets} and \ref{tab:hardtargets}.}.

Fig. \ref{fig:AllTargets} shows the aggregated results of the all targets experiment. 
Fig. \ref{fig:AllTargetscount}  shows the total numbers of targets that each tool has been able to prove as 
reachable or unreachable. The remaining targets are the ones where the tools timeout.
Moreover, in Fig. \ref{fig:AllTargetsSpeed},  we present the aggregate on the relative speedup of \tracerx{} 
over KLEE and CBMC.
Finally, note that the detailed information on each target can be seen in \cite{All-Targets}.  

See Tables \ref{tab:hardtargets} which present the results of our experiment on hard
targets. The columns are the same as in Table \ref{tab:alltargets} except for  the  \textit{\#HT} column
which  reports the total number of hard targets. 
Since, in the previous experiments, the timeout was set to 5 minutes,  for the tools to have 
a higher chance of finding hard targets we have extended the timeout to 10 minutes.

Fig. \ref{fig:HardTargets} shows the aggregated results of the Hard targets experiment. 
Fig. \ref{fig:HardTargetscount}  shows
the total number of targets that each tool has been able to prove as reachable or
unreachable. The remaining targets are the ones where the tools timeout.
Moreover, in Fig. \ref{fig:HardTargetsSpeed},  we present the aggregate on the 
relative speedup of \tracerx{}  over KLEE and CBMC. Finally, we reported the  detailed
information on each hard target ran for all the programs in \cite{Hard-Targets}.

\subsection{Supplementary Experiment (Exploration)}
\label{sec:supl-expr}

\ignore{
\textcolor{teal}{In this section we present our Supplementary Experiment (Exploration). 
First part will the exploration of Bug finding and second will be the
exploration of Block coverage. In first part we will consider All
Targets and Hard Targets program to report the first target proved as
reachable. Whereas, in second part we show the block coverage of same
set of SV-COMP programs. Also, we reported the block coverage results
on GNU Coreutils program too.}
}

\begin{table*}[!htb]
	\caption{The results for Supplementary Bug Finding Experiment (All Targets and Hard Targets)}
	\label{tab:supplm1m2}
	\vspace*{-3mm}
	\scriptsize
	\begin{tabular}{|l|r|c|r|c|r|c|r|c|r|c|r|c|r|c|r|c|}
		\hline
		\multirow{4}{*}{Benchmark} & \multicolumn{8}{c|}{All Targets}                                                                                                                         & \multicolumn{8}{c|}{Hard Targets}                                                                                                     \\ \cline{2-17} 
		& \multicolumn{2}{c|}{KLEE}       & \multicolumn{2}{c|}{CBMC}                            & \multicolumn{2}{c|}{TracerX-D} & \multicolumn{2}{c|}{TracerX-R} & \multicolumn{2}{c|}{KLEE}       & \multicolumn{2}{c|}{CBMC}       & \multicolumn{2}{c|}{TracerX-D}  & \multicolumn{2}{c|}{TracerX-R}  \\ \cline{2-17} 
		& Time     & \multirow{2}{*}{1/0} & Time     & \multicolumn{1}{l|}{\multirow{2}{*}{1/0}} & Time   & \multirow{2}{*}{1/0}  & Time   & \multirow{2}{*}{1/0}  & Time     & \multirow{2}{*}{1/0} & Time     & \multirow{2}{*}{1/0} & Time     & \multirow{2}{*}{1/0} & Time     & \multirow{2}{*}{1/0} \\
		
		& (sec)    &                      & (sec)    & \multicolumn{1}{l|}{}                     & (sec)  &                       & (sec)  &                       & (sec)    &                      & (sec)    &                      & (sec)    &                      & (sec)    &                      \\ \hline
		psyco1-100                 & 0.01     & 1                    & $\infty$ & 0                                         & 0.22   & 1                     & 0.02   & 1                     & $\infty$ & 0                    & $\infty$ & 0                    & 5.6      & 0                    & $\infty$ & 0                    \\ \hline
		psyco1-500                 & 0.02     & 1                    & $\infty$ & 0                                         & 2.77   & 1                     & 0.02   & 1                     & $\infty$ & 0                    & $\infty$ & 0                    & 100      & 0                    & $\infty$ & 0                    \\ \hline
		psyco1-1000                & 0.02     & 1                    & $\infty$ & 0                                         & 10.4   & 1                     & 0.02   & 1                     & $\infty$ & 0                    & $\infty$ & 0                    & 451      & 0                    & $\infty$ & 0                    \\ \hline
		psyco2-8                   & 0.05     & 1                    & 155      & 1                                         & 0.04   & 1                     & 0.05   & 1                     & 878      & 0                    & 152      & 0                    & 20       & 0                    & 70       & 0                    \\ \hline
		psyco2-10                  & 0.03     & 1                    & 295      & 1                                         & 0.03   & 1                     & 0.03   & 1                     & $\infty$ & 0                    & 293      & 0                    & 132      & 0                    & 870      & 0                    \\ \hline
		psyco2-12                  & 0.03     & 1                    & 535      & 1                                         & 0.04   & 1                     & 0.04   & 1                     & $\infty$ & 0                    & 504      & 0                    & 844      & 0                    & $\infty$ & 0                    \\ \hline
		psyco3-8                   & 0.03     & 1                    & 163      & 1                                         & 0.04   & 1                     & 0.03   & 1                     & 879      & 0                    & 151      & 0                    & 20       & 0                    & 70       & 0                    \\ \hline
		psyco3-10                  & 0.03     & 1                    & 296      & 1                                         & 0.04   & 1                     & 0.04   & 1                     & $\infty$ & 0                    & 293      & 0                    & 133      & 0                    & 888      & 0                    \\ \hline
		psyco3-12                  & 0.03     & 1                    & 520      & 1                                         & 0.04   & 1                     & 0.03   & 1                     & $\infty$ & 0                    & 506      & 0                    & 853      & 0                    & $\infty$ & 0                    \\ \hline
		psyco4-8                   & 0.04     & 1                    & 154      & 1                                         & 0.04   & 1                     & 0.03   & 1                     & 875      & 0                    & 152      & 0                    & 20       & 0                    & 69       & 0                    \\ \hline
		psyco4-10                  & 0.03     & 1                    & 306      & 1                                         & 0.3    & 1                     & 0.03   & 1                     & $\infty$ & 0                    & 291      & 0                    & 2696     & 0                    & $\infty$ & 0                    \\ \hline
		psyco4-12                  & 0.03     & 1                    & 518      & 1                                         & 0.04   & 1                     & 0.03   & 1                     & $\infty$ & 0                    & 506      & 0                    & 840      & 0                    & $\infty$ & 0                    \\ \hline
		psyco5-100                 & 0.02     & 1                    & 1893     & 1                                         & 0.22   & 1                     & 0.02   & 1                     & $\infty$ & 0                    & 1851     & 0                    & 5.8      & 0                    & $\infty$ & 0                    \\ \hline
		psyco5-500                 & 0.03     & 1                    & $\infty$ & 0                                         & 2.71   & 1                     & 0.02   & 1                     & $\infty$ & 0                    & $\infty$ & 0                    & 113      & 0                    & $\infty$ & 0                    \\ \hline
		psyco5-1000                & 0.02     & 1                    & $\infty$ & 0                                         & 10.0   & 1                     & 0.02   & 1                     & $\infty$ & 0                    & $\infty$ & 0                    & 478      & 0                    & $\infty$ & 0                    \\ \hline
		psyco6-100                 & 0.04     & 1                    & 772      & 1                                         & 0.25   & 1                     & 0.04   & 1                     & $\infty$ & 0                    & 692      & 0                    & 9        & 0                    & $\infty$ & 0                    \\ \hline
		psyco6-500                 & 0.04     & 1                    & $\infty$ & 0                                         & 2.79   & 1                     & 0.04   & 1                     & $\infty$ & 0                    & $\infty$ & 0                    & 188      & 0                    & $\infty$ & 0                    \\ \hline
		psyco6-1000                & 0.04     & 1                    & $\infty$ & 0                                         & 10.4   & 1                     & 0.03   & 1                     & $\infty$ & 0                    & $\infty$ & 0                    & 863      & 0                    & $\infty$ & 0                    \\ \hline
		psyco7-8                   & $\infty$ & 0                    & 1278     & 0                                         & 11.4   & 0                     & 1251   & 0                     & $\infty$ & 0                    & 1280     & 0                    & 11       & 0                    & 1065     & 0                    \\ \hline
		psyco7-10                  & $\infty$ & 0                    & 2442     & 0                                         & 15.7   & 0                     & 1040   & 0                     & $\infty$ & 0                    & 2417     & 0                    & 15       & 0                    & 1185     & 0                    \\ \hline
		psyco7-12                  & $\infty$ & 0                    & $\infty$ & 0                                         & 19.0   & 0                     & 1820   & 0                     & $\infty$ & 0                    & $\infty$ & 0                    & 19       & 0                    & 1203     & 0                    \\ \hline
		m34-C-A-6                  & 0.22     & 1                    & $\infty$ & 0                                         & 0.03   & 1                     & 0.09   & 1                     & 1973     & 1                    & $\infty$ & 0                    & 14       & 1                    & 354      & 1                    \\ \hline
		m34-C-D-10                 & 0.31     & 1                    & $\infty$ & 0                                         & 0.1    & 1                     & 0.23   & 1                     & 921      & 1                    & $\infty$ & 0                    & 0.5      & 1                    & 155      & 1                    \\ \hline
		m217-R-A-25                & 0.07     & 1                    & $\infty$ & 0                                         & 0.14   & 1                     & 0.06   & 1                     & 410      & 1                    & $\infty$ & 0                    & 850      & 1                    & $\infty$ & 0                    \\ \hline
		m217-R-D-3                 & 0.98     & 1                    & $\infty$ & 0                                         & 0.11   & 1                     & 0.54   & 1                     & 38       & 1                    & $\infty$ & 0                    & $\infty$ & 0                    & 189      & 1                    \\ \hline
		P4-R12-8                   & 0.27     & 1                    & 1684     & 1                                         & 0.43   & 1                     & 0.39   & 1                     & $\infty$ & 0                    & 1689     & 0                    & 68       & 0                    & 463      & 0                    \\ \hline
		P5-R12-8                   & 0.66     & 1                    & 508      & 1                                         & 2.05   & 1                     & 3.18   & 1                     & $\infty$ & 0                    & 545      & 1                    & 271      & 1                    & 1755     & 1                    \\ \hline
		P6-R12-8                   & 0.5      & 1                    & 285      & 1                                         & 0.92   & 1                     & 0.78   & 1                     & 496      & 1                    & 278      & 1                    & 97       & 1                    & 237      & 1                    \\ \hline
		P11-R19-50                 & 0.02     & 1                    & 13       & 1                                         & 0.23   & 1                     & 0.02   & 1                     & $\infty$ & 0                    & 95       & 0                    & $\infty$ & 0                    & $\infty$ & 0                    \\ \hline
		P12-R19-10                 & 0.06     & 1                    & $\infty$ & 0                                         & 0.04   & 1                     & 0.06   & 1                     & 1018  & 1                    & $\infty$ & 0                    & 4.9      & 1                    & 171      & 1                    \\ \hline
		P17-R12-8                  & 0.05     & 1                    & $\infty$ & 0                                         & 0.24   & 1                     & 0.05   & 1                     & $\infty$ & 0                    & $\infty$ & 0                    & 821      & 0                    & $\infty$ & 0                    \\ \hline
		P2-T-R17-12                & 0.03     & 1                    & $\infty$ & 0                                         & 0.03   & 1                     & 0.02   & 1                     & $\infty$ & 0                    & 170      & 0                    & $\infty$ & 0                    & $\infty$ & 0                    \\ \hline
		P3-R17-7                   & 0.08     & 1                    & $\infty$ & 0                                         & 0.03   & 1                     & 0.08   & 1                     & $\infty$ & 0                    & 1559     & 0                    & $\infty$ & 0                    & $\infty$ & 0                    \\ \hline
		P11-R17-7                  & 0.03     & 1                    & $\infty$ & 0                                         & 0.01   & 1                     & 0.03   & 1                     & $\infty$ & 0                    & $\infty$ & 0                    & $\infty$ & 0                    & $\infty$ & 0                    \\ \hline
		P14-R12-20                 & 0.15     & 1                    & 2026     & 1                                         & 0.33   & 1                     & 0.02   & 1                     & $\infty$ & 0                    & 1877     & 0                    & 167      & 0                    & $\infty$ & 0                    \\ \hline
		m34-R-A-6                  & 0.22     & 1                    & $\infty$ & 0                                         & 0.05   & 1                     & 0.18   & 1                     & $\infty$ & 0                    & $\infty$ & 0                    & $\infty$ & 0                    & $\infty$ & 0                    \\ \hline
		m217-C-A-10                & 0.22     & 1                    & 6        & 1                                         & 0.02   & 1                     & 0.1    & 1                     & 849      & 1                    & 4        & 1                    & 0.3      & 1                    & 1279     & 1                    \\ \hline
		m217-C-D-10                & 0.12     & 1                    & 3        & 1                                         & 0.03   & 1                     & 0.04   & 1                     & 1059     & 1                    & 2        & 1                    & 0.3      & 1                    & 1226     & 1                    \\ \hline
		P1-R19-20                  & 0.02     & 1                    & 1        & 1                                         & \leavevmode\color{blue}{0}      & 1                     & 0.01   & 1                     & $\infty$ & 0                    & 0.8      & 1                    & 0.3      & 1                    & 3131     & 1                    \\ \hline
		P2-R19-20                  & 0.06     & 1                    & 719      & 1                                         & 0.04   & 1                     & 0.07   & 1                     & $\infty$ & 0                    & 678      & 1                    & 11       & 1                    & 464      & 1                    \\ \hline
		P3-R19-15                  & 0.1      & 1                    & $\infty$ & 0                                         & 0.05   & 1                     & 0.19   & 1                     & $\infty$ & 0                    & 1463     & 0                    & $\infty$ & 0                    & $\infty$ & 0                    \\ \hline
		P3-R12-8                   & 0.02     & 1                    & 178      & 1                                         & 0.16   & 1                     & 0.02   & 1                     & $\infty$ & 0                    & 169      & 0                    & 15       & 0                    & 124      & 0                    \\ \hline
		P12-R19-6                  & 0.05     & 1                    & $\infty$ & 0                                         & 0.03   & 1                     & 0.06   & 1                     & 2881     & 0                    & 86       & 0                    & 1573     & 0                    & 2137     & 0                    \\ \hline
		P16-R12-8                  & 0.02     & 1                    & 2988     & 1                                         & 0.13   & 1                     & 0.02   & 1                     & 31       & 1                    & 2940     & 1                    & 6        & 1                    & 40       & 1                    \\ \hline
		P15-R12-8                  & 0.03     & 1                    & $\infty$ & 0                                         & 0.13   & 1                     & 0.03   & 1                     & 1.58     & 1                    & $\infty$ & 0                    & 10       & 1                    & 14       & 1                    \\ \hline
		P1-R18-15                  & 0.01     & 1                    & 1        & 1                                         & \leavevmode\color{blue}{0}      & 1                     & 0.01   & 1                     & 405      & 1                    & 1        & 1                    & 96       & 1                    & $\infty$ & 0                    \\ \hline
		P3-R18-15                  & 0.02     & 1                    & 7        & 1                                         & \leavevmode\color{blue}{0}      & 1                     & 0.01   & 1                     & $\infty$ & 0                    & 7        & 1                    & 2        & 1                    & $\infty$ & 0                    \\ \hline
	\end{tabular}
	\vspace*{-6mm}
\end{table*}

\subsubsection{Bug Finding}

\ \\
\ignore{
\textcolor{teal}{
Table \ref{tab:supplm1m2} shows the results of bug finding. This
experiment has the theme of standard \textit{TestComp} competition
event. Basically, the C program has multiple injected targets and
tools are required to report the very first target and then terminate
the process. Suppose that the targets are \textit{low hanging fruits}
i.e. easy, so any random strategy based tool can find them
easily. But, if these targets are the hard one then it is a
challenging task for any tool. Also, if the target is unreachable then
a tool required a full exercise of SET. The Table \ref{tab:supplm1m2}
has three main columns \textit{Benchmark}, \textit{All Targets},
and \textit{Hard Targets}. Now, columns \textit{All Targets},
and \textit{Hard Targets} split into four sub-columns viz.  KLEE,
CBMC, \tracerx{}-D, and \tracerx{}-R. These sub-columns are further
split into \textit{Time} and \textit{1/0}.}  }

Table \ref{tab:supplm1m2} shows the results of bug finding.
{\tracerx{}-D is our system running under a Depth First Search (DFS)
strategy and \tracerx{}-R with a random (KLEE-like) strategy.  The
column \textit{Time} is in seconds, \textit{1/0} shows whether the
target was proven (reachable or unreachable).  Here, \textquotedblleft
1" means that the target was proved reachable, and \textquotedblleft 0"
shows that the target is unreachable \emph{if there was no timeout}.
For this experiment, the timeout was set to 1 hour.

Fig. \ref{fig:Bugfinding} shows aggregated results for Table \ref{tab:supplm1m2}.
The height of the bar denotes the number of ``wins'' for each system.
A system wins when it proves a target faster than the others.

\begin{table*}[!htb]
	\caption{The results for Supplementary LLVM Block Coverage	Experiment} 
	\label{tab:supplblock-1}
	\vspace*{-3mm}
	\scriptsize
	\begin{tabular}{|l|r|r|r|r|r|r||l|r|r|r|r|r|r|}
 \hline \multirow{3}{*}{Benchmark}
	& \multicolumn{2}{c|}{KLEE} & \multicolumn{2}{c|}{TracerX-D}
	& \multicolumn{2}{c||}{TracerX-R} & \multirow{3}{*}{Benchmark}
	& \multicolumn{2}{c|}{KLEE} & \multicolumn{2}{c|}{TracerX-D}
	& \multicolumn{2}{c|}{TracerX-R} \\ \cline{2-7} \cline{9-14} &
	Time & \multirow{2}{*}{BB} & Time & \multirow{2}{*}{BB} & Time
	& \multirow{2}{*}{BB} & & Time & \multirow{2}{*}{BB} & Time
	& \multirow{2}{*}{BB} & Time & \multirow{2}{*}{BB} \\
	
	& (sec) &                     & (sec)   &                      & (sec)   &                      &
	& (sec) &                     & (sec)   &                      & (sec)   &                      \\ \hline

	psyco2-8                   & 845   & 24.53               & 15      & 24.53                & 55      & 24.53                &
	psyco3-8                   & 842   & 24.62               & 15      & 24.62                & 57      & 24.62                \\ \hline
	psyco4-8                   & 845   & 24.53               & 15      & 24.53                & 57      & 24.53                &
	P12-R19-6                  & 2686  & 80.99               & 1229    & 80.99                & 1978    & 80.99                \\ \hline
	m217-R-D-3                 & 2615  & 2.47                & $\infty$       & 1.86                 & $\infty$       & 2.47                 &
	P3-R12-8                   & $\infty$     & 77.23               & 11      & 77.23                & 122     & 77.23                \\ \hline
	P4-R12-8                   & $\infty$     & 86.57               & 55      & 90.10                & 468     & 90.10                &
	P5-R12-8                   & $\infty$     & 77.89               & 295     & 84.03                & 2494    & 84.03                \\ \hline
	P6-R12-8                   & $\infty$     & 79.18               & 254     & 82.46                & 2043    & 82.46               &
	P16-R12-8                  & $\infty$     & 76.65               & 54      & 77.11                & 352     & 77.11                \\ \hline
	psyco7-8                   & $\infty$     & 13.79               & 9       & 16.39                & 1183    & 16.39                &
	P15-R12-8                  & $\infty$     & 84.27               & 128     & 84.27                & 1007    & 84.27                \\ \hline
	psyco2-10                  & $\infty$     & 24.53               & 101     & 24.53                & 672     & 24.53                &
	psyco3-10                  & $\infty$     & 24.62               & 102     & 24.62                & 706     & 24.62                \\ \hline
	psyco1-100                 & $\infty$     & 56.95               & 4.0     & 56.95                & $\infty$       & 56.95                &
	psyco1-500                 & $\infty$     & 56.95               & 75      & 56.95                & $\infty$       & 56.95                \\ \hline
	psyco1-1000                & $\infty$     & 56.95               & 373     & 56.95                & $\infty$       & 56.95                &
	psyco2-B12                 & $\infty$     & 24.53               & 651     & 24.53                & $\infty$       & 24.53                \\ \hline
	psyco3-B12                 & $\infty$     & 24.62               & 647     & 24.62                & $\infty$       & 24.62                &
	psyco4-10                  & $\infty$     & 24.53               & 101     & 24.53                & $\infty$       & 24.53                \\ \hline
	psyco4-12                  & $\infty$     & 24.53               & 642     & 24.53                & $\infty$       & 24.53                &
	psyco5-100                 & $\infty$     & 54.19               & 4.5     & 54.19                & $\infty$       & 54.19                \\ \hline
	psyco5-500                 & $\infty$     & 54.19               & 82      & 54.19                & $\infty$       & 54.19                &
	psyco5-1000                & $\infty$     & 54.19               & 400     & 54.19                & $\infty$       & 54.19                \\ \hline
	psyco6-100                 & $\infty$     & 59.06               & 6.9     & 59.06                & $\infty$       & 59.06                &
	psyco6-500                 & $\infty$     & 59.06               & 136     & 59.06                & $\infty$       & 59.06                \\ \hline
	psyco6-1000                & $\infty$     & 59.06               & 722     & 59.06                & $\infty$       & 59.06                &
	P17-R12-8                  & $\infty$     & 83.50               & 653     & 83.59                & $\infty$       & 83.59                \\ \hline
	m217-C-D-10                & $\infty$     & 16.69               & 441     & 40.29                & $\infty$       & 22.45                &
	psyco7-10                  & $\infty$     & 13.62               & 12      & 16.39                & $\infty$       & 16.39                \\ \hline
	psyco7-12                  & $\infty$     & 13.50               & 14      & 16.39                & $\infty$       & 16.39               &
	P14-R12-20                 & $\infty$     & 96.55               & 128     & 96.55                & $\infty$       & 96.55                \\ \hline
	P1-R19-20                  & $\infty$     & 58.43               & $\infty$       & 64.79                & $\infty$       & 52.43               &
	P2-R19-20                  & $\infty$     & 82.36               & $\infty$       & 87.58                & $\infty$       & 77.76                \\ \hline
	P3-R19-15                  & $\infty$     & 62.37               & $\infty$       & 62.37                & $\infty$       & 62.37                &
	m217-R-A-25                & $\infty$     & 69.93               & $\infty$       & 51.80                & $\infty$       & 32.83                \\ \hline
	m34-C-A-6                  & $\infty$     & 18.53               & $\infty$       & 25.67                & $\infty$       & 18.53                &
	m34-C-D-10                 & $\infty$     & 2.63                & $\infty$       & 8.36                 & $\infty$       & 3.78                 \\ \hline
	m34-R-A-6                  & $\infty$     & 31.36               & $\infty$       & 11.43                & $\infty$       & 22.67                &
	m217-C-A-10                & $\infty$     & 50.51               & $\infty$       & 61.62                & $\infty$       & 50.51                \\ \hline
	P11-R19-50                 & $\infty$     & 82.88               & $\infty$       & 79.67                & $\infty$       & 82.88                &
	P12-R19-10                 & $\infty$     & 80.99               & $\infty$       & 80.99                & $\infty$       & 82.58                \\ \hline
	P2-T-R17-12                & $\infty$     & 65.82               & $\infty$       & 65.82                & $\infty$       & 65.82                &
	P3-R17-7                   & $\infty$     & 57.70               & $\infty$       & 55.95                & $\infty$       & 57.70                \\ \hline
	P11-R17-7                  & $\infty$     & 84.78               & $\infty$       & 84.78                & $\infty$       & 84.78                &
	P1-R18-15                  & $\infty$     & 94.33               & $\infty$       & 51.27                & $\infty$       & 90.65                \\ \hline
	P3-R18-15                  & $\infty$     & 18.16               & $\infty$       & 23.95                & $\infty$       & 17.50                & - & - & - & - & - & - & - \\ \hline
	\end{tabular}
\end{table*}

\normalsize
\begin{figure}[!htb]
	\centering
	\begin{subfigure}{.5\textwidth}
		\centering
		\includegraphics[width=1\linewidth]{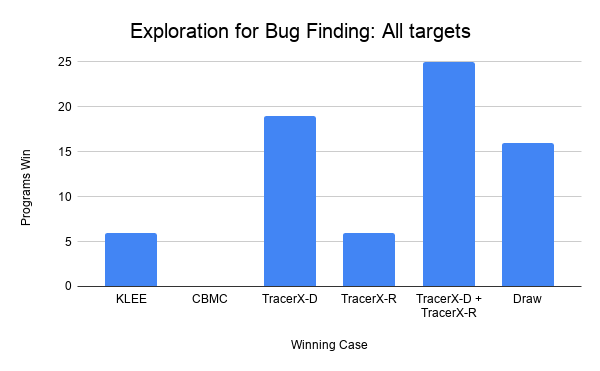}
		\label{fig:BugfindingAllTargets}
	\end{subfigure}%
	\begin{subfigure}{.5\textwidth}
		\centering
		\includegraphics[width=1\linewidth]{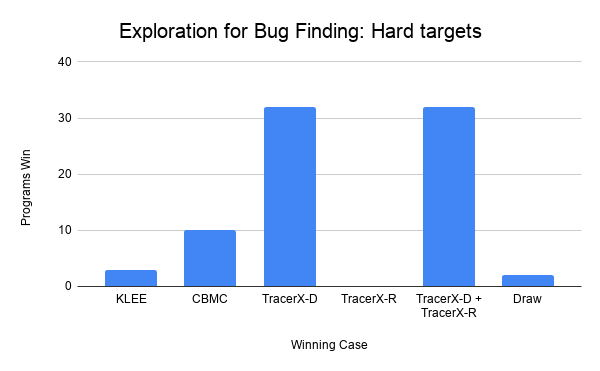}
		\label{fig:BugfindingHardTargets}
	\end{subfigure}
	\vspace{-8mm}
	\caption{Aggregated results for Bug Finding}
	\label{fig:Bugfinding}
	\vspace{-2mm}
\end{figure}

\normalsize
\subsubsection{Block Coverage} 

\ \\

We consider LLVM Basic Block Coverage (\textit{BB}) as our coverage metric.  
Therefore we shall only compare against KLEE is this sub-experiment.

\ignore{
If a system finishes within timeout so, the reported block coverage is the maximum possible.
Now, if all system finish within timeout then the total execution time
consumed by the system should be compared, and if one of the baseline
systems has less execution time consumed so the system considered to
be the winner. Also, when none of the systems finish the execution
then the higher \textit{BB} by either of the systems will be the
winner.
}

Table \ref{tab:supplblock-1} shows the results of coverage achieved on 47 SV-COMP programs.  
The column \textit{BB} shows the block coverage percentage.
Columns 1 and 5 show the \textit{Benchmark} names. Columns
2 to 4 \& 6 to 8 show KLEE, \tracerx{}-D and \tracerx{}-R. 
The columns \textit{Time} in seconds.
Timeout set at 1 hour ($\infty$ in the table). 

We have also experimented with the Coreutils benchmark, for which KLEE
is famous for proving good coverage.
For space reasons, we relegate the detailed results to the appendix,
in Table \ref{tab:coreutils}.
Instead, Fig. \ref{fig:Coreutils} gives an overall picture of comparison
with KLEE, and also of comparison between using a DFS or random
strategy.  

Finally, see the aggregate results for coverage on both the SV-COMP
and Coreutils benchmarks in Fig. \ref{fig:explorationcoverage}.

\normalsize
\begin{figure}[!htb]
	\centering
	\begin{subfigure}{.5\textwidth}
		\centering
		\includegraphics[width=1\linewidth]{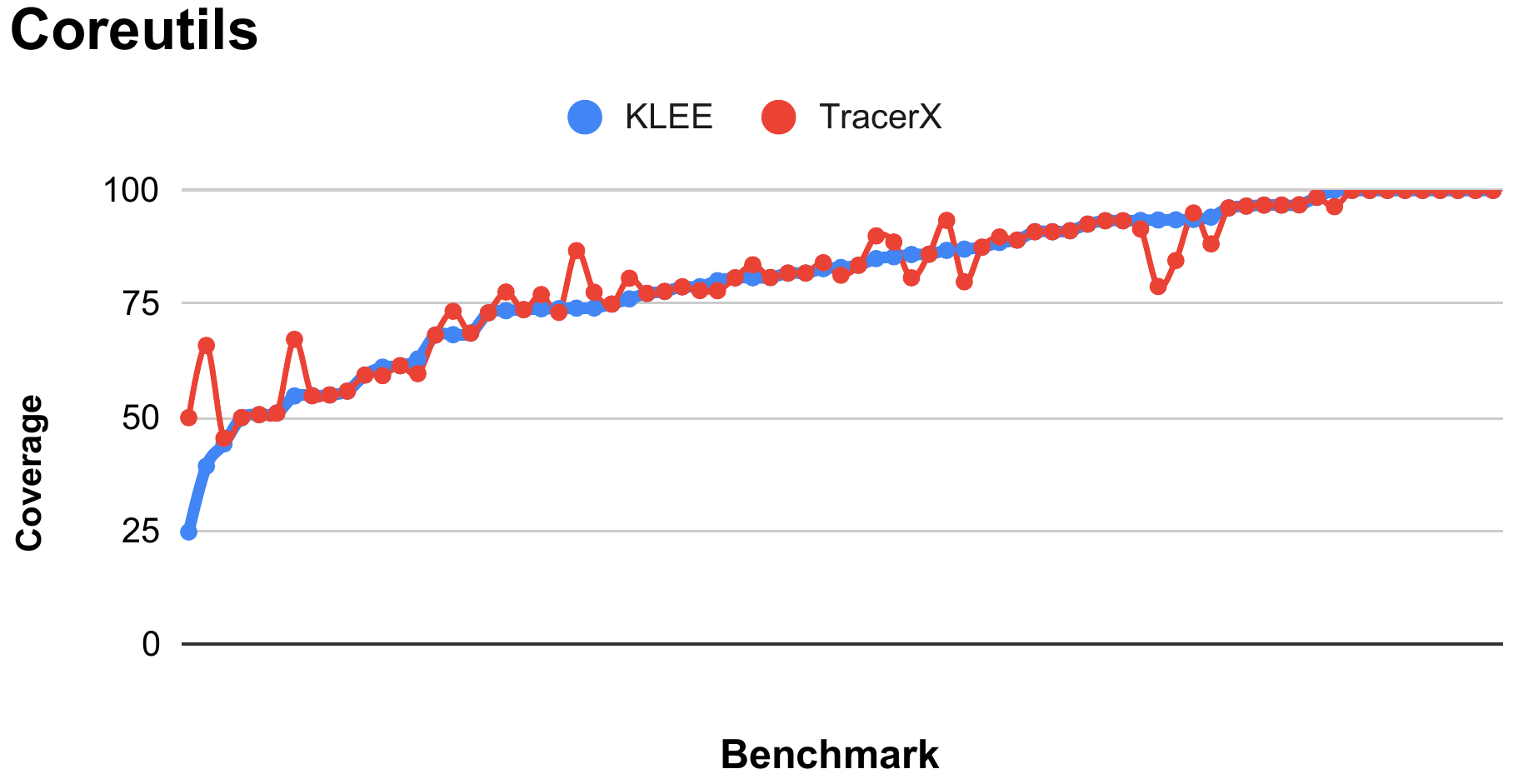}
		\caption{KLEE vs. TracerX}
		\label{fig:NonCoreutils}
	\end{subfigure}%
	\begin{subfigure}{.5\textwidth}
		\centering
		\includegraphics[width=1\linewidth]{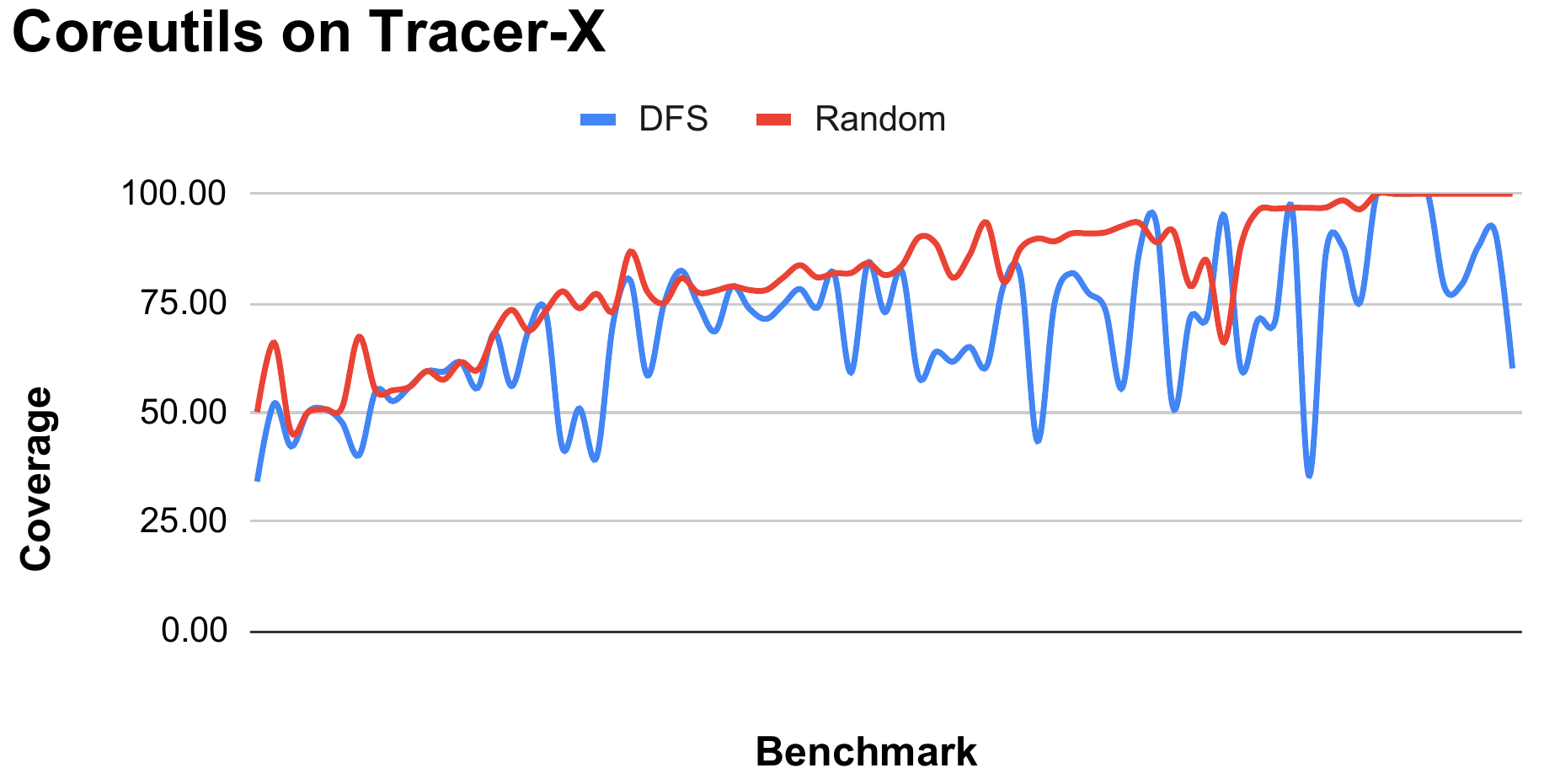}
		\caption{TracerX-D vs. TracerX-R}
		\label{fig:BBCovCor}
	\end{subfigure}
	\vspace{-3mm}
	\caption{Analysis of Coreutils Programs }
	\label{fig:Coreutils}
	\vspace{-5mm}
\end{figure}

\begin{figure}[!htb]
	\centering
	\begin{subfigure}{.5\textwidth}
		\centering
		\includegraphics[width=1\linewidth]{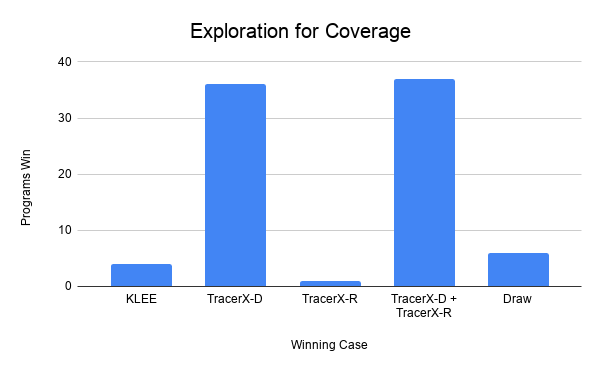}
		\caption{Exploration of Coverage for SV-COMP programs}
		\label{fig:NonCoreutils}
	\end{subfigure}%
	\begin{subfigure}{.5\textwidth}
		\centering
		\includegraphics[width=1\linewidth]{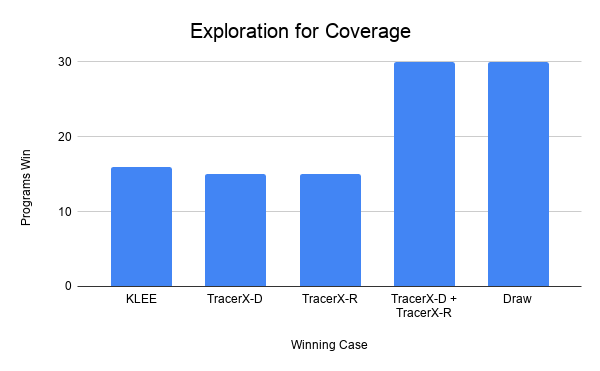}
		\caption{Exploration of Coverage for Coreutils programs}
		\label{fig:BBCovCor}
	\end{subfigure}
	\vspace{-3mm}
	\caption{Aggregated results of Coverage}
	\label{fig:explorationcoverage}
		\vspace{-5mm}
\end{figure}


\section{Summary of the Results}
\label{sec:summ-results}



We now discuss the results presented above.

\subsection{Main Experiment}

Fig. \ref{fig:AllTargetscount} considered all targets.
Clearly \tracerx{} has superior results in terms of
proving both reachable and unreachable targets;
it times out less.
Note that KLEE was relatively poor in proving unreachable targets,
while CBMC was relatively poor for reachable targets.
In the end, \tracerx{} \emph{wins} in 1339 (26.57\%)  targets, while  \emph{loses} in only
112 (2.21\%) targets. 
Moving to hard targets, Fig. \ref{fig:HardTargetscount}, the gap widens.
\tracerx{} \emph{wins} in 796 (54.15\%)  targets, while  \emph{loses} in only
64 (4.35\%) targets.

\ignore{
It is  clear that \tracerx{} outperforms 
the baseline tools in all targets experiment since it has a smaller number of timeout cases. 
KLEE is almost negligible for proving unreachable targets since it has only proved 2 unreachable 
targets.  On the other hand, KLEE outperforms CBMC in finding reachable targets. 

See Fig. \ref{fig:HardTargetscount} which presents  the aggregated number of targets where each
tool was able to prove in the  Hard targets experiment. Again, it is  clear that \tracerx{} outperforms 
the baseline tools since it has a smaller number of timeout cases.  We also observed that the number 
of timeouts \tracerx{} are less for hard targets compared to the All targets experiment.
Similarly,  \tracerx{} has superior results as compared to KLEE and CBMC in terms of
proving both reachable and unreachable targets. 
This time KLEE has more timeouts.  However, KLEE outperforms CBMC in finding reachable targets. 
Regarding, the total number of unreachable targets \tracerx{} can prove more comparing
to CBMC. KLEE has not proved any unreachable targets. For the total number of reachable targets,
again \tracerx{} proved more compared to CBMC and KLEE. 
In total, \tracerx{} \emph{wins} in 796 (54.15\%)  targets, while  \emph{loses} to KLEE/CBMC in only
64 (4.35\%) targets. Clearly, the performance of \tracerx{}  has increased on hard targets. 
}

In summary for the main experiment, we now present a metric 
$AT\_Win\_Ratio$
for the final results
in Tables \ref{tab:alltargets} and \ref{tab:hardtargets}.
Let $AT\_Win$ denote the number of \tracerx{} wins,
and  $AT\_Lose$ for losses. 
We define

$$
AT\_Win\_Ratio\% = \frac{AT\_Win - AT\_Lose}{AT} = \frac{1339 - 112}{5058} = 24.25\% 
$$

\noindent
to capture our performance advantage in percentage terms.
Similarly, for hard targets:

$$
HT\_Win\_Ratio\% = \frac{HT\_Win - HT\_Lose}{HT} = \frac{796 - 64}{1470} = 49.79\% 
$$

\noindent
Clearly, \tracerx{} is more effective as the targets become harder.

In a second comparison, we consider the relative speed of the tools.
Before proceeding we mention the total time, in minutes, utilized for the three tools,
\tracerx, KLEE and CBMC was 7782, 19648 and 20105 respectively
for all targets.  For hard targets, the numbers are
2634, 14630 and 10630. 

In Fig. \ref{fig:AllTargetsSpeed}, we aggregate the relative speedup of \tracerx{} 
over KLEE and CBMC.   Recall that we are considering targets for which the tools terminate.
Over all targets, \tracerx{} is 38.55$\times$ faster than KLEE and 137.56$\times$ faster than 
CBMC.  Also, it can be observed that KLEE and CBMC have nearly the same total time. Regarding 
the speed computation, for \tracerx{} has in total, 33 winning programs, and 10 losing programs 
as compared to KLEE\footnote{There are some programs with 0.0 times faster, it means the time
difference is very small.}. Also, \tracerx{} has in total, 20 winning programs and 0 losing programs 
as compared to CBMC. 
When considering hard targets, Fig. \ref{fig:HardTargetsSpeed}, the numbers are as follows.
\tracerx{} is 490.26$\times$ faster than KLEE and 37.50$\times$ faster than CBMC.  
\tracerx{} has won in 7 winning programs and loses in no programs as compared to KLEE. Also, \tracerx{} 
wins in 20 programs and loses in 4 programs as compared to CBMC. 


\subsection{Supplementary Experiment}

We first discuss the bug-finding results.
We can observe in Table \ref{tab:supplm1m2}\footnote{Three programs were finished
very fast assume (<0.009 sec) which have been highlighted with blue
color.} that KLEE found the targets easily for 44
programs out of 47 programs. KLEE timeouts on the remaining 3 programs since all the
targets were unreachable. But, when we ran the same
set of programs with hard targets, then KLEE timeouts on 32 programs,
proves 4 programs have unreachable targets, and proves a first target as
reachable for 11 programs. We can observe that KLEE struggles in proving
hard targets. On the other hand, the performance of CBMC for \emph{all} and
\emph{hard} targets experiments is almost the same except for few cases. 

While in the all targets experiment KLEE outperforms CBMC, 
CBMC has better performance  in the hard targets programs.
There are some programs which are draw where two of the tools
have the same performance.

Finally, we consider \tracerx{}. Here, we observe that \tracerx{}-D is having a 
good performance in nearly all the easy targets. However, in some cases,
it fails to reach the performance of KLEE. In these programs, we notice that 
\tracerx{}-R is competitive compared with KLEE. Moving to the hard targets,
 we observe that \tracerx{}-D has a better performance compared to  \tracerx{}-R.

See Fig. \ref{fig:Bugfinding} 
which presents  the aggregated number of programs where each
tool was able to prove in the  all targets and hard targets experiments. 
Here, we separately compare \tracerx{}-D and \tracerx{}-R with the baseline tools.
\tracerx{}-D clearly outperforms KLEE and CBMC in both all targets and 
hard target experiments. In the all targets experiment, we notice that
\tracerx{}-R wins on nearly as many programs as KLEE. By combining 
\tracerx{}-D and \tracerx{}-R for these experiments,   i.e. \tracerx{}-D
+ \tracerx{}-R, then \tracerx{} has more number of winning cases
and  outperforms KLEE and CBMC significantly.

In summary, we conclude  that \tracerx{}-D has
good performance on both all Targets and hard Targets categories
Also, we have noted that \tracerx{}-R is competitive with
KLEE and when considered it can improve the overall performance of \tracerx{}.

We now discuss the coverage experiments, where we compare with only KLEE,
and the set of targets is defined by the basic blocks.
Fig. \ref{fig:NonCoreutils} shows the aggregated results for SV-COMP
programs. It can be observed that KLEE terminated only on 5 programs
and timeout on 42 programs, whereas \tracerx{}-D terminated on 31
programs and \tracerx{}-R terminated on 13 programs. Among the 47
programs, KLEE wins on 4 programs. There is a group where none of the
systems terminated within timeout so higher \textit{BB} will be
required to compare. Here, \tracerx{}-D wins on 6 programs out of 15
 programs. There is 1 program for which \tracerx{}-R won, and also
for 2 programs \tracerx{}-R has better coverage compared to \tracerx{}-D
but the same as KLEE. If we consider \tracerx{}-D and \tracerx{}-R
together then our system wins on 38 programs.

We finally discuss  the performances of KLEE, \tracerx{}-D, 
and \tracerx{}-R on Coreutils benchmarks. From
Table \ref{tab:coreutils} in the Appendix section, we can observe
that \tracerx{}-D terminates and is faster in 12 programs compared to 
KLEE. 
Next, in Fig. \ref{fig:Coreutils} on the Coreutils  programs where
neither KLEE nor \tracerx{} terminates, we observe that  KLEE has better
coverage in 13 programs. Here,  \tracerx{}-D does not perform well because
of the huge execution SET. \tracerx{}-D has better coverage in only  3
programs. However, \tracerx{}-R has competitive results as compared to
KLEE. \tracerx{}-R has better coverage on 15 programs. Moreover,  in
Fig. \ref{fig:BBCovCor} we report the aggregated result on the 75 
Coreutils programs. Overall, KLEE wins on 16 programs, and our
combined result of \tracerx{}-D + \tracerx{}-R wins on 30 programs.

In summary, consider first the 47 SV-COMP programs.
In bug-finding, \tracerx{} wins on 25 programs considering all targets,
and on 32 programs considering hard targets.
In coverage, the win is 38.
Finally, for the Coreutils programs, the win is 30 out of 75, with a loss of 16.

The overall conclusion of these sets of experiments is that our algorithm
has \emph{significantly improved path coverage} of DSE by means of its interpolation algorithm.
Clear evidence is given by showing many targets where \tracerx{} complete
search while other systems cannot, or are significantly slower.
When faced with an incomplete search, the result is less clear.
This may be because the link between path coverage and 
code coverage/bug-finding is not clear.
Nevertheless, our experiments do show that our algorithm is competitive
or better for this purpose too.


\ignore{
	Before proceeding, we recall some aspects of the KLEE/\tracerx{} relationship,
	and some aspects of \tracerx{} interpolation behavior.
	
	\vspace*{3mm}
	{\bf Exploration.}
	\tracerx{} exploration is in principle very close in nature to that of KLEE.
	At any given point when KLEE has generated a SET, say $T_k$,
	then running \tracerx{} would generate a SET, say $T_t$,
	where $T_t$ is a \emph{truncation} of $T_k$.
	This means that every path in $T_t$ is either equal to or a \emph{prefix}
	of a corresponding path in $T_k$.
	In one extreme case the two trees are identical (which means \tracerx{}
	was totally ineffectual); in another extreme case, $T_t$ is
	exponentially smaller than $T_k$.
	
	{\bf Exploration order aka Search Strategy.}
	KLEE (and therefore \tracerx{}) implements various search strategies,
	which means that at a given point in execution time, a different tree
	can be generated so far.  But whichever strategy is chosen,
	if \tracerx{} uses the same strategy as KLEE, then 
	$T_t$ remains a truncation of $T_k$.
	
	We shall consider only two strategies: depth-first search (DFS),
	and KLEE's ``random'' strategy (RDM) presented in \cite{cadar08klee} 
	which maximizes coverage.
	If the full SET can be generated within timeout, then for KLEE,
	the final search tree $T_k$ is fixed and thus performance
	would not be significantly affected by the choice of strategy.
	For \tracerx{}, however, this choice can matter greatly because
	the size of $T_t$ can vary greatly.
	On the other hand, if the SET is so big that full coverage is implausible,
	then a DFS strategy in a non pruning DSE (like KLEE) is 
	known to have poor coverage \cite{cadar2008exe}.

	{\bf Throughput.}
	Clearly \tracerx{} generates the SET at a \emph{much slower} pace,
	measured as nodes generated per unit time, as compared to KLEE.  In
	fact, in a side experiment involving 10 programs from \cite{coreutils}
	where both KLEE and \tracerx{} fully explore the SET
	(\benchmark{basename}, \benchmark{chroot}, \benchmark{cksum}, \benchmark{dirname}, \benchmark{printenv}, \benchmark{pwd}, \benchmark{runcon}, \benchmark{sync}, \benchmark{uptime}
	and \benchmark{users}), we estimated that KLEE constructs the SET at a
	rate of \emph{22 times faster} than \tracerx{}.
	
	{\bf Formation of Interpolants.}
	Finally, we discuss the effect of DFS and RDM within \tracerx{}.
	As mentioned above, the choice of search strategy is important for \tracerx{},
	and the main reason for this is in the
	\emph{formation of full interpolants}.
	In fact, the most effective way of getting full interpolants quickly is via DFS.
	However, if exploration cannot be complete and RDM is chosen, full interpolants
	can still be formed, and therefore some pruning can be achieved.
}

\ignore{
	\textcolor{red}{TODO:BELOW IS OLD COREUTILS TEXT}\\
	See Tables \ref{table:grp1} and \ref{table:grp2}, on Coreutils programs.
	In both tables, the columns are as follows.
	KLEE column reports KLEE running in its random mode, 
	the choice mode for code coverage. \tracerx{} column reports \tracerx{} running using 
	depth-first search (DFS) when the search is completed,
	and a \emph{random} strategy (as in KLEE) otherwise. 
	$Time$ indicates the analysis time in seconds, $Inst$
	reports the total LLVM instructions in millions
	(indicating the size of the SET), 
	$\#C$ reports the (LLVM) \emph{block coverage} in percent and
	finally $\#E$ reports the number of errors found.
	
	In Table \ref{table:grp1} \tracerx{} terminates, ie. with \emph{complete} coverage. However, KLEE only terminates on 10 out of 16 programs.
	
	%
	
	In the second group (presented in Table \ref{table:grp2}) are 60 programs
	for which both KLEE and \tracerx{} time out or reach the memory cap.
	Therefore we cannot use path coverage as the performance metric.
	Instead, we shall measure \emph{code coverage} (of \emph{basic blocks}),
	reported in the
	$\#C$ column and the number of memory errors found $\#E$ (reported in brackets).
	Now, it may not be meaningful to compare two different strategies in an
	incomplete search.  Therefore the primary focus of this second group
	experiment is KLEE vs \tracerx{} with a \emph{random} strategy.
	
	
	
	See Table \ref{table:grp3} for the results of our experiments on sv-comp  and the WCET {M\"{a}lardalen} benchmark. Columns 1 to 2 show program name, program size in lines of code ($LOCs$), and $N$ which was used to create the problem with different bounds. $Time$ presents the analysis time and $\#E$ reports the number of memory errors found.  
	
	We used 20 programs where for 10 programs (including the 4 largest programs) 
	we have experimented for more than one bound. In total of 39 runs, \tracerx{} terminates in 37 runs and KLEE only terminates in 16 runs. }

\ignore{
	We shall first summarize a comparison with KLEE.
	
	\begin{itemize}
		\item[$\bullet$] {\bf Complete Search} \\ 
		Of 115 program runs, \tracerx{} terminated on 53 vs 26 for KLEE.
		Of the 25 programs where both systems terminated, 
		\tracerx{} was faster in 19, slower in 5. 
		Aggregating the result for this terminating case,
		\tracerx{} was about 8.3X faster, and its SET size was about 27X (smaller).
		
		\item[$\bullet$] {\bf Code Coverage/Bug Finding} \\
		Of the remaining programs where \tracerx{} does not terminate,
		only Table \ref{table:grp2} is relevant here.
		Overall, in 17 programs, \tracerx{} has higher coverage.
		In 14 programs, the \emph{code coverage difference} is 7\% or less (in
		our favor).  Furthermore, in \benchmark{pr}, \benchmark{sort}
		and \benchmark{split} the difference in coverage is 24.8\%, 28.1\% and
		12.1\% respectively.  In 14 programs KLEE returns higher coverage.  Out of
		these 14 programs in two programs \tracerx{} with DFS strategy
		outperforms KLEE and \tracerx{} with random strategy
		(\benchmark{kill} with 95\% and \benchmark{du} with 62.5\% coverage).  In the
		10 out of the 12 other programs the difference in the coverage of \tracerx{} and KLEE is less than 7\% (in KLEE's favor).
		Aggregating all the coverages from the 60 programs in Table \ref{table:grp2},
		\tracerx{} has 20.99\% higher coverage compared to KLEE.
		
		For bug finding, Tracer-X finds 3 new errors that are not found by
		KLEE: \benchmark{kill}\footnote{\tracerx{} with DFS
			strategy} (Line 176 of kill.c), \benchmark{hostid} 
		(Line 362 of socketcalls.c) and \benchmark{sort}
		(Line 24 of memmove.c). In short, we have the edge.
	\end{itemize}
	
	\noindent
	Next we summarize a comparison with CBMC/LLBMC.
	
	\begin{itemize}
		\item[$\bullet$] {\bf Complete Search} \\ 
		Of 20 programs and 39 runs, 
		\tracerx{} terminates in 37 while CBMC/ LLBMC in 20/23.
		Where one system terminates, 
		\tracerx{} is faster in 24 while slower in 3/1.
		Aggregating the result for the terminating cases,
		\tracerx{} was about 3.1X faster than CBMC and 2.15X faster than LLBMC.
		
		\item[$\bullet$] {\bf Code Coverage/Bug Finding} \\ 
		Code coverage is not applicable here (LLBMC does not report this,
		and CBMC has a different emulation mode).
		For bug finding, \tracerx{} finds 109 more bugs than CBMC/LLBMC.
		
	\end{itemize}
}

\ignore{
	
	\tracerx{} wins in more than 28 out of 36 runs. 
	In 6 runs {\bf CBMC} wins overall, 
	in 1 runs {\bf LLBMC} wins overall,
	and in one run \benchmark{problem13} size 11 both {\bf Tracer-X} and {\bf CBMC} 
	find 20 memory errors within timeout. 
	In total, {\bf Tracer-X} is able to find
	113 errors which were not detected by {\bf CBMC} and {\bf LLBMC} within
	timeout. Finally, in the largest programs 
	(\benchmark{problem11}, \benchmark{problem13} to \benchmark{problem16}),
	\tracerx{} is more scalable as the size parameter increases.

	For Table \ref{table:grp1}, {\bf KLEE} terminates in 10 of 16 programs, 
	and in 8,{\bf Tracer-X} is significantly faster,
	On average, {\bf Tracer-X} is 2.4$\times$ faster,
	the SET size is nearly $50\times$ smaller.
	
	For Table \ref{table:grp2}, in 17 programs {\bf Tracer-X} has higher coverage.
	In 14 programs, the \emph{code coverage difference} is 7\% or less (in
	our favor).  Furthermore, in \benchmark{pr}, \benchmark{sort}
	and \benchmark{split} the difference in coverage is 24.8\%, 28.1\% and
	12.1\%.  In 14 programs {\bf KLEE} returns higher coverage.  Out of
	these 14 programs in two programs {\bf Tracer-X} with DFS strategy
	outperforms {\bf KLEE} and {\bf Tracer-X} with random strategy
	(\benchmark{kill} with 95\% and \benchmark{du} with 62.5\% coverage).  In the
	10 out of the 12 other programs the difference in the coverage of {\bf
		Tracer-X} and {\bf KLEE} is less than 7\% (in KLEE's favor).
	
	In 3 programs, {\bf Tracer-X} finds more memory errors compared to
	{\bf KLEE}: \benchmark{kill}\footnote{\tracerx{} with DFS
		strategy} (Line 176 of kill.c), \benchmark{hostid} 
	(Line 362 of socketcalls.c) and \benchmark{sort}
	(Line 24 of memmove.c). {\bf Tracer-X} has a higher
	code coverage compared to {\bf KLEE} in \benchmark{kill} and a much
	higher coverage in \benchmark{sort} (nearly 28\% more). On the other
	hand, in \benchmark{nl}, \benchmark{tac}, \benchmark{cut}
	and \benchmark{pr} {\bf KLEE} finds more memory errors. In three of
	these programs {\bf KLEE} has a higher code coverage compared to {\bf
		Tracer-X}.  On average the size of the SET explored by {\bf Tracer-X}
	(representing the complete SET) is nearly 7.3$\times$ smaller than
	{\bf KLEE}.
	
	In comparing against {\bf CBMC} and {\bf LLBMC},
	
}


\section{Related Work}
\label{sec:related}
Abstraction learning in symbolic execution has its origin in
\cite{jaffar09intp}, and is also implemented in the TRACER
system~\cite{jaffar11unbounded,jaffar12tracer}. TRACER implements two
interpolation techniques: using unsatisfiability core and weakest
precondition (termed \emph{postconditioned\/} symbolic execution in
\cite{yi15postconditioned}). Systems that use unsatisfiability core
and weakest precondition respectively include Ultimate
Automizer~\cite{heizmann14ultimate}, and a KLEE modification reported
in \cite{yi15postconditioned}. The use of unsatisfiability core
results in an interpolant that is conjunctive for a given program
point and therefore requires less performance penalty in handling.
In contrast, 
weakest precondition might be more expensive to compute, yet logically
is the weakest interpolant, hence its use may result in more
subsumptions.


\ignore{
Abstraction learning is related to various no-good learning techniques
in \emph{Constraint Satisfaction Problem\/} (\emph{CSP\/})
solvers~\cite{frost94deadend} and \emph{conflict-driven and clause
  learning\/} (\emph{CDCL\/}) techniques in SAT
solving~\cite{bayardo97sat,silva96grasp}.  In these, learning occurs
when an algorithm encounters a {\em dead-end}, where any further
decision in the decision tree will remain inconsistent.
}

Abstraction learning is also popularly known as \emph{lazy annotations\/} 
(LA) in \cite{mcmillan10lazy,mcmillan14lazy}. In \cite{mcmillan14lazy}
McMillan reported experiments on comparing abstraction learning with
various other approaches, including \emph{property-directed
  reachability\/} (\emph{PDR\/}) and \emph{bounded model checking\/}
(\emph{BMC\/}).  He observed that PDR, as implemented in Z3 produced
less effective learned annotations. On the other hand, BMC technology, 
e.g. \cite{clarke05sat,cordeiro12esbmc,holzer08fshell,llbmc12},
employs as backend a SAT or SMT solver, hence it employs learning,
however, its learning is \emph{unstructured}, where a learned clause
may come from the entire formula~\cite{mcmillan14lazy}. In contrast,
learning in LA is structured, where an interpolant
learnt is a set of facts describing a single program point.

\ignore{
\noindent
\fbox{\begin{minipage}{0.48\textwidth}
Sometimes the encoding done by BMC can slow down BMC's performance,
see the
example \url{https://github.com/tracer-x/klee-examples/blob/master/basic/malloc3.c},
which has different mallocs in both if branches within the loop.  When
the mallocs are hoisted outside the loop into a single malloc as
in \url{https://github.com/tracer-x/klee-examples/blob/master/basic/malloc2.c},
however, the performance issue disappear. Preliminary check reveals
that by changing the value of MAX, although the formula size increases linearly,
the time increases exponentially.

\noindent
MAX / Time (s) / Formula size (bytes of SMT-LIB2 file)\\
7   /      0.416   /       71362\\
9   /      2.144    /      104780\\
11  /      12.548   /      142998\\
13  /      68.784   /      186016
\end{minipage}}
}

\ignore{
Other than in abstraction learning, Craig interpolation has been used
in \emph{counter\-example-guided abstraction refinement\/}
(\emph{CEGAR\/})~\cite{henzinger04proof,beyer11cpachecker,dudka11predator,greben12hsfc,kroening11wolverine},
which is one of the major approaches in software verification.
Abstraction learning can be said to be another category different from
CEGAR: It computes abstraction of programs similar to CEGAR, but in
the \emph{reverse\/} order: it starts with the most refined
abstraction, and gradually reduces the amount of information.
}

Recently {\em Veritesting} \cite{avgerinos16veritesting} leveraged modern SMT
solvers to enhance symbolic execution for bug finding. 
Basically, a program is partitioned into {\em difficult} and
{\em easy} fragments: the former are explored
in DSE mode (i.e., KLEE mode), while the latter are explored using
SSE mode with some power of pruning (i.e., BMC mode).
Though this paper and {\em veritesting} share the same motivation,
the distinction is clear. First, our learning is \emph{structured} and 
has \emph{customizable} interpolation techniques. Second, we directly
address the problem of pruning in DSE mode via the use of symbolic addresses.
In contrast, there will be program fragments where Veritesting's performance
will downgrade to naive DSE, e.g. our motivating examples.
In summary, we believe that our proposed algorithm can also be 
used to enhance Veritesting.

Our approach is also slightly related to various \emph{state merging\/}
techniques in symbolic execution, in the sense that both state merging
and abstraction learning terminates a symbolic execution path
prematurely while ensuring precision. State merging encodes multiple
symbolic paths using \texttt{ite} expressions (disjunctions) fed into the
solver. The article \cite{hansen09merging} shows that state merging
may result in significant degradation of performance, which hints that
complete reliance on constraint solver for path exploration, as with
the bounded model checkers (e.g., CBMC, LLBMC), may not always be the
most efficient approach for symbolic
execution.
\ignore{
\cite{kuznetsov12efficient} proposes a symbolic execution
that is based on KLEE, addressing the two problems of state merging:
degradation of performance due to solver having to deal with
disjunctions, and inability to control search strategy (the strategy
is dictated by the solver's implementation), respectively using
heuristics.
}

Finally, there is very recent work on KLEE \cite{trabish2018chopped}
that exploits a dependency analysis to identify redundant code
fragments that may be ignored during symbolic execution.  More
specifically, they execute some user-chosen functions only on-demand,
using program slicing to reduce demand.
This work is somewhat orthogonal to our work because of the manual input
and because the slicing is a static process. 
In contrast, our algorithm is completely general and dynamic.

\ignore{

"rabish et al. [32] evaluate the use of program slicing during chopped
sym-bolic execution. Chopped symbolic execution executes some
(pre-determined)functions only on-demand when needed. Program slicing
is used to further lowerthe cost of the execution of these functions,
so it is also invoked on-demandduring the analysis. The evaluation was
done on 6 security vulnerabilities, eachparametrised by 3 different
search heuristics. The authors report that programslicing can
significantly help their technique in some cases, but report also a
slow-down in some other cases. They planned to avoid this problem by
an automaticanalysis which decides when to use program slicing."

https://www.fi.muni.cz/~xstrejc/publications/ifm2019preprint.pdf

"Recent work by Trabish et al. proposed chopped symbolic execution,
which allows the user to manually specify uninteresting parts of the
code (list of functions along with specific call sites) to skip to
improve the performance of symbolic execution [65]. Their
proposedtechnique, Chopper, lazily executes the code that user
specified to skip. The evaluation is conducted on GNU libtasn1 which
is a library that serializes and deserializes data in Abstract Syntax
Notation One syntax. The authors manually created an execution driver
for the library and manually specified a set of functions to skip. Our
proposed technique,DASE, differs from Chopper in two ways in terms of
the way it specifies the uninteresting parts of the code. First, DASE
specify the interesting parts of the code by constraining the standard
input of the program whereas Chopper specifies them through annotation
of the source code. Second, DASE automatically extracts the
constraints from the source code using natural language processing
techniques and heuristics, whereas Chopper requires manual annotation
on the functions"

https://uwspace.uwaterloo.ca/bitstream/handle/10012/14452/Wong_Edmund.pdf?sequence=3

}

\ignore{
{\bf Concluding Remarks.}  We have described Tracer-X, a tool extending KLEE with interpolation-based
pruning.  Though the application setting in this paper is testing, in future work,
we wish to address both verification and analysis.  This will involve
Tracer-X (a) accepting manual input regarding abstractions, and (b)
computing \emph{quantitative} information from search trees.
We feel these directions are credible when starting from the principle
of complete symbolic exploration.
}

\ignore{
\bigskip
\fbox{
\color{magenta} Talk about BMC
}
\bigskip
}


\section{Conclusion}
\label{sec:conclusion}
We presented a new interpolation algorithm and an
implementation \tracerx{} to extend KLEE with pruning.  The main
objective is to address the path explosion problem in pursuit
of \emph{code penetration}: to prove that a target program point is
either reachable or unreachable.  That is, our focus is verification.
We showed via a comprehensive experimental evaluation that, while
computing interpolants has a very expensive overhead, the pruning
it provides often far outweighs the expense, and
brings significant advantages. 
In the experiments, we compared
against KLEE, a dynamic symbolic system with no pruning, and CBMC, a
static symbolic execution system which does have pruning.
We showed that our system outperforms when experimented for penetration.
In fact, the performance gap widens when the verification target is
harder to prove.  We finally demonstrated that our system
is also competitive in testing.

\ignore{

Dynamic Symbolic Execution is an important method for the
testing of programs.  An important system on DSE is
KLEE~\cite{cadar08klee} which inputs a C/C++ program annotated with
symbolic variables, compiles it into LLVM, and then emulates the
execution paths of LLVM using a specified backtracking strategy.  The
major challenge in symbolic execution is {\em path explosion}.  The
method of \emph{abstraction
learning} \cite{jaffar09intp,mcmillan10lazy,mcmillan14lazy} has been
used to address this.  The key step here is the computation of
an \emph{interpolant} to represent the learnt abstraction.

~~~~~~~In this paper, we present a new interpolation algorithm and implement it
on top of the KLEE system.   The main objective is to address the path explosion
problem in pursuit of \emph{code penetration}: to prove that a target program
point is either reachable or unreachable.
That is, our focus is verification.
We show that despite the overhead of computing interpolants,
the \emph{pruning} of the symbolic execution tree
that interpolants provide often brings significant overall
benefits.   
We then performed a comprehensive experimental evaluation against KLEE,
as well as against one well-known system that is based on Static Symbolic
Execution,  CBMC \cite{cbmc}.
Our primary experiment shows code penetration success at a new level,
particularly so when the target is hard to determine.
A secondary experiment shows that our implementation is competitive
for testing.
}

\ignore{
Finally, we mention that our tool now provides a testbed for new
ideas on symbolic execution.  We mention two:

\begin{itemize}
\item[$\bullet$]
We see the potential
of \emph{custom interpolation} algorithms which exploit special
properties of the program.  For example, if program expressions and
assertions are limited to arithmetic intervals described by linear
arithmentic bounds, there is a wealth of knowledge about interval
reasoning that can be brought to bear.
\item[$\bullet$]
Another potential use is to liberalize the use of \emph{subsumption} so that
subsumption takes place even in the absence of a bona fide interpolant.
This provides pruning, albeit \emph{unjustified in principle},
so as to direct random search in a desired direction.
For example, if at a program point, a Hoare invariant (e.g X = Y * Y)
is suspected to hold, then we should subume any context where this invariant
is true.  In contrast, a program verifer would replace the context with
the invariant, and then be subjected to reasoning about nonlinear arithmetic.
\end{itemize}
}

\ignore{
We have described an algorithm for symbolic execution with emphasis
on verifying and testing memory safety properties.
It is founded on the technique of abstraction learning ~\cite{jaffar09intp},
also known as lazy annotations \cite{mcmillan10lazy,mcmillan14lazy}.
The main contribution here is the interpolation method which is able to
relax the value of a pointer to a range of offsets to an array
(as opposed to being restricted to being a fixed offset),
and also to apply a pre-solving process to the subsumption entailments
so as to avoid using a general-purpose constraint solver which
accomodates quantified formulas.
Finally, we implemented the \tracerx{} system and
experimentally demostrated a new level of applicability.
}

\ignore{
Traver-X, a tool extending KLEE with interpolation-based
pruning.  Though the application setting in this paper is testing, in future work,
we wish to address both verification and analysis.  This will involve
Tracer-X (a) accepting manual input regarding abstractions, and (b)
computing \emph{quantitative} information from its serach trees.
We feel these directions are credible when starting from the principle
of complete symbolic exploration.
}

\ignore{
We combine the ideas of both the TRACER and KLEE symbolic execution
systems in a tool called Tracer-X, introducing abstraction learning
into KLEE.  The core feature of abstraction learning is subsumption of
paths whose traversals are deemed to no longer be necessary due to
similarity with already-traversed paths. Experimental results using
our prototype extension with programs from GNU Coreutils 6.10 as well
as three case studies indicate that this is indeed the case for some
of the programs being tested.
}



\clearpage
\bibliography{references}

\clearpage
\appendix

\section{Appendix}
\label{sec:appendix}

In this Appendix section, we present the 
detailed result of our experiments on GNU Coreutils benchmark. 
 Table \ref{tab:coreutils} has 5 major
columns.  Columns 1 and 2 show the \textit{Benchmark} name
and \textit{\#TB} i.e. total number of basic blocks in LLVM IR
respectively. Columns 3 to 5 show results of KLEE, \tracerx{}-D,
and \tracerx{}-R respectively. These columns further split into four
sub-columns each. These sub-columns
are \textit{\#Inst}, \textit{ \#T }, \textit{\#VB},
and \textit{\#err}. The \textit{\#Inst} (in Millions) shows the total
number of LLVM instructions covered during the exploration of
the SET. \textit{ \#T } (in seconds) shows the total amount of
execution time consumed. The timeout we set for this experiment was 1
hour. \textit{\#VB} shows the total number of uniquely visited basic
blocks. \textit{\#err} is the number of error paths traversed
during the execution. 

\footnotesize
\begin{longtable}[c]{|c|c|c|c|c|c|c|c|c|c|c|c|c|c|}
	\caption{The results of Analysis of GNU Coreutils Benchmarks}
	\label{tab:coreutils}\\
	\hline
	\multirow{2}{*}{Benchmark} & \multirow{2}{*}{\#TB} & \multicolumn{4}{c|}{KLEE}        & \multicolumn{4}{c|}{TracerX-D}   & \multicolumn{4}{c|}{TracerX-R}   \\ \cline{3-14} 
	&                       & \#Inst & \#T     & \#VB & \#err & \#Inst & \#T     & \#VB & \#err & \#Inst & \#T     & \#VB & \#err \\ \hline
	\endfirsthead
	\multicolumn{14}{c}%
	{{\bfseries Table \thetable\ continued from previous page}} \\
	\endhead
	basename                   & 51                    & 201.0  & 110.23   & 35   & 0     & 0.5    & 3.92     & 35   & 0     & 1.5    & 15.81    & 35   & 0     \\ \hline
	chroot                     & 20                    & 29.9   & 18.84    & 15   & 0     & 1.7    & 9.08     & 15   & 0     & 3.9    & 19.83    & 15   & 0     \\ \hline
	cksum                      & 52                    & 109.8  & 73.85    & 38   & 0     & 22.9   & 147.24   & 38   & 0     & 34.4   & 194.18   & 38   & 0     \\ \hline
	dirname                    & 30                    & 55.0   & 35.5     & 15   & 0     & 0.7    & 6.21     & 15   & 0     & 3.0    & 17.91    & 15   & 0     \\ \hline
	printenv                   & 33                    & 748.5  & 463.82   & 26   & 0     & 0.9    & 4.03     & 26   & 0     & 1.5    & 10.39    & 26   & 0     \\ \hline
	pwd                        & 70                    & 2.3    & 1.49     & 43   & 0     & 0.6    & 1.73     & 43   & 0     & 0.6    & 2.37     & 43   & 0     \\ \hline
	runcon                     & 75                    & 51.5   & 82.99    & 38   & 0     & 6.4    & 170.76   & 38   & 0     & 39.9   & 1902.58  & 38   & 0     \\ \hline
	sync                       & 11                    & 1.8    & 1.11     & 9    & 0     & 0.2    & 0.88     & 9    & 0     & 0.3    & 1.45     & 9    & 0     \\ \hline
	uptime                     & 62                    & 53.5   & 40.84    & 34   & 0     & 3.0    & 12.14    & 34   & 0     & 6.9    & 30.18    & 34   & 0     \\ \hline
	users                      & 32                    & 46.9   & 37.06    & 19   & 0     & 0.9    & 5.76     & 19   & 0     & 4.9    & 23.54    & 19   & 0     \\ \hline
	echo                       & 66                    & 4344.5 & $\infty$ & 45   & 0     & 0.5    & 3.75     & 45   & 0     & 1.7    & 27.16    & 45   & 0     \\ \hline
	sum                        & 52                    & 4618.3 & $\infty$ & 52   & 0     & 19.4   & 86.07    & 52   & 0     & 73.2   & 360.86   & 52   & 0     \\ \hline
	tee                        & 44                    & 4604.6 & $\infty$ & 44   & 0     & 9.7    & 35.97    & 44   & 0     & 65.5   & 309.17   & 44   & 0     \\ \hline
	pinky                      & 120                   & 1995.1 & $\infty$ & 67   & c     & 40.8   & 234.3    & 67   & 0     & 68.0   & $\infty$ & 67   & c     \\ \hline
	env                        & 28                    & 4610.7 & $\infty$ & 28   & 0     & 22.5   & 65.79    & 28   & 0     & 133.4  & $\infty$ & 28   & 0     \\ \hline
	hostname                   & 15                    & 1233.2 & $\infty$ & 14   & 0     & 154.8  & $\infty$ & 13   & 0     & 52.5   & $\infty$ & 14   & 0     \\ \hline
	unlink                     & 13                    & 523.3  & $\infty$ & 13   & c     & 160.2  & $\infty$ & 13   & 0     & 54.2   & $\infty$ & 13   & 0     \\ \hline
	base64                     & 249                   & 11.7   & $\infty$ & 208  & 0     & 10.6   & $\infty$ & 205  & 0     & 39.4   & $\infty$ & 208  & 0     \\ \hline
	cp                         & 242                   & 17.3   & $\infty$ & 107  & 2     & 106.5  & $\infty$ & 102  & 4     & 46.7   & $\infty$ & 110  & c     \\ \hline
	cat                        & 119                   & 226.9  & $\infty$ & 106  & 1     & 37.1   & $\infty$ & 89   & 9337  & 73.4   & $\infty$ & 106  & 14964 \\ \hline
	chcon                      & 96                    & 398.0  & $\infty$ & 71   & 0     & 7.1    & $\infty$ & 38   & 0     & 33.8   & $\infty$ & 74   & c     \\ \hline
	chgrp                      & 59                    & 73.3   & $\infty$ & 49   & 0     & 71.2   & $\infty$ & 43   & 32    & 57.1   & $\infty$ & 48   & c     \\ \hline
	chmod                      & 123                   & 66.5   & $\infty$ & 91   & 2     & 71.6   & $\infty$ & 87   & 740   & 56.3   & $\infty$ & 90   & c     \\ \hline
	comm                       & 66                    & 861.0  & $\infty$ & 65   & 0     & 170.1  & $\infty$ & 58   & 0     & 94.4   & $\infty$ & 65   & c     \\ \hline
	chown                      & 44                    & 99.3   & $\infty$ & 36   & 0     & 163.1  & $\infty$ & 26   & 0     & 63.9   & $\infty$ & 36   & 0     \\ \hline
	csplit                     & 375                   & 65.2   & $\infty$ & 278  & 0     & 71.3   & $\infty$ & 219  & 0     & 57.9   & $\infty$ & 291  & c     \\ \hline
	cut                        & 201                   & 494.7  & $\infty$ & 188  & 2     & 144.8  & $\infty$ & 144  & 1     & 19.4   & $\infty$ & 170  & 30    \\ \hline
	date                       & 100                   & 23.2   & $\infty$ & 85   & 0     & 9.0    & $\infty$ & 58   & 0     & 33.6   & $\infty$ & 90   & c     \\ \hline
	df                         & 209                   & 293.3  & $\infty$ & 182  & 2     & 104.7  & $\infty$ & 165  & 1713  & 64.3   & $\infty$ & 167  & c     \\ \hline
	du                         & 216                   & 11.5   & $\infty$ & 132  & 0     & 27.9   & $\infty$ & 128  & 0     & 4.5    & $\infty$ & 124  & c     \\ \hline
	expand                     & 95                    & 1369.3 & $\infty$ & 92   & 0     & 59.4   & $\infty$ & 82   & 0     & 100.1  & $\infty$ & 92   & c     \\ \hline
	factor                     & 57                    & 6.8    & $\infty$ & 49   & 0     & 1.3    & $\infty$ & 37   & 0     & 2.6    & $\infty$ & 49   & 0     \\ \hline
	fmt                        & 185                   & 100.9  & $\infty$ & 158  & 0     & 10.6   & $\infty$ & 118  & 0     & 93.7   & $\infty$ & 164  & c     \\ \hline
	fold                       & 91                    & 79.7   & $\infty$ & 79   & 0     & 115.3  & $\infty$ & 55   & 0     & 48.9   & $\infty$ & 85   & 0     \\ \hline
	head                       & 214                   & 165.0  & $\infty$ & 173  & 2     & 74.4   & $\infty$ & 158  & 5     & 34.9   & $\infty$ & 173  & 79    \\ \hline
	hostid                     & 16                    & 433.8  & $\infty$ & 14   & 1     & 154.3  & $\infty$ & 13   & 1     & 55.9   & $\infty$ & 14   & 3     \\ \hline
	id                         & 78                    & 3093.3 & $\infty$ & 67   & c     & 83.1   & $\infty$ & 48   & 0     & 72.6   & $\infty$ & 63   & c     \\ \hline
	join                       & 385                   & 52.2   & $\infty$ & 311  & 0     & 44.5   & $\infty$ & 301  & 0     & 52.6   & $\infty$ & 322  & 0     \\ \hline
	kill                       & 141                   & 21.3   & $\infty$ & 132  & 0     & 63.2   & $\infty$ & 134  & 12    & 1.4    & $\infty$ & 93   & 0     \\ \hline
	link                       & 14                    & 463.6  & $\infty$ & 14   & c     & 159.7  & $\infty$ & 11   & 0     & 87.1   & $\infty$ & 14   & 0     \\ \hline
	ln                         & 176                   & 183.5  & $\infty$ & 134  & 0     & 111.4  & $\infty$ & 145  & 0     & 56.8   & $\infty$ & 142  & 0     \\ \hline
	logname                    & 11                    & 364.1  & $\infty$ & 10   & 0     & 153.5  & $\infty$ & 9    & 0     & 57.1   & $\infty$ & 10   & 0     \\ \hline
	mkdir                      & 28                    & 438.1  & $\infty$ & 28   & 1     & 169.0  & $\infty$ & 21   & 0     & 70.1   & $\infty$ & 27   & c     \\ \hline
	mkfifo                     & 24                    & 535.6  & $\infty$ & 24   & 1     & 173.4  & $\infty$ & 19   & 1594  & 85.1   & $\infty$ & 24   & c     \\ \hline
	mktemp                     & 52                    & 357.9  & $\infty$ & 50   & 0     & 176.3  & $\infty$ & 37   & 0     & 77.7   & $\infty$ & 50   & c     \\ \hline
	mv                         & 108                   & 46.4   & $\infty$ & 84   & 0     & 12.7   & $\infty$ & 74   & 0     & 52.5   & $\infty$ & 84   & 0     \\ \hline
	nice                       & 27                    & 32.2   & $\infty$ & 25   & 0     & 2.9    & $\infty$ & 15   & 0     & 15.4   & $\infty$ & 25   & 0     \\ \hline
	nl                         & 102                   & 484.6  & $\infty$ & 96   & 1     & 164.1  & $\infty$ & 61   & 0     & 93.8   & $\infty$ & 90   & c     \\ \hline
	nohup                      & 40                    & 1546.1 & $\infty$ & 22   & 0     & 154.4  & $\infty$ & 21   & 0     & 75.7   & $\infty$ & 22   & c     \\ \hline
	paste                      & 158                   & 149.9  & $\infty$ & 144  & 0     & 64.7   & $\infty$ & 116  & 0     & 53.5   & $\infty$ & 144  & c     \\ \hline
	ptx                        & 704                   & 3.9    & $\infty$ & 175  & 0     & 19.7   & $\infty$ & 240  & 0     & 20.2   & $\infty$ & 352  & 0     \\ \hline
	readlink                   & 31                    & 278.3  & $\infty$ & 30   & c     & 7.2    & $\infty$ & 30   & 0     & 40.8   & $\infty$ & 30   & 0     \\ \hline
	rm                         & 45                    & 235.0  & $\infty$ & 42   & 0     & 145.6  & $\infty$ & 42   & 0     & 60.6   & $\infty$ & 40   & c     \\ \hline
	rmdir                      & 61                    & 72.8   & $\infty$ & 45   & 0     & 110.7  & $\infty$ & 31   & 0     & 59.3   & $\infty$ & 45   & c     \\ \hline
	seq                        & 106                   & 97.8   & $\infty$ & 99   & 0     & 0.6    & $\infty$ & 54   & c     & 6.7    & $\infty$ & 97   & 0     \\ \hline
	setuidgid                  & 31                    & 350.4  & $\infty$ & 30   & 0     & 103.1  & $\infty$ & 11   & 0     & 75.2   & $\infty$ & 30   & 0     \\ \hline
	shuf                       & 136                   & 9.8    & $\infty$ & 109  & 0     & 10.7   & $\infty$ & 97   & 0     & 4.3    & $\infty$ & 106  & 0     \\ \hline
	shred                      & 296                   & 25.8   & $\infty$ & 151  & 0     & 27.6   & $\infty$ & 141  & 0     & 16.0   & $\infty$ & 151  & c     \\ \hline
	sleep                      & 25                    & 213.4  & $\infty$ & 25   & 0     & 29.7   & $\infty$ & 22   & 0     & 15.1   & $\infty$ & 25   & c     \\ \hline
	sort                       & 689                   & 2.1    & $\infty$ & 271  & 0     & 2.7    & $\infty$ & 358  & 0     & 12.1   & $\infty$ & 454  & 0     \\ \hline
	split                      & 166                   & 68.2   & $\infty$ & 123  & 1     & 26.7   & $\infty$ & 133  & 17458 & 57.0   & $\infty$ & 144  & 162   \\ \hline
	stat                       & 252                   & 135.2  & $\infty$ & 172  & 2     & 24.8   & $\infty$ & 141  & 126   & 57.7   & $\infty$ & 185  & 61    \\ \hline
	stty                       & 434                   & 32.5   & $\infty$ & 273  & 0     & 106.6  & $\infty$ & 241  & 0     & 24.3   & $\infty$ & 259  & 0     \\ \hline
	tac                        & 118                   & 270.5  & $\infty$ & 93   & 1     & 4.0    & $\infty$ & 87   & 0     & 36.3   & $\infty$ & 92   & 0     \\ \hline
	tail                       & 510                   & 69.0   & $\infty$ & 375  & 2     & 2.9    & $\infty$ & 213  & 0     & 51.1   & $\infty$ & 396  & 1     \\ \hline
	touch                      & 130                   & 97.7   & $\infty$ & 105  & 2     & 5.1    & $\infty$ & 97   & 6     & 22.6   & $\infty$ & 105  & 79    \\ \hline
	tr                         & 586                   & 49.0   & $\infty$ & 321  & 0     & 118.5  & $\infty$ & 235  & 0     & 41.1   & $\infty$ & 394  & 0     \\ \hline
	tsort                      & 123                   & 19.1   & $\infty$ & 115  & 0     & 0.9    & $\infty$ & 88   & 0     & 6.3    & $\infty$ & 97   & 0     \\ \hline
	unexpand                   & 116                   & 138.8  & $\infty$ & 112  & 1     & 109.7  & $\infty$ & 82   & 0     & 71.6   & $\infty$ & 112  & c     \\ \hline
	uniq                       & 166                   & 63.0   & $\infty$ & 147  & 0     & 6.9    & $\infty$ & 72   & 0     & 41.4   & $\infty$ & 149  & 0     \\ \hline
	tty                        & 22                    & 80.4   & $\infty$ & 20   & 0     & 101.1  & $\infty$ & 17   & 0     & 60.5   & $\infty$ & 20   & c     \\ \hline
	who                        & 150                   & 234.1  & $\infty$ & 116  & c     & 92.5   & $\infty$ & 112  & 0     & 65.4   & $\infty$ & 116  & 0     \\ \hline
	wc                         & 151                   & 739.6  & $\infty$ & 125  & c     & 127.8  & $\infty$ & 127  & 0     & 50.5   & $\infty$ & 127  & c     \\ \hline
	whoami                     & 11                    & 779.1  & $\infty$ & 11   & 0     & 160.2  & $\infty$ & 10   & 0     & 51.5   & $\infty$ & 11   & 0     \\ \hline
	yes                        & 15                    & 490.9  & $\infty$ & 15   & 0     & 1.5    & $\infty$ & 9    & 0     & 41.6   & $\infty$ & 15   & 0     \\ \hline
\end{longtable}


\end{document}